\newcommand{\wt}[1]{\widetilde{#1}}
\newcommand{\wh}[1]{\widehat{#1}}
\newcommand{\sse}{\subseteq}
\newcommand{\bits}{\{\pm1\}}
\newcommand{\zo}{\{0,1\}}
\newcommand{\bn}{\bits^n}
\newcommand{\isafunc}{:\bn\to\bits}
\newcommand{\isazofunc}{:\zo^n\to\zo}
\newcommand{\E}{\mathbb{E}} 
\newcommand{\NN}{\mathbb{N}}
\newcommand{\CC}{\mathbb{C}}
\theoremstyle{definition}
\newtheorem{definition}{Definition}
\newtheorem{theorem}[definition]{Theorem}
\newtheorem{lemma}[definition]{Lemma}
\newtheorem{claim}[definition]{Claim}
\newtheorem{proposition}[definition]{Proposition}
\newtheorem{fact}[definition]{Fact}
\newtheorem{corollary}[definition]{Corollary}
\newtheorem{conjecture}{Conjecture}
\newtheorem{remark}[definition]{Remark}
\newtheorem{notation}[definition]{Notation}
\crefname{claim}{claim}{claims}
\renewcommand{\abs}[1]{{\left| #1 \right|}}
\newcommand{\abra}[1]{{\left\langle #1 \right\rangle}}
\newcommand{\pbra}[1]{{\left( #1 \right)}}
\newcommand{\cbra}[1]{{\left\{ #1 \right\}}}
\newcommand{\sbra}[1]{{\left[ #1 \right]}}
\newcommand{\ceil}[1]{{\left\lceil #1 \right\rceil}}
\newcommand{\fnorm}[1]{\norm{#1}_F}
\newcommand{\tnorm}[1]
{\norm{#1}_{\mathsf{tr}}}
\DeclareMathOperator*{\tr}{Tr}
\newcommand{\ketbra}[2]{|{#1}\rangle \langle {#2} |}
\newcommand{\proj}[1]{\ketbra{#1}{#1}}
\renewcommand{\abs}[1]{{\left| #1 \right|}}
\newcommand{\ACclass}{\mathsf{AC}} 
\newcommand{\QAC}{\mathsf{QAC}}
\newcommand{\QNC}{\mathsf{QNC}}
\newcommand{\ACZ}{\ACclass^0}
\newcommand{\NC}{\mathsf{NC}}
\newcommand{\NCZ}{\NC^0}
\newcommand{\QNCZ}{\QNC^0}
\newcommand{\QACZ}{\QAC^0}
\newcommand{\CZ}{\textsf{CZ}\xspace}
\newcommand{\calM}{\mathcal{M}}
\newcommand{\calN}{\mathcal{N}}
\newcommand{\calP}{\mathcal{P}}
\newcommand{\calU}{\mathcal{U}}
\newcommand{\calE}{\mathcal{E}}
\newcommand{\calF}{\mathcal{F}}
\newcommand{\calB}{\mathcal{B}}
\newcommand{\C}{\mathbb{C}}
\newcommand{\one}{\mathbbm{1}}
\newcommand{\bx}{\boldsymbol{x}}
\renewcommand{\E}{\operatorname{{\bf E}}}
\renewcommand{\Pr}{\operatorname{{\bf Pr}}}
\newcommand{\Ex}{\mathop{{\bf E}\/}}
\newcommand{\Prx}{\mathop{{\bf Pr}\/}}
\newcommand{\poly}{\text{poly}}
\newcommand{\pauli}{\mathcal{P}}
\newcommand{\nout}{\ell}
\newcommand{\Nout}{L}
\newcommand{\nin}{n}
\newcommand{\Nin}{N}
\newcommand{\ntot}{m}
\newcommand{\Ntot}{M}
\newcommand{\diagram}[1]{{\begin{tikzpicture}[scale=.5,every node/.style={sloped,allow upside down},baseline={([yshift=+0ex]current bounding box.center)}] #1 \end{tikzpicture}}}
\definecolor{tensorblue}{rgb}{0.8,0.8,1}
\definecolor{tensorred}{rgb}{1,0.5,0.5}
\definecolor{tensorpurp}{rgb}{1,0.5,1}
\tikzset{ten/.style={fill=tensorblue}}
\tikzset{tenred/.style={fill=tensorred}}
\tikzset{tengreen/.style={fill=green!50!black!50}}
\tikzset{tenpurp/.style={fill=tensorpurp}}
\tikzset{tengrey/.style={fill=black!20}}
\tikzset{tenorange/.style={fill=orange!30}}
\tikzset{u/.style={fill=blue!20,draw=black}}
\tikzset{w/.style={fill=green!50!black!80,draw=black}}
\newif\ifcomments
\renewcommand{\complement}{c}
\newcommand{\weight}[1]{\mathbf{W}^{#1}}
\newcommand{\choir}{\Phi}
\newcommand{\choirE}{\choir_\calE} 
\newcommand{\chois}{\rho}
\newcommand{\chan}[1]{\calE_{#1}} %
\newcommand{\bellstate}{\mathrm{EPR}}
\newcommand{\EPR}{\bellstate}
\newcommand{\qalg}{\textsc{Quantum-Low-Degree-Algorithm}\xspace}
\newcommand{\ignore}[1]{}
\definecolor{cb-salmon-pink}{RGB}{255, 182, 119}
    \newcommand{\nat}[1]{{\bf \textcolor{violet}{Nat:}} {#1}}
    \newcommand{\snote}[1]{\footnote{{\bf \color{blue}Shivam}: {#1}}}
    \newcommand{\natf}[1]{\footnote{{\bf \color{violet}Nat:} {#1}}}
    \newcommand{\shivam}[1]{{{\bf \color{blue}Shivam}: {#1}}}
    \newcommand{\fran}[1]{\textcolor{red}{[#1]}}
    \newcommand{\hnote}[1]{{\bf \textcolor{magenta}{[HY: #1]}}}
    \newcommand{\nat}[1]{}
    \newcommand{\snote}[1]{}
    \newcommand{\natf}[1]{}
    \newcommand{\shivam}[1]{}
    \newcommand{\fran}[1]{}
    \newcommand{\hnote}[1]{}
\newcommand{\choispace}{\mathfrak{C}_{\nin,\nout}}
\newcommand{\paulis}[1][n]{\mathcal{P}_{#1}}
\newcommand{\Tr}{\mbox{\rm Tr}}
\newcommand{\supp}{\mathrm{supp}}
\newcommand{\Parity}{\mathrm{Parity}}
\newcommand{\Majority}{\mathrm{Majority}}
\newcommand{\unitary}[1]{\calU_{#1}}
\newcommand{\inn}{\mathrm{in}}
\newcommand{\junk}{\mathrm{junk}}
\newcommand{\out}{\mathrm{out}}
\newcommand{\aux}{\mathrm{aux}}
\newcommand{\size}{s}
\renewcommand{\proj}[1]{\ket{#1}\!\!\bra{#1}}
\title{On the Pauli Spectrum of $\mathsf{QAC}^0$}
\author{
Shivam Nadimpalli\thanks{Columbia University. Email: \url{sn2855@columbia.edu}.} 
\and 
Natalie Parham\thanks{Columbia University. Email: \url{natalie@cs.columbia.edu}.}
\and 
Francisca Vasconcelos\thanks{UC Berkeley. Email: \url{francisca@berkeley.edu}.}
\and 
Henry Yuen\thanks{Columbia University. Email: \url{hyuen@cs.columbia.edu}.}
\vspace{1em}}
\begin{document}

\pagenumbering{gobble}
\hypersetup{linkcolor={black}}

\maketitle
\begin{abstract}
The circuit class $\mathsf{QAC}^0$ was introduced by Moore (1999) as a model for constant depth quantum circuits where the gate set includes many-qubit Toffoli gates. Proving lower bounds against such circuits is a longstanding challenge in quantum circuit complexity; in particular, showing that polynomial-size $\mathsf{QAC}^0$ cannot compute the parity function has remained an open question for over 20 years.

In this work, we identify a notion of the \emph{Pauli spectrum} of $\mathsf{QAC}^0$ circuits, which can be viewed as the quantum analogue of the Fourier spectrum of classical $\mathsf{AC}^0$ circuits. We conjecture that the Pauli spectrum of $\mathsf{QAC}^0$ circuits satisfies \emph{low-degree concentration}, in analogy to the famous Linial, Mansour, Nisan (LMN) theorem on the low-degree Fourier concentration of $\mathsf{AC}^0$ circuits. If true, this conjecture immediately implies that polynomial-size $\mathsf{QAC}^0$ circuits cannot compute parity.

We prove this conjecture for 
the class of depth-$d$, polynomial-size $\mathsf{QAC}^0$ circuits with at most $n^{O(1/d)}$ auxiliary qubits. We obtain new circuit lower bounds and learning results as applications:  this class of circuits cannot correctly compute
\begin{itemize}
    \item the $n$-bit parity function on more than $(\frac{1}{2} + 2^{-\Omega(n^{1/d})})$-fraction of inputs, and
    \item the $n$-bit majority function on more than $(1 - \Omega(n^{-1/2}))$-fraction of inputs.
\end{itemize}
Additionally we show that this class of $\mathsf{QAC}^0$ circuits with limited auxiliary qubits can be learned with quasipolynomial sample complexity, giving the first learning result for $\mathsf{QAC}^0$ circuits.

More broadly, our results add evidence that ``Pauli-analytic'' techniques can be a powerful tool in studying quantum circuits.

\end{abstract}

\hypersetup{linkcolor={purple}}

\newpage

\pagenumbering{arabic}

\section{Introduction}
\label{sec:intro}

The Fourier spectrum of a Boolean function provides highly useful information about its complexity. For example, the celebrated result of Linial, Mansour, and Nisan~\cite{LMN:93} shows that polynomial-size $\ACZ$ circuits give rise to functions whose Fourier spectra obey \emph{low-degree concentration}; in other words, they are approximately low-degree polynomials. Since then, Fourier analytic techniques have played an essential role in breakthroughs in 
learning theory~\cite{KushilevitzMansour:93,KKMS:05,eskenazis2022low}, pseudorandomness~\cite{Braverman:09,chattopadhyay2018pseudorandom,chattopadhyay2019pseudorandom}, property testing~\cite{FLNRRS,Blaisstoc09,KMS18}, classical-quantum separations~\cite{raz2022oracle}, and more. 

Are there analogous analytic techniques for studying models of quantum computation? A natural quantum generalization of the Fourier spectrum of a Boolean function is the \emph{Pauli spectrum} of a many-qubit operator. Recall that the set of $n$-qubit Pauli operators, $\calP_n := \cbra{I,X,Y,Z}^{\otimes n}$, forms an orthonormal basis for the space of $2^n\times 2^n$ complex matrices, with respect to the (normalized) Hilbert--Schmidt inner product. 
Consequently, any $n$-qubit operator $A$ can be decomposed as $A = \sum_{P\in\calP_n} \wh{A}(P) \, P$, analogous to the Fourier expansion of a Boolean function.
Our paper is motivated by the following question:
\begin{center}
    \emph{When does the Pauli spectrum reveal useful information \\ about the complexity of a quantum computation?}
\end{center}

Analyzing the Pauli spectrum of quantum operations has been a fruitful endeavor in recent years, leading to structural results about so-called quantum Boolean functions~\cite{montanaro2008quantum,rouze2022quantum}, learning algorithms for quantum dynamics~\cite{huang2022learning,volberg2023noncommutative,slote2023noncommutative}, property testing of quantum operations~\cite{wang2011property,chen2023testing,bao2023nearly}, and classical simulations of noisy quantum circuits~\cite{aharonov2023polynomial}. Although each result uses a slightly different notion of Pauli decomposition, most of them focus on the Pauli spectrum of unitary operators. However, it is unclear how this notion connects with the complexity of the unitary operator. 

\paragraph{An Illustrative Example.} One might have hoped for the following Linial-Mansour-Nisan-style low-degree concentration statement: if a unitary $U$ is computable by some simple circuit (for some notion of ``simple''), then most of its ``Pauli mass'' should concentrate on its low-degree part.\footnote{The \emph{degree} of a Pauli tensor $P\in\paulis$, denoted $|P|$, is the number of qubits on which it acts non-trivially (i.e. the number of non-identity components).} Unfortunately, however, such notions of low-degree Pauli concentration break down for even the simplest unitaries: consider the unitary $U = X^{\otimes n}$, which can be implemented by a single layer of single-qubit gates (see \Cref{fig:potato}). Clearly the Pauli mass of $U$ is concentrated on a single degree-$n$ coefficient, yet this unitary $U$ is computed by an extremely simple circuit. 

\begin{figure}[h]
    \centering
    \begin{quantikz}[row sep = 2mm]
		\qw &\qw & \qw & \qw & \gate X  & \qw & \qw & \qw \\
		\qw &\qw & \qw & \qw & \gate X  & \qw & \qw & \qw \\[-0.25em]
            &&&& \vdots &&&\\
		\qw &\qw & \qw & \qw & \gate X  & \qw & \qw & \qw 
    \end{quantikz}  
    \caption{A simple unitary that does not satisfy low-degree Pauli concentration in the naive sense.}
    \label{fig:potato}
\end{figure}
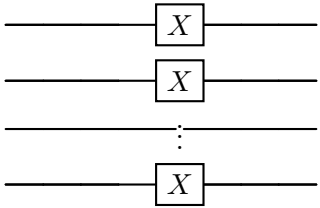

Note, however, that in this example there is an incongruity between the classical and quantum settings. A classical Boolean function $f\isazofunc$ only has one output bit, whereas the output of unitary $U$ is $n$ qubits. This leads us to restrict our focus to quantum circuits where there is a designated ``target'' qubit in the output. 
In the aforementioned example, while unitary $U$ has degree-$n$, the target qubit is not influenced by any other qubits, meaning $U$ should ``morally'' have degree-$1$. This suggests that instead of analyzing the unitary operator corresponding to the circuit, one should analyze the \emph{quantum channel} that comes from applying the circuit and then tracing out all qubits except for the target qubit.

\paragraph{A New Notion of Pauli Spectrum.} 
We introduce a different notion of Pauli decomposition: The Pauli spectrum of a quantum channel $\calE$ mapping $n$ qubits to $\ell$ qubits is the set of Pauli coefficients of its $(n+\ell)$-qubit \emph{Choi representation} $\Phi_{\calE} := (I^{\otimes n} \otimes \calE)(\ketbra{\mathrm{EPR}_n}{\mathrm{EPR}_n})$ which is the channel applied to half of unnormalized $n$-qubit Bell state $\ket{\EPR_n} := \sum_{x \in \{0,1\}^n} \ket{x} \otimes \ket{x}$ (see \Cref{subsec:choi-stuff,subsec:choi-pauli} for formal definitions of Choi matrices and their Pauli spectrum).

We show that this definition of Pauli spectrum connects with computational complexity much more closely. In particular we illustrate the definition's usefulness by studying the Pauli spectrum of $\QACZ$ circuits. These are a model of shallow quantum circuits that, in addition to single-qubit gates, can include wide Toffoli gates acting on arbitrarily many qubits~\cite{moore1999quantum}. This is a quantum analogue of the classical circuit model $\ACZ$, and represents the frontier of quantum circuit complexity: although we know that polynomial-size $\ACZ$ circuits cannot compute parity~\cite{FSS:84,Hastad:86}, proving the same for $\QACZ$ circuits remains a longstanding open problem~\cite{moore1999quantum,fang2003quantum,hoyer2005quantum,bera2007small,pade2020depth,rosenthal2020bounds}. 

\paragraph{Our Results, In a Nutshell.} Our main technical result, at a high level, is that polynomial-size, single-qubit-output $\QACZ$ circuits that use few auxiliary qubits must have Pauli spectrum that is highly concentrated on low-degree coefficients. This is a new structural result about $\QACZ$ circuits (with limited auxiliary qubits) that immediately yields average-case circuit lower bounds: we show that such circuits cannot compute the parity or majority functions, even approximately (see \Cref{subsec:our-results} for detailed theorem statements).

This raises the intriguing question of whether low-degree concentration holds for \emph{all} polynomial-size $\QACZ$ circuits, not just ones with few auxiliary qubits. We conjecture that this is indeed true (see \Cref{conj:qac0-spetral-conc} for a formal statement). This would be directly analogous to the Linial-Mansour-Nisan theorem about the low-degree concentration of $\ACZ$ circuits~\cite{LMN:93}, and would immediately show that parity is not computable by polynomial-size $\QACZ$ circuits, resolving the central open question posed in Moore's 1999 paper introducing the $\QAC$ circuit model in the first place~\cite{moore1999quantum}. 

Finally, we show that low-degree concentration of the Pauli spectrum of quantum channels yields sample-efficient \emph{learning algorithms} for those channels (see \Cref{thm:learning} for a more detailed theorem statement). This also directly corresponds to the learning result of~\cite{LMN:93} who show that low-degree concentrated Boolean functions can be learned with quasipolynomial complexity. Our results directly imply that $\QACZ$ circuits with few auxiliary qubits can be learned using quasipolynomial sample complexity, and if the conjectured low-degree concentration of $\QACZ$ holds, then \emph{all} polynomial-size $\QACZ$ circuits are sample-efficiently learnable. 

\vspace{10pt}

Although we weren't able to prove ``quantum LMN,'' we believe that the conjecture provides tantalizing motivation for studying the analytic structure of $\QACZ$ circuits, and for further investigating this notion of Pauli spectrum more broadly. The analogy with classical Fourier analysis of Boolean functions appears quite strong; it will be interesting to discover how far the analogy goes. 

Before explaining our results in more detail, we give a brief overview of $\QACZ$ circuits.

\subsection{\texorpdfstring{$\QACZ$}{QAC0} circuits}
\label{subsec:background}

The $\QACZ$ circuit model consists of constant-depth quantum circuits with arbitrary single-qubit gates and arbitrary-width Toffoli gates, which implement the unitary transformation 
\[\ket{x_1,\ldots,x_n,b} \mapsto \ket{x_1,\ldots,x_n,b \oplus \bigwedge_i x_i}.\] 
$\QACZ$ and related models were first introduced by Moore~\cite{moore1999quantum} to explore natural quantum analogues of classical circuit classes such as $\NCZ$, $\ACZ$, and $\ACZ[q]$, which are well-studied models of shallow circuits in classical complexity theory. 

Aside from being a natural generalization of a classical circuit model, $\QACZ$ also gives a clean theoretical model with which to study the power of many-qubit operations in quantum computations. Recently there has been increasing motivation for understanding the power of short-depth quantum computations with many-qubit or non-local operations. Some experimental platforms are beginning to realize controllable operations that can affect many qubits at once; examples include analog simulators~\cite{bluvstein2022quantum}, ion traps~\cite{gokhale2021quantum,guo2022implementing}, and superconducting qubit platforms that have mid-circuit measurements~\cite{rudinger2022characterizing}.

\paragraph{Parity versus $\QACZ$.}

Already in his 1999 paper~\cite{moore1999quantum}, Moore posed the question of whether the $n$-bit parity function can be computed in $\QACZ$. Recall that computing parity is equivalent to computing fanout (i.e. the unitary operation
$\ket{b,x_1,\ldots,x_n} \mapsto \ket{b,x_1 \oplus b,\ldots,x_n\oplus b}$) up to conjugation by Hadamard gates~\cite{moore1999quantum}. Consequently if $\QACZ$ circuits could compute parity, then this would imply that $\QACZ$ would be remarkably powerful: among other things, they would be capable of generating GHZ states, computing the $\mathsf{MOD}_q$ function for all $q$, computing the parity function, and performing phase estimation and approximate quantum Fourier transforms~\cite{moore1999quantum,bera2007small}.\looseness=-1

Despite being open for more than twenty years, this question has seen limited progress. Fang et al.~\cite{fang2003quantum} showed that $\QAC$ circuits with sublinear auxiliary qubits cannot compute parity, and more recently, Pad\'e et al.~\cite{pade2020depth} showed that depth-$2$ $\QAC$ circuits with arbitrary auxiliary qubits cannot compute parity. Rosenthal~\cite{rosenthal2020bounds} proved that parity can be approximated exponentially well by $\QACZ$ circuits that have exponentially many auxiliary qubits (it was not known before whether $\QACZ$ circuits of \emph{any} size could approximate parity). Rosenthal also proved that certain classes of $\QACZ$ circuits require exponential size to approximate parity; however, extending these lower bounds to general $\QACZ$ circuits seems challenging. (See \Cref{subsec:related-work} for more details about these prior works.)

Although lower bounds against classical $\ACZ$ are considered one of the great successes of complexity theory~\cite{Yao:85,Hastad:86,Raz87,Smol87}, it is far from clear how to extend those techniques (such as switching lemmas) to the setting of $\QACZ$. Furthermore, it is unclear whether $\QACZ$ lower bounds imply $\ACZ$ lower bounds: we do not know if $\QACZ$ circuits can even simulate classical $\ACZ$ circuits. Even though $\QACZ$ circuits appear quite weak (especially for classical computation), lower bounds have been difficult to obtain.

\paragraph{Going Beyond Lightcones.} 
On the other hand, if we restrict ourselves to constant-depth quantum circuits with only two-qubit gates (known as $\QNCZ$ circuits), it is comparatively much easier to prove limitations. For example, such circuits cannot prepare states with long-range entanglement (like the many-qubit GHZ state or states with topological order)~\cite{eldar2017local,verresen2112efficiently,anshu2023nlts}. At the heart of all $\QNCZ$ lower bound techniques, is the ``lightcone argument''--- the observation that each output qubit can only be affected by a small number of input qubits since there are only a few layers of small gates (see~\Cref{fig:lightcone}). This argument immediately breaks when either the circuit has logarithmic depth, or large many-qubit gates. Thus, any effort to prove lower bounds against more general circuits calls for novel techniques \emph{beyond lightcones}. $\QACZ$ circuits are at the frontier of this boundary and thus an attractive point of attack for developing new circuit lower bound techniques.

\begin{figure}
\centering
\resizebox{!}{3.3cm}{
\begin{tikzpicture}[scale=0.6]
\fill[fill=yellow, opacity=0.2] (5,1.425) -- (-5, 3.5) -- (-5, -0.5);
\draw[dashed] (5,1.425) -- (-5, 3.5);
\draw[dashed] (5,1.425) -- (-5, -0.5);

\node (as) at (0,0) {
\begin{quantikz}[row sep = 1.5em, column sep=2em]
    & \gate[2]{} & \qw & \qw & \qw & \qw & \gate[2]{} & \qw \\
    & \qw & \gate[2]{} & \qw & \qw & \qw & \qw & \qw \\
    & \qw & \qw & \qw & \gate[2]{} & \qw & \qw & \qw \\
    & \qw & \gate[2]{} & \qw & \qw & \qw & \gate[2]{} & \qw \\
    & \qw & \qw & \qw & \qw & \qw & \qw & \qw 
\end{quantikz}            
};
\end{tikzpicture}
}
\caption{A (backwards) lightcone of a qubit.}\label{fig:lightcone}
\end{figure}
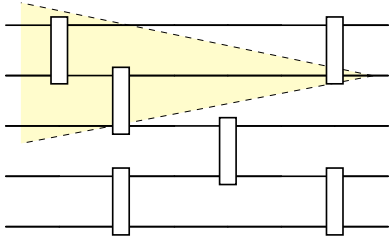

\subsection{Our Results}
\label{subsec:our-results}

\begin{table}[t]
\centering
\setlength{\tabcolsep}{15pt}
\renewcommand{\arraystretch}{1.5}
\begin{tabular}{@{}lll@{}}
\toprule
& \begin{tabular}[t]{@{}l@{}}$\ACZ$ Circuits\\[-0.35em] (Boolean function $f$)\end{tabular}          & \begin{tabular}[t]{@{}l@{}}$\QACZ$ Circuits\\[-0.35em] (Choi representation $\choir$)\end{tabular}   \\ 
\midrule

Fourier Basis           
& \begin{tabular}[t]{@{}l@{}}$\chi_S (x) \in \cbra{\prod_{i\in S} x_i}_{S\sse[n]}$\\[-0.35em] (Parity functions) \end{tabular} 
& \begin{tabular}[t]{@{}l@{}}$P\in\{I,X,Y,Z\}^{\otimes n}$\\[-0.35em] (Pauli tensors) \end{tabular} \\[2em]
Decomposition           
& \begin{tabular}[t]{@{}l@{}}$f(x)=\underset{S\sse[n]}{\sum} \widehat{f}(S)\chi_S (x) $\\[-0.35em] (Fourier decomposition) \end{tabular} 
& \begin{tabular}[t]{@{}l@{}}$\Phi=\underset{P\in\paulis}{\sum} \widehat{\Phi}(P)\cdot P $\\[-0.35em] (Pauli decomposition) \end{tabular} \\[2em]
Spectral Concentration  
&  \begin{tabular}[t]{@{}l@{}}$\weight{>k}[f] \leq \size\cdot 2^{-\Omega\pbra{k^{1/d}}}$\\[-0.35em] (Lemma 7 of \cite{LMN:93}) \end{tabular}                        
&  \begin{tabular}[t]{@{}l@{}}$\weight{>k}[\choir] \leq \size^2\cdot 2^{-\Omega(k^{1/d} -a)}$\\[-0.35em] (\Cref{thm:informal-concentration}) \end{tabular} \\[2em]
Correlation with Parity 
&  \begin{tabular}[t]{@{}l@{}}$\frac{1}{2} + \size\cdot 2^{-\Theta\pbra{n^{1/d}}}$\\[-0.35em] (Implicit in \cite{LMN:93}) \end{tabular}                      
&  \begin{tabular}[t]{@{}l@{}}$\frac{1}{2} + \size \cdot 2^{-\Omega(n^{1/d} - a)}$\\[-0.35em] (\Cref{thm:parity-lb}) \end{tabular} \\[2em]
Learnability            
&  \begin{tabular}[t]{@{}l@{}}quasipolynomial samples\\[-0.35em] (Section 4 of \cite{LMN:93}) \end{tabular}                        
&  \begin{tabular}[t]{@{}l@{}}quasipolynomial samples\\[-0.35em] (\Cref{thm:learning}) \end{tabular} \\[2em]
\bottomrule
\end{tabular}
\caption{Comparison with the work of Linial, Mansour, and Nisan~\cite{LMN:93}; see~\Cref{subsec:technical-overview} for a detailed discussion. In both the classical and the quantum settings, we denote the number of inputs as $n$, number of gates as $\size$, and the depth of the circuit as $d$. 
In the $\QACZ$ learning setting, we assume that $a\leq O(\mathrm{polylog}(n))$.} 
\label{tab:dictionary}
\end{table}

We show that the spectral properties -- with our notion of Pauli spectrum -- of $\QACZ$ circuits with limited auxiliary qubits resemble those of classical $\ACZ$ circuits; this in turn allows us to derive circuit lower bound and learning conclusions that are analogous to those of~\cite{LMN:93} (see~\Cref{tab:dictionary} for a comparison).  
In particular, our main technical result is the following bound on the Pauli coefficients of a $\QACZ$ circuit. For formal definitions of Choi representations, Pauli coefficients, etc, we refer the reader to \Cref{sec:prelims}.

\begin{theorem}[Informal version of \Cref{thm:pauli-concentration}] \label{thm:informal-concentration}
    Suppose a $n$ to $1$-qubit channel $\chan{}$ is computed by a depth-$d$ $\QACZ$ circuit with $a$ auxiliary qubits. Writing $\choir_{\chan{}}$ for the \emph{Choi representation} of $\chan{}$, we have 
    \[\sum_{\substack{P\in \paulis[n+1]} : \\|P| >k} \abs{\wh{\choir_{\chan{}}}(P)}^2 \leq 2^{-\Omega\pbra{k^{1/d}-a}} \]
    where $\{\wh{\choir_{\chan{}}}(P)\}$ is the collection of \emph{Pauli coefficients} of $\choir_{\chan{}}$. 
\end{theorem}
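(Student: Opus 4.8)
The plan is to transport the Linial--Mansour--Nisan strategy -- random restrictions followed by a switching lemma -- into the Pauli-analytic setting, along the dictionary of \Cref{tab:dictionary}. The first step is to reinterpret the quantity to be bounded. By channel--state duality, the Pauli coefficients of the Choi representation are exactly the entries of the Pauli transfer matrix of $\calE$: for a reference-side Pauli $P_R \in \paulis[n]$ and an output Pauli $P_{\out} \in \cbra{I,X,Y,Z}$ one has $\wh{\choir_\calE}(P_R \otimes P_{\out}) \propto \Tr\pbra{P_{\out}\, \calE(P_R^{\mathsf T})} = \Tr\pbra{\calE^\dagger(P_{\out})\, P_R^{\mathsf T}}$, so that $\sum_{P_R}\abs{\wh{\choir_\calE}(P_R \otimes P_{\out})}^2$ is, up to normalization, the Pauli weight of the Heisenberg-evolved single-qubit observable $\calE^\dagger(P_{\out})$. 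Since $\abs{P_R\otimes P_{\out}} = \abs{P_R} + \abs{P_{\out}}$ with $\abs{P_{\out}}\le 1$, the theorem reduces to showing that the Pauli spectrum of $\calE^\dagger(P_{\out}) = U^\dagger\pbra{P_{\out}\otimes I} U$ -- the observable obtained by propagating $P_{\out}$ backwards through the depth-$d$ circuit $U$ on $n+a$ qubits -- is concentrated below degree $k$. The cost of initializing the $a$ auxiliary qubits to $\ket{0}$ and projecting onto them at the end is what ultimately produces the $2^{O(a)}$ factor in the bound.

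Next I would set up quantum random restrictions. Measuring a reference qubit of $\choir_\calE$ in the computational basis and conditioning on the outcome pins the corresponding input of $\calE$ to a uniformly random classical bit; doing this for a random subset that keeps each coordinate alive with probability $p$ is precisely a random restriction of the circuit. The analogue of the classical identity $\E_\rho[\wh{f_\rho}(T)^2] = \sum_{S : S\cap J = T}\wh{f}(S)^2$ shows that the expected high-degree Pauli weight of the restricted Choi state lower-bounds that of the original: writing $J$ for the live set, each Pauli $P$ with $\abs{P} > k$ survives with probability $\Prx_J[\abs{P\cap J} > k']$, which a Chernoff bound makes $\Omega(1)$ so long as $k' \lesssim pk$. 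Hence $\E_\rho\sbra{\weight{>k'}[\choir_{\calE_\rho}]} \gtrsim \weight{>k}[\choir_\calE]$, and it suffices to prove that $\choir_{\calE_\rho}$ has almost all of its Pauli mass at degree at most $k'$ with high probability. A point requiring care is that computational-basis measurement only exposes the $\cbra{I,Z}$ content of a restricted qubit, so to treat $X$- and $Y$-type content symmetrically I would either measure each restricted qubit in a uniformly random Pauli basis or symmetrize via a random single-qubit Clifford twirl on the inputs; both cost only constant factors per restricted coordinate.

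The technical heart is a quantum switching lemma: after a random restriction, the channel $\calE_\rho$ computed by the $\QACZ$ circuit has its Choi representation supported almost entirely on Paulis of degree at most $k'$. The mechanism is that a width-$w$ Toffoli is disabled as soon as one of its control qubits is restricted to $0$, which happens with probability $1 - \pbra{\tfrac{1+p}{2}}^{w}$; wide Toffolis therefore collapse to the identity with overwhelming probability, while surviving ones shrink to roughly $pw$ live controls. The degree bookkeeping explains the exponent $k^{1/d}$: propagating $P_{\out}$ backwards, the only way its Heisenberg image can reach Pauli degree exceeding $k$ is for its support to be amplified through Toffoli gates across the $d$ layers, so some layer must contain a Toffoli of width at least $k^{1/d}$ on the relevant lightcone; such a gate is killed by the restriction except with probability $2^{-\Omega(k^{1/d})}$. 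Making this precise calls for an induction on the $d$ layers tracking how the Pauli support of the evolving observable grows, using the decomposition $\proj{1^m} = 2^{-m}\sum_{S\sse[m]}(-1)^{\abs{S}} Z_S$ to expand each Toffoli and arguing that the high-degree terms it generates are exactly those disabled by the restriction.

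I expect the switching lemma to be the main obstacle. Unlike the classical setting, where H\aa stad's lemma converts a restricted $\ACZ$ circuit into a shallow decision tree, here one must control a genuinely quantum object -- the full operator $U^\dagger\pbra{P_{\out}\otimes I}U$ with its entanglement and non-commuting Pauli structure -- and show that random restrictions simplify it layer by layer with no clean decision-tree normal form to fall back on. Handling the interaction between the entangling single-qubit-gate layers and the Toffoli layers, and ensuring that the $X$/$Y$ content (invisible to computational-basis pinning) is also suppressed, are the delicate points; the auxiliary-qubit budget $a$ enters because each of the $a$ ancillas initialized to $\ket0$ can be left ``un-restricted,'' contributing at most a factor $2$ to the surviving weight and hence the $2^{O(a)}$ loss. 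Assembling the restriction--Pauli relation with the switching lemma and optimizing $p$ and $k'$ against $k$ then yields $\weight{>k}[\choir_\calE]\le 2^{-\Omega(k^{1/d}-a)}$.
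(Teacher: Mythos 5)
Your proposal has a genuine gap at exactly the step you yourself flag as the ``technical heart'': the quantum switching lemma. The mechanism you give for it --- ``a width-$w$ Toffoli is disabled as soon as one of its control qubits is restricted to $0$'' --- is a statement about wires carrying classical bits. In a $\QACZ$ circuit only the first layer of gates sees the (restricted) input qubits; the controls of every Toffoli in layers $2,\dots,d$ are intermediate wires which, after a layer of arbitrary single-qubit unitaries, are in superposition and entangled with the rest of the circuit. Pinning input qubits to computational-basis values does not pin those controls to $\ket{0}$ or $\ket{1}$, so there is no event ``some control equals $0$'' whose probability $1-\pbra{\tfrac{1+p}{2}}^{w}$ you can invoke, and no sense in which the gate conditionally ``collapses to the identity.'' This is precisely the known obstruction to porting H{\aa}stad-style switching lemmas to $\QACZ$ (the paper's introduction points this out), and your layer-by-layer induction has nothing to stand on past layer one. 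A second, related gap: your restriction-to-weight relation $\Ex_{\rho}\sbra{\weight{>k'}[\choir_{\calE_\rho}]} \gtrsim \weight{>k}[\choir_{\calE}]$ does not follow as claimed. Computational-basis pinning \emph{annihilates} (rather than redistributes) the $X/Y$ Pauli content on restricted reference qubits, and your proposed fix of measuring each restricted qubit in a random Pauli basis makes a degree-$m$ component survive only when every one of its restricted coordinates matches the chosen basis --- probability $3^{-m'}$ in the number $m'$ of restricted support coordinates, which is exponentially small for exactly the high-degree Paulis you need to track, not a ``constant factor per coordinate'' that you can absorb.

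It is worth contrasting this with the route the paper actually takes, which sidesteps restrictions entirely using an observation your sketch circles but never uses: a wide $\CZ$ gate is \emph{unconditionally} close to the identity in normalized Frobenius norm, since $\CZ_S = I - I_{\overline{S}}\otimes 2\proj{1^{|S|}}$ gives $\frac{1}{2^n}\fnorm{I - \CZ_S}^2 = 4\cdot 2^{-|S|}$. So one simply deletes every gate of width at least $k^{1/d}$, incurring Choi-representation error $O(\size^2 2^{-k^{1/d}})$ (\Cref{lem:removing_CZgates}), and then a lightcone argument (\Cref{lem:weight_conc_small_gates}) --- essentially your degree-bookkeeping paragraph --- shows the remaining narrow-gate circuit has Pauli degree at most $k$. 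Auxiliary qubits are handled by postselecting the Choi representation, which by Cauchy--Schwarz costs the factor $2^{a}$ (\Cref{prop:conc_change_choi_anc}). In effect the paper replaces your probabilistic claim ``wide gates die under random restrictions'' with the deterministic, averaged-over-inputs claim ``wide $\CZ$ gates are almost the identity,'' which is what makes the argument go through where the switching-lemma route remains an open problem.
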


\Cref{thm:informal-concentration} can be interpreted as saying that most of the ``mass'' of the Pauli decomposition of the Choi representation of $\chan{}$ lies on Pauli tensors with few non-identity components. 
As a consequence of \Cref{thm:informal-concentration}, we obtain circuit lower bounds and learning algorithms for $\QACZ$ circuits.

\begin{theorem}[Informal version of \Cref{QAC-corr-bound}] \label{thm:parity-lb}
    Suppose $Q$ is a $\QAC$ circuit with depth $d = O(1)$ and at most $\frac{1}{2}\cdot  n^{1/d}$ auxiliary qubits. Let $Q(x)\in \zo$ denote the random measurement outcome in the computational basis of a single output qubit of $Q$ on input $\ket{x}$.

    \begin{itemize}
        \item $Q$ cannot {approximate} the $n$-bit parity function $\Parity_n(x) := \sum_{i=1}^n x_i \mod 2$, i.e.
        \[\Prx_{\bx\sim\zo^n}[Q(\bx) = \Parity_n(\bx)] \leq \frac{1}{2} +  2^{-\Omega\pbra{n^{1/d}}}.\]
        \item $Q$ cannot approximate the $n$-bit majority function  $\mathrm{\Majority}_n(x) := \mathbf{1}\cbra{\sum_{i=1}^n x_i \geq n/2}$, i.e.
        \[\Prx_{\bx\sim\zo^n}\sbra{Q(\bx) = \Majority_n(\bx)} \leq 1 - \Omega\pbra{n^{-1/2}}.\]
    \end{itemize}
\end{theorem}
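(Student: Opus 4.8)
The plan is to set up a ``Pauli--Fourier dictionary'' that converts the success probability of $Q$ into a statement about the Pauli coefficients of $\choir_{\calE}$, and then feed this into \Cref{thm:informal-concentration}. Let $\calE$ be the $n$-to-$1$-qubit channel computed by $Q$, and define the real-valued function $g\colon\zo^n\to[-1,1]$ by $g(x):=\tr\pbra{Z\,\calE(\proj{x})}=\E[(-1)^{Q(x)}]$, so that for any $h\colon\zo^n\to\zo$ we have $\Prx_{\bx}[Q(\bx)=h(\bx)]=\frac12+\frac12\,\E_{\bx}[(-1)^{h(\bx)}g(\bx)]$. The key identity is that the Choi representation lets us rewrite the output statistic as $g(x)=\tr\sbra{(\proj{x}\otimes Z)\,\choir_{\calE}}$; expanding $\choir_\calE=\sum_P \wh{\choir_\calE}(P)\,P$ in the Pauli basis and using that $\tr(Z\,P_{\out})\neq 0$ only for $P_{\out}=Z$ while $\bra{x}P_{\inn}\ket{x}\neq0$ only for diagonal $P_{\inn}\in\cbra{I,Z}^{\otimes n}$, I obtain that the Fourier coefficients of $g$ are exactly $\wh{g}(S)=2\,\wh{\choir_\calE}(Z^S\otimes Z)$, where $Z^S$ denotes $Z$ on the coordinates of $S$ and $I$ elsewhere. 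Since $\abs{Z^S\otimes Z}=\abs S+1$, this gives the crucial transfer $\sum_{\abs S>k}\wh{g}(S)^2\le 4\sum_{\abs P>k}\abs{\wh{\choir_\calE}(P)}^2$, so the Pauli concentration of $\choir_\calE$ becomes low-degree Fourier concentration of $g$.

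For parity, observe that $(-1)^{\Parity_n(x)}=\chi_{[n]}(x)$, so the correlation collapses to a single top-degree coefficient: $\E_{\bx}[(-1)^{\Parity_n(\bx)}g(\bx)]=\wh g([n])=2\,\wh{\choir_\calE}(Z^{\otimes(n+1)})$. Taking $k=n$ in \Cref{thm:informal-concentration} bounds this one coefficient by $\abs{\wh{\choir_\calE}(Z^{\otimes(n+1)})}^2\le 2^{-\Omega(n^{1/d}-a)}$, and since $a\le\frac12 n^{1/d}$ we have $n^{1/d}-a\ge\frac12 n^{1/d}$, giving $\abs{\E_{\bx}[(-1)^{\Parity_n(\bx)}g(\bx)]}\le 2^{-\Omega(n^{1/d})}$ and hence the claimed parity bound. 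This part is essentially a one-line reduction once the dictionary is in place.

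For majority no single coefficient suffices, so I would split the correlation at a degree threshold $k$. Writing $\mathrm{maj}=(-1)^{\Majority_n}$ and applying Cauchy--Schwarz to each part together with $\norm g_2\le 1$ gives $\E_{\bx}[\mathrm{maj}(\bx)g(\bx)]\le\sqrt{\weight{\le k}[\Majority_n]}+\norm{g^{>k}}_2\le 1-\tfrac12\weight{>k}[\Majority_n]+\norm{g^{>k}}_2$, where $g^{>k}$ is the part of $g$ above degree $k$ and I used $\weight{\le k}[\Majority_n]=1-\weight{>k}[\Majority_n]$ with $\sqrt{1-t}\le 1-t/2$. The transfer above controls $\norm{g^{>k}}_2\le 2\cdot 2^{-\Omega(k^{1/d}-a)/2}$, while the classical Fourier analysis of majority supplies $\weight{>k}[\Majority_n]=\Theta(1/\sqrt k)$ for $1\ll k\ll n$ (its level-$j$ weights decay like $j^{-3/2}$, consistent with total influence $\Theta(\sqrt n)$). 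The delicate point --- and the main obstacle --- is choosing $k$ so that both terms cooperate: the concentration error is only meaningful once $k^{1/d}\gtrsim a$, which forces $k=\Theta(n)$; fortunately at a linear threshold $k\approx n/2^d$ (taken slightly larger, to push the error to $o(1/\sqrt n)$) the majority tail is still $\weight{>k}[\Majority_n]=\Theta(1/\sqrt n)$, so that $\Prx_{\bx}[Q(\bx)=\Majority_n(\bx)]=\frac12+\frac12\E_{\bx}[\mathrm{maj}(\bx)g(\bx)]\le 1-\Omega(1/\sqrt n)$. Balancing the exponentially small concentration error against the polynomially small majority tail at a linear-in-$n$ degree threshold is exactly where the argument is most delicate.
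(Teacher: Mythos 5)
Your proposal is correct, and it reaches the theorem by a route that is genuinely different in its middle machinery from the paper's. The paper (in \Cref{prop:fourier-to-pauli}, \Cref{prop:bool-correlation-from-pauli-conc}, and \Cref{QAC-corr-bound}) works at the level of operators: it forms the ``classical version'' $\cchan$ of the circuit's channel by measuring before and after, writes the error probability as a squared Frobenius distance between the Choi representations $\choir_{\cchan}$ and $\choir_{\chan{f}}$, and then uses Parseval together with a reverse triangle inequality on Pauli weights, plus the fact that the measurement channel never increases Pauli weight at any level. You instead collapse everything onto the scalar bias function $g(x)=\tr\pbra{Z\,\calE(\proj{x})}$ and prove the dictionary $\wh{g}(S)=2\,\wh{\choir_{\calE}}(Z_S\otimes Z)$ (which is valid, and is the same correspondence underlying \Cref{prop:fourier-to-pauli}); after that your argument is purely classical Fourier analysis --- Plancherel for parity, and the Cauchy--Schwarz low/high-degree split for majority, exactly as in the classical LMN correlation bounds. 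Your choice of threshold $k$ slightly above $n/2^d$ is the right one: it is forced by needing $k^{1/d}-a=\Omega(n^{1/d})$ when $a\le\tfrac12 n^{1/d}$, and at any linear threshold $k=cn$ with constant $c<1$ one still has $\weight{>k}[\Majority_n]=\Theta(1/\sqrt{n})$, so the exponentially small concentration error is negligible against the polynomial majority tail.

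What each approach buys: yours is more elementary and self-contained --- no measurement channel, no operator-norm manipulations --- and it transparently reduces the quantum statement to the classical one. The paper's route, because it lower-bounds the error by $\Ex_x[w_x]\ge\Ex_x[w_x^2]$ and hence by a \emph{squared} distance, actually yields a quantitatively stronger majority bound than what you obtain: \Cref{prop:bool-correlation-from-pauli-conc} gives disagreement at least $\tfrac12\sqrt{\weight{>k-1}[f]}$ up to the concentration error, i.e.\ $1-\Omega(n^{-1/4})$ for majority, whereas your correlation split gives $1-\Omega(n^{-1/2})$. Both suffice for the theorem as stated (which only claims $1-\Omega(1/\sqrt{n})$), and both degrade identically for parity, where the two methods coincide in strength. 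One cosmetic caveat: like the informal statements you cite, you suppress the circuit-size factor $\size$; in the formal chain (\Cref{thm:pauli-concentration}) it appears as $\size^2$ inside the tail bound, and for the stated conclusions one needs $\size=\mathrm{poly}(n)$ so that $\size\cdot 2^{-\Omega(n^{1/d})}=2^{-\Omega(n^{1/d})}$.
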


We point out that our $\QACZ$ lower bounds in \Cref{thm:parity-lb} are \emph{average-case}: the circuits fail to approximate parity or majority. The only previously known average-case lower bounds for parity were shown by Rosenthal \cite{rosenthal2020bounds}. Notably, he showed an average-case bound against depth-2 $\QAC$ circuits and a size lower bound of $\Omega(n/d)$ for depth $d$ circuits. For a more detailed comparison between our lower bounds and previously established $\QACZ$ lower bounds, see \Cref{subsec:related-work}.

\vspace{5pt}

As a consequence of \Cref{thm:informal-concentration}, we also obtain an algorithm for learning $\QACZ$ circuits:

\begin{theorem}[Informal version of \Cref{thm:learnqac0}]
\label{thm:learning}
   Let $\calE$ be a channel computed by a depth-$d$ $\QACZ$ circuit on $n$ input qubits with $\mathrm{polylog}(n)$ auxiliary qubits. Then for $\delta>0$ and $\epsilon>\frac{1}{\text{poly}(n)}$, it is possible to learn a channel $\wt{\calE}$ satisfying 
   \begin{align}
       \sum_{P\in \paulis[n+1]} \abs{\wh{\Phi_{\calE}}(P)-\wh{\Phi_{\wt{\calE}}}(P)}^2 \leq \epsilon,
   \end{align} 
with probability $1-\delta$ using $n^{\mathrm{polylog} \left( n \right)}\log\left(1/\delta\right)$ copies of the Choi state $\frac{1}{N}\Phi_{\calE}$. Here, $\choirE$ and $\choir_{\wt{\calE}}$ are the Choi representations of $\calE$ and $\wt{\calE}$ respectively.
In the special case where $\calE$ computes a Boolean function $f: \zo^\nin \to \zo$, the learned channel $\wt{\calE}$ corresponds to a probabilistic function $g$ such that $\Pr_x[f(x) \neq g(x)] \leq O(\sqrt{\epsilon})$.
\end{theorem}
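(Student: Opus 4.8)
The plan is to follow the blueprint of the classical Linial--Mansour--Nisan learning algorithm, with the low-degree Pauli concentration of \Cref{thm:informal-concentration} playing the role of the classical low-degree Fourier concentration. At a high level, the algorithm (i) fixes a degree cutoff $k$ so that the high-degree Pauli mass of $\choirE$ is negligible, (ii) empirically estimates every low-degree Pauli coefficient $\wh{\choirE}(P)$ with $|P|\le k$ from copies of the Choi state, and (iii) outputs the operator $\sum_{|P|\le k}\wt\alpha(P)\,P$ assembled from these estimates (suitably projected onto a valid Choi matrix). First I would fix the cutoff: since $\calE$ uses $a=\mathrm{polylog}(n)$ auxiliary qubits, \Cref{thm:informal-concentration} gives $\weight{>k}[\choirE]\le 2^{-\Omega(k^{1/d}-a)}$, so choosing $k=\mathrm{polylog}(n)$ (using that $d=O(1)$ and $\epsilon>1/\mathrm{poly}(n)$) makes this tail at most $\epsilon/2$. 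The number of Pauli tensors on $n+1$ qubits of degree at most $k$ is $M=\sum_{j\le k}\binom{n+1}{j}3^{j}\le n^{O(k)}=n^{\mathrm{polylog}(n)}$, which is where the quasipolynomial bound originates.

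The heart of the argument is the estimation step. The key observation is that the Pauli coefficient of the Choi representation is, up to the fixed normalization of \Cref{subsec:choi-pauli}, simply the expectation value of the Pauli observable $P$ on the Choi state: $\wh{\choirE}(P)=\tfrac12\,\tr\!\big(P\cdot\tfrac1N\choirE\big)$. Hence each coefficient can be estimated by measuring the $(n+1)$-qubit Pauli $P$ on copies of $\tfrac1N\choirE$; concretely I would use classical shadows with single-qubit random-Clifford (Pauli-basis) measurements, for which the single-shot estimator of a weight-$|P|$ Pauli has variance $O(3^{|P|})\le O(3^k)$. A median-of-means analysis then estimates all $M$ coefficients simultaneously to additive accuracy $\gamma$ with probability $1-\delta$ using $O\!\big(3^{k}\,\gamma^{-2}\log(M/\delta)\big)$ copies. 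To meet the $L_2$ target I choose $\gamma^2\le \epsilon/(2M)$, so that the estimation error on the low-degree part is $\sum_{|P|\le k}|\wh{\choirE}(P)-\wt\alpha(P)|^2\le M\gamma^2\le \epsilon/2$; combined with the $\epsilon/2$ tail this yields the claimed $\sum_P|\wh{\choirE}(P)-\wh{\Phi_{\wt\calE}}(P)|^2\le\epsilon$. Substituting $k=\mathrm{polylog}(n)$, $M=n^{\mathrm{polylog}(n)}$, and $1/\epsilon=\mathrm{poly}(n)$, the total copy complexity collapses to $n^{\mathrm{polylog}(n)}\log(1/\delta)$, as required.

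Two points remain. To produce an actual channel rather than a formal operator, I would project the hypothesis $\sum_{|P|\le k}\wt\alpha(P)P$ onto the convex set of valid Choi matrices (PSD operators with the trace-preserving partial-trace constraint) in Hilbert--Schmidt norm; since the true $\choirE$ already lies in this set and projection onto a convex set is contractive, this only decreases the $L_2$ distance, so the guarantee is preserved by Parseval. For the Boolean special case, when $\calE$ computes $f:\zo^n\to\zo$, measuring the output qubit in the computational basis defines a probabilistic function $g$, and the $L_2$ closeness of the Pauli spectra controls the expected squared discrepancy between the output distributions of $\wt\calE$ and $\calE$; an averaging argument then converts this into $\Prx_x[f(x)\ne g(x)]\le O(\sqrt\epsilon)$, exactly mirroring the classical deduction.

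The main obstacle I anticipate is controlling the interplay between the degree cutoff and the per-coefficient accuracy so that the total copy complexity stays quasipolynomial. The variance of the shadow estimator for a weight-$k$ Pauli grows like $3^{k}$, and there are $n^{O(k)}$ coefficients to pin down to accuracy $\sim\sqrt{\epsilon/M}$; both factors are quasipolynomial only because $k=\mathrm{polylog}(n)$, which in turn relies crucially on the $a=\mathrm{polylog}(n)$ bound on auxiliary qubits entering the exponent of \Cref{thm:informal-concentration}. Getting these parameters to balance --- rather than any single estimation primitive --- is the delicate part, together with verifying that the normalization constant and the median-of-means confidence bounds compose cleanly across all $M$ simultaneous Pauli estimates.
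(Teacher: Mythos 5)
Your proposal is correct and takes essentially the same route as the paper: a polylog degree cutoff from the concentration theorem, estimation of all low-degree Pauli coefficients of the Choi state via classical shadows with random Pauli measurements (exactly the paper's black-box use of classical shadow tomography in \Cref{lem:learn-coeff-choi}), per-coefficient accuracy $\sqrt{\epsilon/(2|\calF_k|)}$, a rounding step that projects onto the convex set of CPTP Choi matrices using contractivity of convex projection, and the Boolean case handled by passing to the induced probabilistic function. Your parameter balancing matches the paper's proof of \Cref{thm:learnqac0} via \Cref{thm:learnlowdeg}, so there is nothing substantive to add.
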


We further show, in \Cref{appendix:lin-comb-channels}, that all of the above results extend to quantum channels $\calE$ that are convex combinations of the channels $\{\calE_i\}$ implemented by $\QACZ$ circuits: $\calE(\rho) = \sum_i \alpha_i \calE_i(\rho)$. Note that it is not necessarily true that $\calE$ can be implemented by a $\QACZ$ circuit.
\subsection{Technical Overview}
\label{subsec:technical-overview}

The starting point for our results is the seminal work of Linial, Mansour, and Nisan~\cite{LMN:93} on the Fourier spectrum of constant-depth classical circuits; we refer the interested reader to the monographs~\cite{ODonnell2014,garban2014noise} for further background on the subject.

\subsubsection{The Work of Linial, Mansour, and Nisan (LMN)} \label{sec:LMN}

Suppose $f\isazofunc$ is a Boolean function computed by a depth-$d$ $\ACZ$ circuit of size $\size$. Recall that $f$---viewed as a function $f\isafunc$---can be expressed as a real multilinear polynomial 
\[f = \sum_{S\sse[n]} \wh{f}(S)\chi_S \qquad\text{where}\qquad \chi_S(x) := \prod_{i\in S} x_i\]
which can be viewed as a Fourier expansion of $f$. The main technical result of \cite{LMN:93}, namely Lemma~7, is the following bound on the Fourier spectrum of $f$:
\begin{equation} \label{eq:lmn}
    \sum_{|S| > k} \wh{f}(S)^2 \leq {\size\cdot 2^{-\Theta\pbra{k^{1/d}}}} \qquad\text{for all}~k\in[n],
\end{equation}
which, intuitively, states that most ``Fourier mass'' of $f$ lies on its low-degree part. Although this bound has been subsequently sharpened~\cite{boppana1997average,haastad2001slight,impagliazzo2012satisfiability,haastad2014correlation,tal2017tight}, for the sake of simplicity, this work will focus solely on the \cite{LMN:93} bound given by \Cref{eq:lmn}. 

The primary technical ingredient used by \cite{LMN:93} to prove the above bound is H{\aa}stad's celebrated switching lemma~\cite{Hastad:86}. 
As an immediate consequence of~\Cref{eq:lmn}, one can obtain correlation bounds against parity as well as a learning algorithm (under the uniform distribution) for $\ACZ$ circuits; we sketch both of these results below, see Section~4.5 of~\cite{ODonnell2014} for a thorough exposition.

\paragraph{Correlation Bounds Against Parity.} 

Suppose $f\isafunc$ is computed by a depth-$d$ size-$\size$ $\ACZ$ circuit. Recall that $\Parity_n\isafunc$ is given by $\Parity_n := \chi_{[n]}$. As an immediate consequence of \Cref{eq:lmn}, 
\[\abs{\wh{f}\pbra{[n]}} \leq \size\cdot 2^{-\Theta\pbra{n^{1/d}}}.\]
Furthermore, using Proposition~1.9 from~\cite{ODonnell2014}, is straightforward to check that 
\[\Prx_{\bx\sim\bn}\sbra{f(\bx) = \Parity_n(\bx)} \leq \frac{1}{2} + \size\cdot 2^{-\Theta\pbra{n^{1/d}}}.\]   
So, if $d = O(1)$ and $\size = \mathrm{poly}(n)$, then $f$ can agree with the parity function on at most $\frac{1}{2} + \exp(-\Theta(n))$ fraction of inputs. Since guessing $\zo$ uniformly at random gives correlation $1/2$ with the parity function, this result implies that, with constant-depth circuits, one cannot do much better than random guessing.

\paragraph{Learning $\ACZ$ Circuits.}

Taking $k = \Theta\pbra{\log^d\pbra{\frac{\size}{\epsilon}}}$, \Cref{eq:lmn} implies that $
\sum_{|S| > k} \wh{f}(S)^2 \leq \epsilon$.  In other words, all but $\epsilon$ of $f$'s ``Fourier weight'' lies on its low-degree coefficients. Based on this observation, \cite{LMN:93} suggest a natural learning algorithm for $\ACZ$ : Simply estimate all the low-degree Fourier coefficients $\{\wh{f}(S)\}_{|S|\leq k}$ to sufficiently high accuracy, and---writing $\widetilde{f}(S)$ for the estimate of $\wh{f}(S)$---output the $\bits$-valued function
\begin{align} \label{eqn:sng_function}
    \mathrm{sign}\pbra{\sum_{|S|\leq k}\widetilde{f}(S)\chi_S}.
\end{align}
This gives a quasipolynomial time algorithm for learning $\ACZ$ circuits. In fact, it is known that, under a strong enough cryptographic assumption, quasipolynomial time is \emph{required} to learn $\ACZ$ circuits even with query access~\cite{kharitonov1993cryptographic}.\footnote{Note that the \cite{LMN:93} learning algorithm only requires sample access to the function $f$, which is weaker than the class of algorithms with query access considered by Kharitonov~\cite{kharitonov1993cryptographic}.} 

\subsubsection{Spectral Concentration of \texorpdfstring{$\QACZ$}{QAC0} Circuits}
\label{sec:overview-spectral-concentration}

Inspired by the classical importance of low-degree Fourier concentration, i.e. \Cref{eq:lmn}, one might hope for an analogous notion of low-degree Pauli concentration in the quantum setting. As we saw earlier in the introduction via the example $U = X^{\otimes n}$, unitaries implemented by $\QACZ$ circuits do \emph{not} have Pauli weight that is low-degree concentrated. Instead, we turn to analyzing the Pauli spectrum of $\QACZ$ \emph{channels}.

\paragraph{The Pauli Decomposition of Channels.} 
We define the \emph{Pauli spectrum of channel $\calE$} as that of its Choi representation $\choirE$:\footnote{Recall that $\Phi_{\calE} = (I^{\otimes n} \otimes \calE)(\ketbra{\mathrm{EPR}_n}{\mathrm{EPR}_n})$ where $\ket{\EPR_n} = \sum_{x \in \{0,1\}^n} \ket{x} \otimes \ket{x}$ (see \Cref{subsec:choi-stuff})}
\begin{align*}
    \choirE = \sum_{P}\wh{\choirE}(P)P.
\end{align*}

To highlight that the Pauli spectrum of channels generalizes the classical Fourier spectrum of Boolean functions, 
note that when the channel computes a Boolean function $f :\zo^\nin \to \zo$, the Pauli spectrum of the Choi representation is closely related to the Fourier expansion of~$f$:
\begin{align}\label{eq:fourier-to-choi-intro}
    \choir_f = \frac{1}{2} I^{\otimes (\nin+1) } + \frac{1}{2}\sum_{S\subseteq [\nin]} \wh{f}(S) Z_S \otimes Z~.
\end{align}
Here, $Z_S$ denotes $\otimes_{i =1}^{\nin} Z^{\one\{i \in S\}}$. 

To further motivate our notion of Pauli spectrum, we return to our example $U = X^{\otimes n}$. Consider the channel $\chan{U}(\rho) = \tr_{[n-1]} (U \rho U^\dagger)$ that applies $U$ to the input state and traces out all but the last qubit. 
As a check, it is readily verified that
\[\Phi_{U} = I^{\otimes \nin-1} \otimes \sum_{y, y'\in\zo} \ketbra{y}{y'}\otimes Z\ketbra{y}{y'}Z.\]
Thus $\chan{U}$ is ``low-degree'' as originally hoped for.

\paragraph{The Pauli Spectrum of \texorpdfstring{$\QACZ$}{QAC0} Channels.} 
Returning to quantum circuits, suppose $\chan{}$ is implemented by a depth-$d$ $\QACZ$ circuit acting on $\nin$ input qubits and $a$ auxiliary qubits. Writing $\size$ for the number of Toffoli gates acting in the circuit, this work proves a bound on the Pauli spectrum of $\choir_{\chan{}}$. Namely, for each $k\in[\nin+1]$, we show that
\begin{equation} \label{eq:our-lmn1}
     \sum_{|P|>k} |\choir_{\chan{}}(P)|^2 
     \leq O\pbra{\size^2 2^{-k^{1/d} +a}}.
\end{equation}
Note the resemblance between \Cref{eq:our-lmn1} and \Cref{eq:lmn} obtained by \cite{LMN:93}. 
We now describe the proof of \Cref{eq:our-lmn1}. 

For simplicity, we first describe the case when $U$ is implemented by a circuit \emph{without} any auxiliary qubits, i.e. when $a = 0$. In this case, the proof proceeds in two steps:
\begin{enumerate}
    \item We first establish that if a depth-$d$ $\QACZ$ circuit does not have any Toffoli gates of width at least $k^{1/d}$, then it has no Pauli weight above level $k$; and 
    \item Next, we show that deleting such ``wide'' Toffoli gates does not noticeably affect the action of the circuit.
\end{enumerate}
A lengthy, albeit ultimately straightforward, calculation using standard analytic tools then establishes~\Cref{eq:our-lmn1} for the case when $a = 0$. We note that the proof of the first item above relies on a lightcone-type argument.

For the more general case where the circuit implementing $U$ has $a$ auxiliary qubits, we consider two cases, corresponding to \emph{clean} auxiliary qubits (i.e. when the $a$ qubits must start in the $\ket{0^a}$ state) and \emph{dirty} auxiliary qubits (i.e. when there is no guarantee on the setting of the state of the $a$ qubits, but regardless the circuit has the desired behavior).
\footnote{Perhaps surprisingly, dirty auxiliary states are a resource that can allow one to reduce circuit depth~\cite{morimae2014hardness,baked-potato}.} 
With clean auxiliary qubits, we can view the $a$ auxiliary qubits as a part of the input to the circuit that we enforce to be set to $\ket{0^a}$ by postselecting the Choi representation; this results in the $2^a$ blow-up in \Cref{eq:our-lmn1}.
In the dirty setting, however, we are able to incur no blowup; we defer discussion of this to \Cref{subsec:postselecting}.

\paragraph{New Circuit Lower Bounds.} As an immediate consequence of our spectral concentration bound on $\QACZ$, we obtain correlation bounds against parity and majority functions. This follows~by:
\begin{enumerate}
    \item First, relating the classical Fourier expansion of these functions to the Pauli expansion of the Choi representations of channels implementing those functions (as in \Cref{eq:fourier-to-choi-intro}); and
    \item Then, showing that they cannot be approximated by $\QACZ$ with a small number of auxiliary qubits thanks to the spectral concentration as discussed above. 
\end{enumerate}

\subsubsection{Learning \texorpdfstring{$\QACZ$}{QAC0} Circuits}
\label{subsubsec:learning-overview}

Our main learning result is an algorithm for learning channels with Pauli weight that is low-degree concentrated. 
Combining this with our low-degree concentration bound for $\QACZ$ channels (\Cref{eq:our-lmn1})  immediately provides a learning algorithm for $\QACZ$ channels (with limited auxiliary qubits). 
Specifically, we show that quantum channels from $\nin$ to $1$ qubits that are implemented with a $\QACZ$ circuit and $o(\mathrm{poly}\log(n))$ auxiliary qubits can be learned\footnote{By ``learning a channel'' we mean learning an approximation (in Frobenius distance) of its Choi representation.} using quasipolynomial, i.e. $2^{O(\mathrm{poly}\log \nin)}$, samples 
. In comparison, naive tomography would require $2^{O(n)}$ samples. 

\paragraph{A Quantum Low-Degree Algorithm.} Our algorithm is inspired by---and closely follows the structure of---the classical low-degree algorithm introduced by \cite{LMN:93}. 

The first step of the learning algorithm is to estimate all low-degree Pauli coefficients (i.e. all $P\in\calP_n$ for $|P|\leq \mathrm{poly}\log(n)$) to sufficiently high accuracy. We consider two different learning models:
\begin{itemize}
    \item \textbf{Choi state samples:} In the first model we are given copies of the Choi state $\chois_\calE = \frac{\choirE}{\tr(\choirE)}$. We apply Classical Shadow Tomography~\cite{Huang2020} to learn the Pauli coefficients.
    \item \textbf{Measurement queries:} In this model, the learner is allowed to query an input state $\rho$ and observable $O$ and is given the measurement outcome of $\calE(\rho)$ with respect to $O$. For this setting, we perform direct tomography on the quantum channel $\calE$ for specially chosen input states $\rho$ and measurement observables $O$. This model was concurrently explored in the context of arbitary quantum processes, under the name ``Quantum Process Statistical Queries (QPSQ)'' model \cite{wadhwa2024learning}. 
\end{itemize}
While our approaches for both settings achieve similar sample complexity, the latter has the benefit of being more feasible (from an experimental implementation standpoint) than the former. Furthermore, the first approach can be thought of as analogous to learning from random labeled examples, and the second can be thought of as learning from query access to the function. 

\paragraph{``Rounding'' to CPTP maps.} After obtaining estimates $\wt{\choir}(P)$ for each of the Pauli coefficients $\wh{\choir}(P)$, the final step is to round  
$\widetilde{\Phi}_{\calE}=\sum_{P} \widetilde{\Phi}_{\calE}(P) P$
to a Choi representation of a valid quantum channel--- that is, a completely-positive trace-preserving (CPTP) map. 
Since the set of all CPTP Choi representations is a convex set, we show that there exists a projection onto this set that will only reduce the distance of our learned state to the true Choi representation. However, it remains open whether there exists an algorithm to implement an exact rounding procedure in quasipolynomial time.
Note that this final step is analogous to the classical low-degree algorithm's use of the sign function, as in \Cref{eqn:sng_function}.

\subsection{Related Work}
\label{subsec:related-work}

\begin{table}[t]
\centering
\setlength{\tabcolsep}{10pt}
\renewcommand{\arraystretch}{1.5}
\begin{tabular}{@{}lllll@{}}
\toprule
          & Hard Function & Depth ($d$)  &  Restrictions & Guarantee    \\ \midrule
    \cite{fang2003quantum}      & Parity$_n$ & $O(1)$      & $o(n)$ auxbits                & Worst Case   \\
    \cite{pade2020depth}   & Parity$_n$   & $2$       & None             & Worst Case   \\
    \cite{rosenthal2020bounds}   & Parity$_n$   & $2$       & None             & Average Case   \\
    \cite{rosenthal2020bounds}   & Parity$_n$   & $O(1)$       & $O(n/d)$ gates             & Average Case   \\
\Cref{thm:parity-lb} & Parity$_n$ & $O(1)$    & $\tfrac{1}{2} n^{1/d}$  auxbits          & Average Case \\ 
\Cref{thm:parity-lb} & Majority$_n$ & $O(1)$    & $\tfrac{1}{2} n^{1/d}$  auxbits          & Average Case\\ \bottomrule
\end{tabular}
\caption{Lower bounds against $\QAC$ circuits. Here, ``auxbits'' refers to auxiliary qubits.}
\label{tab:summary}
\end{table}

\paragraph{Previous work on Pauli analysis.}
The original proposal for ``quantum analysis of Boolean functions" came from~\cite{montanaro2008quantum}, who proposed the study of Pauli decompositions (i.e. ``Pauli analysis") of \textit{Hermitian unitaries}. The key intuition is that these unitaries possess $\pm 1$ eigenvalues, which can be interpreted as the outputs of a classical Boolean function. 
Recently, \cite{rouze2022quantum} extended seminal results from classical analysis of Boolean functions to this quantum setting. Furthermore, \cite{chen2023testing} extended these Pauli analysis techniques to the setting of quantum \textit{non-Hermitian unitaries} for applications to quantum junta testing and learning.  
Unfortunately, however, as illustrated earlier in this section, the Pauli spectrum of a unitary $U$ does not cleanly connect with the complexity of implementing $U$. In this work, we instead look to the Pauli decomposition of \emph{channels}. 

Recently,~\cite{bao2023nearly} also look to Pauli analysis of quantum channels for junta property testing, extending the work of ~\cite{chen2023testing} to unitary channels.  They, too, propose Fourier analysis of superoperators in the Choi representation, though they utilize a different Fourier basis than is used in this work. Bao and Yao decompose channels into their \emph{Kraus representation} with Pauli Kraus operators. While this is a natural choice of basis for junta channels, it is not compatible with analyzing more general channels, or capturing circuit complexity. Notably, their analysis is limited to $n$ to $n$-qubit channels. In this work, we also consider Pauli analysis of superoperators. However, we instead focus solely on the \textit{Choi representation}, defining our Fourier spectrum in terms of the linear expansion of the Choi matrix itself into Pauli operators. By defining our Fourier basis in this way, we are able to analyze channels with an arbitrary number of output qubits. As discussed earlier in the section, this provides a definition of Pauli spectrum that connects more closely with computational complexity and allows us to prove spectral concentration, average-case lower bounds, and learning results for single-output-qubit $\QACZ$ circuits.


\paragraph{Previous work on $\QACZ$ lower bounds.} Since Moore's paper~\cite{moore1999quantum} that originally defined the model, there has only been a smattering of lower bound results on $\QAC$. We summarize known lower bounds against $\QACZ$ circuits below and in \Cref{tab:summary}. We then compare them with our lower bound results.

Fang, et al.~\cite{fang2003quantum} established the first lower bounds on the $\QACZ$ model; in particular they proved that a depth-$d$ $\QAC$ circuit cleanly computes the $n$-bit parity function with $a$ auxiliary qubits, then $d \geq 2\log (n/(a+1))$. Here, ``cleanly'' means that the auxiliary qubits have to start and end in the zero state. 

The key to their lower bound proof is a beautiful lemma (Lemma 4.2 of~\cite{fang2003quantum}): for all depth-$d$ $\QAC$ circuits, there exists a subset $S$ of $(a+1) 2^{d/2}$ input qubits and a state $\ket{\psi_S}$ for that subset $S$, such that no matter what the other input qubits are set to, the output and auxiliary qubits result in the zero state. This immediately implies a lower bound for $\QAC$ circuits that cleanly compute the parity function: First, the clean computation property implies that without loss of generality the subset $S$ is supported on non-auxiliary qubits. Second, if $d < 2 \log (n/(a+1))$, then there exists a non-auxiliary input qubit $i$ that is not fixed by $\ket{\psi_S}$, but the output qubit should depend on the state of the $i$'th qubit -- except the output is already fixed to zero, a contradiction. 

This lower bound is nontrivial as long as the number of auxiliary qubits is sublinear (i.e. $a = o(n)$), whereas our lower bound on the parity function can only handle up to $\sim n^{1/d}$ auxiliary qubits. On the other hand, the lower bound of~\cite{fang2003quantum} appears to be tailored to the setting where the circuit has to compute parity both exactly and cleanly. For a circuit that computes parity exactly (i.e. on all input strings), the clean computation property is without loss of generality because one can always save the output and then uncompute. When the circuit only computes parity approximately (e.g. on $\frac{1}{2} + \epsilon$ fraction of inputs), the clean computation property becomes an additional assumption. 

Furthermore, the technique of~\cite{fang2003quantum} does not obviously extend to obtain average-case lower bounds: although there may be a fixing $\ket{\psi_S}$ of $(a+1)2^{d/2}$ input qubits that force the output qubit to be zero, such a fixing occurs with probability at most $2^{-(a+1)2^{d/2}}$ under the uniform distribution on the $n$ input qubits -- note that this is \emph{exponentially} small in $a$ and \emph{doubly-exponentially} small in $d$. This directly implies that a depth-$d$ $\QAC$ circuit with $a$ auxiliary qubits cannot compute more than $1 - 2^{-(a+1)2^{d/2}}$ fraction of inputs. When $a=\omega(\log n)$ this fraction is extremely close to $1$. By comparison our average case lower bound shows that $\QAC$ circuits with limited auxiliary qubits cannot compute parity on more than $\frac{1}{2} + 2^{-\Omega(n^{1/d})}$ fraction of inputs. 

We note\footnote{We thank an anonymous reviewer for pointing this out to us.} that the techniques of~\cite{fang2003quantum} also yield a lower bound on cleanly computing the majority function, as fixing a sublinear number of input bits is not enough to fix the majority function. However, for the same reasons as mentioned in the previous paragraph, it is unclear whether this argument can be extended to prove an average-case lower bound. 

Later, Pad\'e et al.~\cite{pade2020depth} proved that no depth-$2$ quantum circuit (with \emph{any} number of auxiliary qubits) can cleanly compute parity in the worst case. They prove this by carefully analyzing the structure of states that can be computed by depth-$2$ $\QAC$ circuits. Similarly it is unclear whether these techniques can be extended to the non-clean or approximate computation setting.

Rosenthal~\cite{rosenthal2020bounds} proved that any average-case lower bound on $\QAC$ circuits (approximately) computing parity must use a bound on the number of auxiliary qubits; once there are \emph{exponentially many} auxiliary qubits, then there is a depth-$7$ $\QAC$ circuit approximately computing parity. Furthermore, Rosenthal proved the following average-case lower bounds:
\begin{enumerate}
    \item A depth-$d$ $\QAC$ circuit needs at least $\Omega(n/d)$ multiqubit Toffoli gates in order to achieve a $\frac{1}{2} + \exp(-o(n/d))$ approximation of parity, regardless of the number of auxiliary qubits. 

    \item Depth-$2$ $\QAC$ circuits, with any number of auxiliary qubits, cannot achieve $\frac{1}{2} + \exp(-o(n))$ approximation of parity, even non-cleanly. This proves an average-case version of the lower bound of~\cite{pade2020depth}. 

    \item A particular restricted subclass of $\QAC$ circuits (of which his depth-$7$ construction is an example) requires exponential size to compute parity, even approximately.    
\end{enumerate}
These are the first average-case lower bound results for $\QAC$ that we are aware of; however, they apply to restricted classes of $\QAC$ circuits and notably do not take into account the number of auxiliary qubits. As mentioned earlier, any general (average-case) lower bound on $\QAC$ circuits computing parity (for depths $7$ and greater) must depend on the number of auxiliary qubits. 

More recently, Slote \cite{slote2024parity} initiated the study of the closely related circuit class that is $\QNCZ$ circuits followed by $\ACZ$ post-processing, denoted $\ACZ \circ \QNCZ$. Slote conjectures that polynomial-sized $\ACZ \circ \QNCZ$ can not approximate parity, and shows that this is indeed the case when either the $\QNCZ$ circuit has no auxiliary qubits, or when the $\ACZ$ circuit has linear size. Perhaps surprisingly, the explicit connection between $\ACZ \circ \QNCZ$ and $\QACZ$ is unclear: while $\QACZ$ circuits can certainly implement $\QNCZ$ circuits, it is unknown whether they can implement $\ACZ$ --- it is, as far as we know, possible that $\QACZ$ is incomparable with both $\ACZ$ and $\ACZ \circ \QNCZ$. Nevertheless, for both $\QACZ$ and $\ACZ \circ \QNCZ$, many existing techniques (the lightcone argument) fail for similar reasons. Slote's approach utilizes Fourier analysis of Boolean functions and draws connections to nonlocal games.

\paragraph{Related work on quantum learning.} 
Efficient learning of quantum dynamics is a long-standing challenge in the field. 
Techniques such as quantum process tomography \cite{mohensi2008qpt}, which aim to fully characterize arbitrary quantum channels, require exponentially many data samples to guarantee a small error in the learned channel, for all possible channels. 

One way of achieving sample-efficient quantum channel learning algorithms is performing full tomography on \textit{specific classes} of quantum channels. By focusing on a specific class, rather than all possible quantum channels, there often exists nice structure which can be leveraged to reduce the number of data samples required to fully characterize channels in the class.
For example, \cite{bao2023nearly} showed that $n$-qubit to $n$-qubit quantum $k$-junta channels (acting non-trivially on at most $k$ out of $n$ qubits) can be learned to error $\epsilon$, with high probability, via $O(4^k/\epsilon^2)$ samples. In our learning result, we focus on quantum channels with an arbitrary number of output qubits, and with ``low-degree" Choi representations\footnote{We formally define the notion of a ``low-degree" Choi representation in \Cref{subsec:learning}.}. We show that a $k$-degree channel, involving $\ntot=\nin+\nout$ total input and output qubits, can be learned to error $\epsilon$, with high probability, via $\widetilde{O}((3\ntot)^k/(4^\nout \epsilon))$ samples. Our result is incomparable to \cite{bao2023nearly} since an $n$-qubit to $n$-qubit $k$-junta channel does not satisfy our notion of low-degree concentration. To our knowledge, this is the first work to analyze and offer a learning algorithm specific to channels with low-degree Choi representations. Furthermore, through our concentration result, we establish that $\QACZ$ circuits mapping to a single-qubit output ($\nout=1$) lie in this circuit class, resulting in the first quasipolynomial learning algorithm for single-qubit-output $\QACZ$.

An alternative approach for achieving sample-efficient learning algorithms is performing \textit{partial} tomography on arbitrary quantum channels. For example, rather than full process tomography of a channel $\calE$, \cite{huang2022learning} consider the task of learning the function $f(O_i,\rho)=\Tr(O_i\calE(\rho))$ for a class of $M$ observables $\{O_i\}_{i=1}^M$ and input state $\rho$. For an arbitrary quantum channel $\calE$ and bounded-degree observables of spectral norm $\|O_i\|\leq 1$, they prove that $2^{O(\log(1/\epsilon)\log(n))}$ samples are sufficient to learn the function for all observables to error $\epsilon$ with high probability. 
At a high level, our result and that of \cite{huang2022learning} both establish and leverage Fourier concentration (i.e.~low-degree approximation) to obtain efficient learning algorithms for quantum channels. 
However, our results operate in different settings. Namely,  our work learns a low-degree approximation of the channel's Choi representation, whereas theirs learns low-degree approximations of the channel's Heisenberg-evolved observables $O_i^*=\calE^\dag(O_i)$, where $f(O_i,\rho)=\Tr(O_i^*\rho)$. \cite{huang2022learning} show that under a locally-flat input distribution, the Heisenberg-evolved observables of general channels are well approximated by low-degree observables. While this enables efficient learning of \textit{any} quantum channel, restriction to locally-flat input distributions implies that, for quantum channels encoding classical Boolean functions, measurement expectations will be biased towards inputs which are not in the computational basis and, thus, uninformative. Our work instead obtains a sample-efficient learning result for the specific class of Choi representations of single-output-qubit $\QACZ$ circuits, with average-case guarantees according to the uniform distribution over computational basis states. 
To obtain this result, we prove low-degree concentration of $\QACZ$ Choi representations. This concentration result can further be shown to imply concentration of the channel's Heisenberg-evolved observables  
and, thus, could potentially be leveraged by the \cite{huang2022learning} procedure to offer a learning guarantee for single-qubit-output $\QACZ$ channels without restriction to locally-flat input distributions. It is an interesting direction of future research to formally relate the two works.

Finally, sample-efficient quantum channel 
learning algorithms can also be achieved by leveraging \textit{quantum-enhanced} experiments. \cite{chen22} proved exponential separations between learning algorithms with external quantum memory and those without. Building upon this, \cite{caro2023learning} recently demonstrated that full characterization of an unknown quantum channel's Pauli transfer matrix requires exponentially many channel queries in the case of classical processing and memories, but only polynomial samples in the case of quantum processing and memory. 
In this work, however, we do not consider quantum-enhanced experiments. Instead, we demonstrate that there exists a non-quantum-enhanced quasipolynomial learning algorithm for approximate characterization of the full Choi representation of single-output $\QACZ$ channels.

\subsection{Discussion}
\label{subsec:discussion}

We believe that much remains to be discovered about the analytic properties of $\QACZ$ circuits. We list some natural concrete (and not-so-concrete) questions for future work below:

\paragraph{Improved Spectral Concentration?}

Arguably the most natural open question is to improve the dependence on the number of auxiliary qubits in \Cref{thm:informal-concentration} so as to get a lower bound against $\QACZ$ circuits with polynomially many auxiliary qubits. In fact, we conjecture the following improved spectral bound:

\begin{conjecture}[Spectral concentration for $\QACZ$]\label{conj:qac0-spetral-conc}
    Suppose $\chan{}$ is an $\nin$ to $1$-qubit quantum channel that is implemented by a depth-$d$ $\QACZ$ circuit on $n$ input qubits and $\mathrm{poly}(n)$ auxiliary qubits with $\size$ Toffoli gates. Then for all $k \in [\nin+1]$, we have
    \[\sum_{\substack{P\in \paulis[n+1]} : \\|P| >k} \abs{\wh{\choir_{\chan{}}}(P)}^2 \leq \mathrm{poly}(s) \cdot 2^{-\Omega\pbra{k^{1/d}}} \]    
\end{conjecture}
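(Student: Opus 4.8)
The plan is to isolate and remove the single source of the $2^{a}$ loss in \Cref{thm:informal-concentration}. The proof of that theorem can be reorganized into two parts: (i) a lightcone argument showing that a depth-$d$ circuit with no Toffoli gate of width exceeding $k^{1/d}$ has no Pauli weight above level $k$, and (ii) a bound on the error incurred by deleting the wide Toffoli gates, which in the clean setting proceeds by postselecting the auxiliary register onto $\ket{0^{a}}$. Part (i) already costs nothing in $a$: the backward lightcone of the single output qubit reaches at most $(k^{1/d})^{d}=k$ wires \emph{in total}, so at most $k$ of them are genuine input qubits, and hence the surviving degree of $\choir_{\chan{}}$ on the $n+1$ input/output qubits is at most $k+1$ irrespective of how many auxiliary wires lie in the cone. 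The entire $2^{a}$ blowup therefore enters through part (ii): a wide Toffoli is near-trivial on a uniformly random \emph{input}, but its controls may be auxiliary wires holding states that the earlier part of the circuit has driven close to $\ket{1}$, and bounding this worst case over auxiliary states is exactly what forces the postselection of probability $2^{-a}$.

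Accordingly, the plan is to replace the crude ``delete-and-bound'' step with a genuine quantum analogue of H{\aa}stad's switching lemma applied to the \emph{input qubits only}. First I would define an input-only random restriction on the Choi representation---keeping each of the $n$ input qubits ``alive'' independently with probability $p$ and projecting the EPR half of each killed qubit onto a random computational-basis state---while leaving the $a$ auxiliary wires untouched, so that no postselection on the auxiliary register ever occurs. Next I would prove a switching lemma asserting that, after such a restriction, the channel on the surviving input qubits collapses (with failure probability $(Cpw)^{t}$ for width parameter $w$ and decision-depth parameter $t$) to an object of bounded effective width, reducing the depth by one; the crucial content is that a wide Toffoli is, after restriction, either narrowed to few surviving controls or forced to a constant because a killed control was set to $0$, \emph{uniformly over the quantum state of the auxiliary register}. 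Iterating this $d$ times collapses the circuit to a low-degree channel, and a standard LMN-style averaging argument (the quantum analogue of Lemma~7 of \cite{LMN:93}, relating $\sum_{|P|>k}|\wh{\choir_{\chan{}}}(P)|^{2}$ to the probability that the restricted channel still has high degree) would then yield the bound $\poly(\size)\cdot 2^{-\Omega(k^{1/d})}$ with no dependence on $a$.

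The hard part will be the switching lemma itself. Classically, H{\aa}stad's lemma leans on the DNF/decision-tree structure of $\ACZ$, neither of which has a clean counterpart for a quantum channel: a wide Toffoli entangles its wires, the auxiliary register stores intermediate quantum information that depends on the input \emph{in superposition}, and so restricting input qubits does not cleanly ``fix'' the auxiliary state in the way a classical restriction fixes a subformula. Controlling the simplification error uniformly over all auxiliary states---rather than postselecting onto $\ket{0^{a}}$---is precisely the step the current lightcone argument cannot perform, and it is where I expect essentially all the difficulty to reside. A natural first milestone would be the $d=1$ case, where the channel is a single layer of at most $\size$ Toffoli gates and the auxiliary controls can be analyzed directly; establishing an input-only switching lemma there with a bound independent of $a$ would already constitute substantive progress toward the full conjecture.
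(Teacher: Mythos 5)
You have not produced a proof, and neither does the paper: the statement you are addressing is \Cref{conj:qac0-spetral-conc}, which the paper explicitly poses as an \emph{open conjecture}. What the paper actually proves (\Cref{thm:pauli-concentration}, via \Cref{thm:anc_free_concentration} and \Cref{prop:conc_change_choi_anc_general}) is the weaker bound $\weight{>k}[\choir_{C,\psi}] \leq O(\size^2 2^{-k^{1/d}})\cdot 2^{a}\fnorm{\psi}^2$, and the authors state in \Cref{subsec:discussion} that removing the $2^{a}$ dependence is the main open problem. So the right standard to judge your proposal against is whether it closes that gap, and it does not.

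Your diagnosis of the existing argument is accurate and worth crediting: the lightcone step (\Cref{lem:weight_conc_small_gates}) is indeed stated for $\choir_{C,\psi}$ and is already independent of $a$, while the $2^{a}$ factor enters entirely through the restriction step, where fixing $a$ input wires to the auxiliary state $\psi$ forces a Cauchy--Schwarz over all Paulis supported on the $\aux$ register (\Cref{prop:conc_change_choi_anc_general}); equivalently, the gate-deletion bound of \Cref{lem:removing_CZgates} is an average-case statement over the full $(n+a)$-qubit input space, and postselecting the auxiliary register can amplify the deletion error by $2^{a}$. However, the replacement you propose---an input-only random restriction together with a quantum switching lemma that simplifies wide Toffoli gates \emph{uniformly over the state of the auxiliary register}---is precisely the missing mathematics, and you supply neither a precise formulation nor a proof of it. Concretely: (1) you never define what ``the restricted channel collapses to an object of bounded effective width'' means when the auxiliary register carries input-dependent quantum information in superposition, and it is not clear such a statement is even well-posed, since a restriction of the input halves of the EPR pairs does not fix the auxiliary state the way a classical restriction fixes a subformula---a point you yourself concede; (2) the claimed failure probability $(Cpw)^{t}$ is asserted by analogy with H{\aa}stad, with no candidate argument; and (3) the ``standard LMN-style averaging argument'' relating $\weight{>k}[\choir_{\chan{}}]$ to the degree of randomly restricted Choi representations is also only asserted---the classical Lemma~7 machinery uses properties of Fourier coefficients under restriction that would need to be re-derived for Choi representations, where (per \Cref{fact:weight-bounds}) even total Pauli weight behaves differently from the Boolean case. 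What you have written is a plausible research program, with an honest assessment of where the difficulty lies (your suggested $d=1$ milestone is sensible), but every step that would distinguish it from the paper's existing proof is left unproven, so the conjecture remains exactly as open as the paper leaves it.
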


In particular, we expect no dependence on the number of auxiliary qubits in our spectral bound.
Note that \Cref{conj:qac0-spetral-conc} would immediately imply an average-case lower bound for parity as well as a lower bound for majority against $\QACZ$ circuits with polynomially many auxiliary qubits, and extend the guarantees of our learning algorithm to this broader class of circuits. This would also match the classical bound on the Fourier spectrum of $\ACZ$ circuits obtained by~\cite{LMN:93}.

\paragraph{Improved Learning Algorithms?}

Another natural direction is to improve the runtime of our learning algorithm (see ~\Cref{thm:learning}): while we obtain quasipolynomial sample complexity, we do not provide an explicit algorithm for the final Choi representation rounding step. We conjecture that there exists a quasipolynomial time algorithm implementing an exact rounding procedure, which would also achieve a quasipolynomial runtime for the procedure. Recall that the runtime of the \cite{LMN:93} learning algorithm is (under a strong enough cryptographic assumption) known to be optimal~\cite{kharitonov1993cryptographic}.

\paragraph{Connections to State Synthesis Problems?}

Recent work of Rosenthal~\cite{rosenthal2020bounds} relates the problem of computing parity to various state synthesis problems. Could analytic methods as employed in this paper be used to prove state (or unitary) synthesis lower bounds?

\paragraph{Connections to Pseudorandomness?}

Classically, circuit lower bounds have led to unconditional constructions of pseudorandom generators~\cite{nisan1994hardness}. One could ambitiously hope for unconditional constructions of pseudorandom states against classes of shallow quantum circuits via circuit lower bounds.

\paragraph{An Emerging Analogy?} This work adds to an emerging analogy between Fourier analysis in the classical setting of Boolean functions and ``Pauli analysis'' in the quantum setting of unitary operators or more generally quantum channels~\cite{montanaro2008quantum,chen2023testing,rouze2022quantum,bao2023nearly,volberg2023noncommutative}. Given the tremendous success of Fourier analysis in classical complexity theory, we suspect that much remains to be discovered about the Pauli spectrum of quantum operations.

\section*{Acknowledgements}
We thank Rocco Servedio for sagacious feedback. We thank Joe Slote for discussions on connections between quantum circuits and Fourier analysis. We thank Daniel Grier, and Gregory Rosenthal for helpful conversations. N.P. thanks Sergey Bravyi and Chinmay Nirkhe for thoughtful discussions. F.V. thanks Hsin-Yuan Huang for informative discussions on learning. We thank anonymous reviewers for their helpful feedback. We thank Mauricio Soler for helpful feedback and discussion. We thank Chirag Wadhwa for pointing out an issue with our majority bound in a previous version of this paper, amongst other helpful comments. S.N.~is supported by NSF grants IIS-1838154, CCF-2106429, CCF-2211238, CCF-1763970, and CCF-2107187. F.V. is supported by NSF grant DGE-2146752 and the Vannevar Bush Faculty Fellowship Program grant N00014-21-1-2941. N.P. and H.Y. are supported by AFOSR award FA9550-21-1-0040, NSF CAREER award CCF-2144219, and the Sloan Foundation. This work was partially completed while the authors were visiting the Simons Institute for the Theory of Computing. 

\section{Preliminaries}
\label{sec:prelims}

We assume familiarity with elementary quantum computing and quantum information theory, and refer the interested reader to \cite{Nielsen2010,Wilde2017} for more background. We will sometimes use tensor network diagrams as expository aids; we refer the unfamiliar but interested reader to~\cite{zeph-videos,bridgeman2017hand} for an introduction. 

For $n \geq 1$, we will write $N = 2^n$ and $[n] := \{1,\ldots,n\}$. We denote the Bell state (or EPR pair) on $2n$ qubits by 
\[\ket{\bellstate_n} := \sum_{x\in\zo^n}\ket{x}\otimes\ket{x}.\]
Note in particular that $\ket{\bellstate_n}$ is \emph{not} a normalized state.

 We will write $I_N$ or $I^{\otimes n}$ to denote the $N\times N$ identity matrix; when $N$ is clear from context, we will simply write $I$ instead. We will write $\calM_{N}$ to denote the set of linear operators from $\C^N$ to $\C^N$ and denote by $\calU_{N}$ the set of $N$-dimensional unitary operators, i.e. 
\[\calU_{N} := \cbra{ U \in \calM_{N} : UU^\dagger = U^\dagger U = I_N}.\]

\begin{definition} \label{def:partial-trace}
Given a linear operator $M\in\calM_N$ and $S\sse[n]$, we define the operator $\Tr_S(U)$ obtained by \emph{tracing out $S$} to be 
\[\Tr_S(M) = \sum_{k \in \{0, 1\}^S} (I_{\overline{S}} \otimes \bra{k}) M (I_{\overline{S}} \otimes \ket{k})\]
where $\overline{S} := n\setminus S$.
\end{definition}

In the above definition, we write $\ket{k}$ for $k\in \zo^S$ to be the $|S|$ qubit state in the computational basis corresponding to the bit-string $k$. Note that \Cref{def:partial-trace} aligns with the fact that the trace of a matrix $M$ is given by 
\[\Tr(M) = \sum_{k \in \{0, 1\}^n} \bra{k} M  \ket{k}.\]

\subsection{The Pauli Decomposition}
\label{subsec:pauli-decomp}

Our notation and terminology will frequently follow \cite{ODonnell2014}. We will view $\calM_N$ as a complex inner-product space equipped with the Hilbert--Schmidt inner product: For $A,B\in\calM_N$, we have 
\[\abra{A,B} := \Tr(A^\dagger B).\]
Recall that the Hilbert--Schmidt inner product induces the Hilbert--Schmidt or Frobenius norm, which is given by 
\[\|A\|_F^2 := \abra{A,A} = \sum_{i,j = 1}^{N} |A_{i,j}|^2.\]
We will frequently make use of the fact that the Frobenius norm is invariant under multiplication by unitaries, i.e. 
$\|A\|_F = \|UA\|_F$ {for all unitary operators} $U$.

It is a standard fact that the set of $2\times 2$ Pauli operators, given by 
\begin{align*}
    I := \begin{pmatrix}
        1 & 0 \\ 0 & 1
    \end{pmatrix} &&
    X := \begin{pmatrix} 0 & 1 \\ 1 & 0 \end{pmatrix} &&
    Y := \begin{pmatrix}
        0 & -i \\ i & 0
    \end{pmatrix} &&
    Z := \begin{pmatrix}
        1 & 0 \\ 0 & -1
    \end{pmatrix}
\end{align*}
forms an orthonormal basis for $\calM_N$ with respect to the Hilbert--Schmidt inner product. By taking (normalized) $n$-fold tensors of the Pauli operators one obtains an orthonormal basis for $\calM_N$ where $N = 2^n$. In particular, we have that 
\begin{equation} \label{eq:pauli-tensor-def}
    \paulis[n] := \cbra{I, X, Y, Z}^{\otimes n}
    \quad
\end{equation}
forms an orthonormal basis for $\calM_N$ with respect to the Hilbert--Schmidt inner product. It follows that every $M\in\calM_N$ has a unique representation as 
\[A := \sum_{P\in\calP_n}\wh{A}(P)P \qquad\text{where}\qquad \wh{A}(P) := \frac{1}{N}\abra{A,P}.\]
We will call $\wh{A}(P)$ the \emph{Pauli coefficient} of $A$ on $P$ and will refer to the collection of Pauli coefficients $\{\wh{A}(P)\}_P$ as the \emph{Pauli spectrum} of $A$. 

\begin{notation}
    Given a matrix $M\in\calM_N$, we will write 
    \[\|\wh{M}\|_2^2 := \sum_{P\in\calP_n} |\wh{M}(P)|^2.\]
\end{notation}

It is easy to see that \emph{Parseval's formula} holds for the Pauli decomposition: Given $A \in \calM_N$, we have
\begin{align}
    \frac{1}{N}\|A\|_F^2 = \frac{1}{N}\abra{A,A} = \sum_{P\in\calP_n} |\wh{A}(P)|^2 = \|\wh{A}\|_2^2. \label{eq:qu_parseval_plancharel}
\end{align}
In the case that $U \in \calU_N$ is a unitary, then $\frac{1}{N}\|U\|_F^2 = \|\wh{U}\|_2^2 = 1$. More generally, we have \emph{Plancherel's formula}:
\begin{align}
\frac{1}{N} \abra{A,B} = \sum_{P\in\calP_n} \wh{A}(P)^\ast\cdot\wh{B}(P).\label{eq:plancherel}
\end{align}

For a single-qubit Pauli operator $Q \in \{I,X,Y,Z\}$, we write 
\[
Q_i := I^{\otimes (i-1)} \otimes Q \otimes I^{\otimes (n - i)},\]
and more generally for a subset $S\sse[n]$ and a single-qubit Pauli operator $Q\in\{I,X,Y,Z\}$, we define 
\begin{equation} \label{eq:pauli-subset-notation}
    Q_S := \bigotimes_{i=1}^n Q^{\one\cbra{i\in S}}
\end{equation}
with the convention that $Q^0 \equiv I$.

For an $n$-qubit Pauli operator $P = Q_1 \otimes \dots \otimes Q_n \in\calP_n$ as in \Cref{eq:pauli-tensor-def}, we define 
\[\supp(P) := \{i\in[n] : Q_i \neq I\} \qquad
\text{and}\qquad 
|P| := |\supp(P)|.
\]
We will call $|P|$ the \emph{degree} of $P$; note that it is the number of qubits that $P$ acts non-trivially on. 
Finally, we borrow the classical notion of \emph{spectral weight}, a well-studied quantity from Fourier analysis of Boolean functions:

\begin{definition} \label{def:pauli-weight}
For $M\in \calM_N$, we define the \emph{Pauli weight of $M$ at level $k$} as
\[\weight{=k}[M] := \sum_{\substack{P\in \paulis[n]\\ : |P| = k}} |\wh{M}(P)|^2.\]
We similarly define $\weight{\leq k}[M]$, $\weight{> k}[M]$, and the like.
\end{definition}

The Fourier spectrum and spectral weight distribution of a Boolean function are intimately connected to various combinatorial as well as complexity-theoretic aspects thereof (e.g.~edge boundaries, learnability, and noise stability); we refer the interested reader to the monographs \cite{ODonnell2014,garban2014noise} for further background.

Note that $\sum_{P} |\wh{M}(P)|^2$ is not in general equal to $1$, as it is in the Boolean function analogue. For the special case where $U$ is unitary, the sum is indeed equal to $1$.

\subsection{Quantum Channels and the Choi Representation}
\label{subsec:choi-stuff}

In this paper, we use the Pauli decomposition to analyze quantum \emph{channels}. Recall that a \emph{quantum channel} $\calE$ from $\nin$ to $\nout$ qubits is a completely-positive and trace-preserving (CPTP) linear map from $\calM_{\Nin}$ to $\calM_{\Nout}$ --- throughout this paper we maintain the convention that $\nin$ denotes the number of input qubits, $\nout$ is the number of output qubits, $\ntot = \nin+\nout$ and $\Nin = 2^\nin, \Nout = 2^\nout, \Ntot = 2^\ntot$. As clearly delineated in ~\cite{watrous2018theory}, there are various representations of linear maps --- this paper will make use of the \emph{Choi representation}.

\begin{definition}[Choi representation, Choi state] \label{def:choi-rep}
    Given a quantum channel $\calE$ mapping $\nin$ qubits to $\nout$ qubits, we define its \emph{Choi representation}, written $\Phi_{\calE} \in \calM_{\Ntot}$, as follows:
    \[\Phi_{\calE} := {(I^{\otimes \nin}\otimes \calE)(\ket{\bellstate_\nin}\!\!\bra{\bellstate_\nin})},\]
    where $\ket{\bellstate_\nin} = \sum_{x\in \zo^\nin} \ket{x} \otimes \ket{x}$ is the \emph{un-normalized} $\nin$-qubit Bell state.
    Furthermore, the \emph{Choi state} is defined as
    \[\chois_\calE =\frac{\choir_{\calE}}{\tr(\choir_{\calE})} =  \frac{\choir_{\calE}}{\Nin} .\]
    We designate the first $\nin$ qubits as the register ``$\inn$'', and the remaining $\nout$ qubits as ``$\out$''. 
\end{definition}

For a particular input state $\rho$ to the channel $\calE$, we can determine $\calE(\rho)$ from the Choi representation $\choir_\calE$ using the following \nameCref{fact:choirep-to-output} which can be easily verified.
\begin{fact}\label{fact:choirep-to-output}
    For each $\rho \in \calM_\Nin$, $\chan{}(\rho) = \tr_{\inn}\pbra{\choirE \pbra{\rho^\top \otimes I^{\otimes \nin}}}$
\end{fact}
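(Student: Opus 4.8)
The plan is to verify the identity by a direct computation in the computational basis, expanding the Choi representation and contracting the input register. First I would rewrite $\choirE$ in a more transparent form: starting from \Cref{def:choi-rep} and expanding the (unnormalized) Bell state as $\proj{\bellstate_\nin} = \sum_{x,x'\in\zo^\nin} \ketbra{x}{x'}\otimes\ketbra{x}{x'}$, and then using linearity of $\calE$ on the output register, I obtain
\[
\choirE \;=\; \sum_{x,x'\in\zo^\nin} \ketbra{x}{x'}\otimes \calE\pbra{\ketbra{x}{x'}}.
\]
This puts the Choi representation into the form where the input ($\inn$) register carries the matrix units $\ketbra{x}{x'}$ and the output register carries their images under $\calE$, which is exactly what is needed to read off the channel's action.

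Next I would substitute this expansion into the right-hand side and carry out the partial trace over the $\inn$ register (as defined in \Cref{def:partial-trace}). Multiplying on the input register by $\rho^\top$ and tracing out turns each term into a scalar: using $\tr\pbra{\ketbra{x}{x'}\rho^\top} = \bra{x'}\rho^\top\ket{x} = \rho_{x,x'}$, I get
\[
\tr_{\inn}\pbra{\choirE\pbra{\rho^\top\otimes I}} \;=\; \sum_{x,x'\in\zo^\nin} \rho_{x,x'}\,\calE\pbra{\ketbra{x}{x'}}.
\]
Finally, invoking linearity of $\calE$ once more together with the basis expansion $\rho = \sum_{x,x'}\rho_{x,x'}\ketbra{x}{x'}$, this last sum is precisely $\calE(\rho)$, which establishes the \nameCref{fact:choirep-to-output}.

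The computation itself is routine, so the only point requiring genuine care — and thus the main "obstacle" — is the index bookkeeping around the transpose. Specifically, one must ensure that $\rho^\top$ is taken with respect to the same computational basis used to define $\ket{\bellstate_\nin}$, so that the contraction $\bra{x'}\rho^\top\ket{x}$ yields the entry $\rho_{x,x'}$ (rather than $\rho_{x',x}$) and the indices match up to reconstruct $\rho$ correctly after applying $\calE$. Once the transpose convention and the order of multiplication on the $\inn$ register are pinned down, everything lines up and the identity follows immediately.
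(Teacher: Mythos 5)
Your proof is correct. The paper actually states this fact without proof (it is introduced as something "which can be easily verified"), and your matrix-unit expansion of $\proj{\bellstate_\nin}$ followed by the contraction $\tr\pbra{\ketbra{x}{x'}\rho^\top} = \rho_{x,x'}$ is exactly the standard verification — it is also the same style of computation the paper does spell out for the more general auxiliary-state version in \Cref{prop:choi_anc_collapse_general}. Your attention to the transpose convention is well placed: it is precisely what makes the indices recombine into $\calE(\rho)$ rather than $\calE(\rho^\top)$.
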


We often use the \emph{normalized Frobenius norm} $\frac{1}{\sqrt{M}} \fnorm{\choirE}$ of a Choi representation $\choirE \in \calM_\Ntot$.  The following \nameCref{fact:fnorm-bounds} and \nameCref{proposition:basis-distance-to-choi} highlight this convenience.

\begin{fact}\label{fact:fnorm-bounds}
    For each channel $\calE$ from $\nin$ to $\nout$ qubits,
        $\frac{1}{ \Nout^2} \leq \frac{1}{\Ntot}\fnorm{\choirE}^2 \leq 1$
\end{fact}
\begin{proof}
    In \cite{van2002renyi}, van Dam and Hayden show that Rényi entropy is weakly subadditive. For Rényi entropy of order 2 this means that for quantum state $\rho$ on two subsystems $A, B$ the following is true
    \begin{align*}
        S_{2}(\rho_A)  - S_0(\rho_B) \leq S_2(\rho_{AB}) \leq S_2(\rho_A) + S_0(\rho_B).
    \end{align*}
    Where $S_{2}(\rho) :=-\log\pbra{\fnorm{\rho}^2}$ is Rényi entropy of order 2 and $S_0$ is \emph{max-entropy}: $S_0(\rho) := \log\mathrm{rank}(\rho)$. Applying a logarithm this immediately gives us
    \begin{align}\label{eq:fnorm_renyi}
        \frac{\fnorm{\rho_A}^2}{\mathrm{rank}(\rho_B)}\leq \fnorm{\rho_{AB}}^2 \leq \fnorm{\rho_A}^2 \cdot \mathrm{rank}(\rho_B) .
    \end{align}
    Now let $\chois = \chois_\calE$, and let $A$ be the ``$\inn$'' register (the first $\nin$ qubits), and let $B$ be the ``$\out$'' register (remaining $\nout$ qubits). Since $\rho_B \in \calM_{\Nout}$ we have that $\mathrm{rank}(\rho_B) \leq \Nout$. 
    Furthermore, since $\calE$ is trace preserving, we have that $\rho_A = \tr_{\out}(\chois) = \frac{1}{\Nin} I_{\Nin}$, and so $\fnorm{\rho_A}^2 = \frac{1}{\Nin}$.  By \Cref{eq:fnorm_renyi} we have that
    $\frac{1}{\Nin \Nout} \leq \fnorm{\chois_\calE}^2 \leq \frac{\Nout}{\Nin} $. 
    Since $\choir_\calE = \Nin \chois_{\calE}$, it follows that
    \begin{align*}
        \frac{1}{ \Nout^2} \leq \frac{1}{\Ntot}\fnorm{\choirE}^2 \leq 1
    \end{align*}
    Applying Parseval's equation (\Cref{eq:qu_parseval_plancharel}) this proves the \nameCref{fact:weight-bounds}.
\end{proof}
Indeed when $\nout \leq \nin$ these bounds on the squared normalized Frobenius norm $\frac{1}{M} \fnorm{\choirE}^2$ are tight --- it can be easily verified that the lower bound in \Cref{fact:weight-bounds} is achieved by $\calE(\rho) = \frac{1}{\Nout} I_\Nout$ and the upper bound is achieved by the channel $\calE(\rho) = \tr_{\{\nout+1,\dots, \nin\}}(\rho)$. 

The normalized Frobenius distance between Choi representations $\frac{1}{\sqrt{M}} \fnorm{\choir_{\calE} - \choir_{\calF}}$ provides a notion of average-case distance between channels in the following sense.
\begin{proposition}\label{proposition:basis-distance-to-choi}
        For channels $\calE, \calF$ mapping $\nin$ to $\nout$ qubits, and for orthonormal Basis $\calB$ of $\CC^{\Nin}$, we have that 
        \[\E_{\ket{\psi} \sim \calB} \fnorm{\calE(\psi) - \calF(\psi)}^2 \leq \Nout \cdot \frac{1}{\Ntot} \fnorm{\choir_\calE - \choir_{\calF}}^2 \]
        with equality when $\calE(\ketbra{\psi}{\psi'}) = \calF(\ketbra{\psi}{\psi'}) = 0$ for all $\ket{\psi} \neq \ket{\psi'} \in \calB$.
    \end{proposition}
    In words, the condition that $\calE(\ketbra{\psi}{\psi'}) = \calF(\ketbra{\psi}{\psi'}) = 0$ for all $\ket{\psi} \neq \ket{\psi'} \in \calB$ can be stated as that the channel $\calE$ is equivalent to the channel where we first measure in basis $\calB$ and then apply $\calE$ (likewise for $\calF$).
    \begin{proof}
        Using the definition of the Frobenius norm, and the fact that $\calB$ forms an orthonormal basis, it is straightforward to verify that
        \begin{align}
            \E_{\ket{\psi} \sim \calB} \fnorm{\calE(\psi) - \calF(\psi)}^2 = \frac{1}{\Nin}\fnorm{\sum_{\ket{\psi} \in \calB} \psi^* \otimes (\calE - \calF)(\psi)}^2. \label{eq:basis-dist:sum-inside}
        \end{align}
        Let $\calN$ be the channel that measures an $\nin$-qubit state in basis $\calB^* = \{\ket{\psi}^* : \ket{\psi}\in \calB \}$. Then
        \begin{align}
            \sum_{\ket{\psi} \in \calB} \psi^* \otimes (\calE - \calF)(\psi) = \sum_{\ket{\psi}, \ket{\psi'} \in \calB} \calN(\ketbra{\psi}{\psi'}^*) \otimes (\calE - \calF)(\ketbra{\psi}{\psi'}) \label{eq:basis-dist:equal-if-meas}
        \end{align}
        Recall that for any orthonormal Basis $\calB$ of $\CC^\Nin$, we can write the $2\nin$ qubit (unnormalized) Bell state as $\ket{\bellstate_\nin} = \sum_{\ket{\psi} \in \calB} \ket{\psi}^* \otimes \ket{\psi}$
        So we can rewrite the Choi representation as
        \begin{align}
            \choir_\calE - \choir_{\calF} = \sum_{\ket{\psi}, \ket{\psi'} \in \calB} \ketbra{\psi}{\psi'}^* \otimes (\calE - \calF)(\ketbra{\psi}{\psi'}) \label{eq:basis-dist:choidiff-in-diff-bell-basis}
        \end{align}
        Combining \Cref{eq:basis-dist:sum-inside,eq:basis-dist:equal-if-meas,eq:basis-dist:choidiff-in-diff-bell-basis} we have that
        \begin{align*}
            \E_{\ket{\psi} \sim \calB} \fnorm{\calE(\psi) - \calF(\psi)}^2 = \frac{1}{N}\fnorm{(\calN \otimes I_\Nout) \pbra{\choir_\calE - \choir_\calF}}^2.
        \end{align*}
        Note that $\calN$ does not increase the Frobenius norm.
        \begin{align*}
            \fnorm{\mathcal{N}(A)}^2 
            = \sum_{\ket{\psi} \in \calB^*} \abs{\bra{\psi} A \ket{\psi}}^2
            \leq \sum_{\ket{\psi}, \ket{\psi'} \in \calB^*} \abs{\bra{\psi} A \ket{\psi'}}^2 = \fnorm{A}^2.
        \end{align*}
        Therefore
        \[\E_{\ket{\psi} \sim \calB} \fnorm{\calE(\psi) - \calF(\psi)}^2  \leq \frac{1}{\Nin} \fnorm{\choir_\calE - \choir_\calF}^2 = \Nout \cdot \frac{1}{\Ntot} \fnorm{\choir_\calE - \choir_\calF}^2,\]
        completing the proof of the inequality. Note that when $\calE(\ketbra{\psi}{\psi'}) = \calF(\ketbra{\psi}{\psi'}) = 0$ for all $\ket{\psi} \neq \ket{\psi'} \in \calB$, applying the channel $\calN$ to the first register of $\choir_\calE$ and $\choir_\calF$ has no affect, so we get an equality.
    \end{proof}

\subsection{Pauli Spectrum of Quantum Channels}
\label{subsec:choi-pauli}
As shown in \Cref{subsec:pauli-decomp}, we can write the Choi representation for channel $\calE$ in terms of its Pauli decomposition.
\begin{align*}
    \choir_\calE = \sum_{P \in \paulis[\ntot]} \wh{\choir}(P) P.
\end{align*}
We refer to the collection of coefficients $\{\wh{\choir_\calE}(P)\}$ as the \emph{Pauli spectrum of $\calE$}. Furthermore, the level (or degree) $k$ Pauli weight of the channel $\calE$ refers to the level $k$ weight of the Choi representation $\weight{=k}[\choirE] = \sum_{P\in \paulis[\ntot]} |\wh{\choirE}(P)|^2$. We say that the channel $\calE$ is \textit{$\epsilon$-concentrated up to degree $k$} if $\weight{>k}[\choirE]  \leq \epsilon$. 

An important dissemblance between Fourier weight of Boolean functions, and Pauli weight of channels, is that for any Boolean function $f$, the total weight is \emph{always} 1: $\sum_k \weight{=k}[f] = 1$. Whereas this is not generally true for quantum channels. As we see by applying Parsevals (\Cref{eq:qu_parseval_plancharel}) to \Cref{fact:fnorm-bounds}, the total weight is still at most $1$ but can be as small as $\frac{1}{L^2}$.
\begin{fact}[Corollary of \Cref{fact:fnorm-bounds}] \label{fact:weight-bounds}
    For each channel $\calE$ from $\nin$ to $\nout$ qubits,
    $\frac{1}{\Nout^2} \leq \sum_{P} |\wh{\choirE}(P)|^2 \leq 1$.
\end{fact}

In \Cref{sec:low-deg-conc}, we will see that channels computed by certain circuit classes have vanishing weight at higher degrees. Throughout the paper we focus on interesting properties of the Pauli weight $\weight{k}[\choirE]$ at level $k$ --- \Cref{fact:weight-bounds} shows that this is related to the fraction of overall weight by a multiplicative factor between $1$ and $\Nout^2$.

\subsection{Single-Output Channels.}
We will primarily be interested in $\nin$ to $1$-qubit quantum channels that are implemented by applying an $(\nin+a)$-qubit unitary $U$ to the input $\nin$-qubit quantum state---together with $a$-qubit auxiliary state $\psi$---and outputting a single ``target'' qubit by tracing out the rest of the qubits as shown in \Cref{fig:circuit-register-labels}. To denote such channels, we introduce the following notation.

\begin{definition}[$\chan{U, \psi}$] \label{def:U-channel-with-ancilla} \label{def:U-aux-choirep}
    Let $U\in\calM_{\Nin A}$ be a unitary acting on $\nin+a$ qubits, and let $\psi\in \calM_A$ be an $a$-qubit quantum state. We define $\chan{U, \psi}$ to be the following  channel mapping $\nin$ qubits to $1$ qubit:
    \[\chan{U, \psi}(\rho) :=  \tr_{{\out}^c}\pbra{U( \rho\otimes \psi) \, U^\dagger}.\]
    Where $\out = \{\nin+a\}$ is the final qubit output by the circuit as in \Cref{fig:circuit-register-labels}. Its corresponding Choi representation is denoted
    $\choir_{U, \psi} := \pbra{I^{\otimes \nin} \otimes \chan{U, \psi}}\pbra{\proj{\bellstate_\nin}}$, with registers labeled as shown in \Cref{fig:single-output-choi}.
    In the setting without auxiliary states, when $a=0$, we denote $\chan{U}(\rho)$ and $\choir_{U}$ for the corresponding channel and its Choi representation. 
\end{definition}

Note that the register labeling, as shown in \Cref{fig:circuit-register-labels,fig:single-output-choi} is different for incoming and outgoing wires.

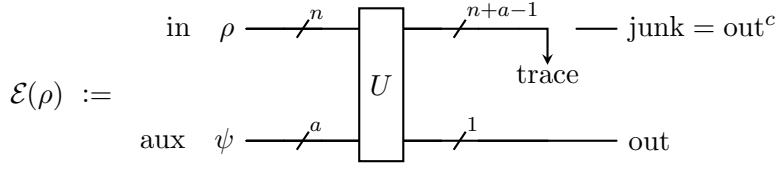
\begin{figure}
    \centering
    \begin{equation*}
    \chan{}(\rho) ~:=~
    \begin{quantikz}
        \lstick{$\inn \quad \rho$} &\qw &\qw \qwbundle{\nin} &\gate[2]{U} &\qw &\qwbundle{\nin+a-1} & \trash{\text{trace}} & \rstick{$\junk= \out^\complement$}\\
        \lstick{$\aux \quad \psi$} &\qw &\qw \qwbundle{a} & \qw &\qw &\qwbundle{1}    & \qw & \rstick{$\out$}\\
    \end{quantikz}
    \end{equation*}
    \caption{The channel $\chan{U, \psi}$ mapping $\nin$ qubits to $1$ qubit by first applying the unitary $U$ to $\nin$ input qubits as well as an $a$-qubit auxiliary state $\ket{\psi}$, then tracing out all but the last qubit --- in the ``$\out$'' register. 
    } 
    \label{fig:circuit-register-labels}
\end{figure}

\begin{figure}


    \begin{align*}
        {\footnotesize I^{\otimes \nin} \otimes \chan{U}\pbra{\ket{\EPR_\nin}\!\!\bra{\EPR_\nin}}} 
        &=~\diagram{
		\begin{scope}[scale=2]
                \draw[white] (0,0) -- (0,1.5);
                \draw (-0.5,-0.75) .. controls (0,-0.75) and (0,0.75) .. (-0.5,0.75); 
                \draw (-0.75,-0.5) .. controls (-0.25,-0.5) and (-0.25,0.5) .. (-0.75,0.5); 
                \draw (-0.5,0.75) -- (-2,0.75);	
                \draw (-0.75,0.5) -- (-2,0.5); 
                \draw (-0.5,-0.75) -- (-1.5,-0.75); 
                \draw (-0.75,-0.5) -- (-2,-0.5); 
                \draw (0.5,-0.75) .. controls (0,-0.75) and (0,0.75) .. (0.5,0.75); 
                \draw (0.75,-0.5) .. controls (0.25,-0.5) and (0.25,0.5) .. (0.75,0.5); 
                \draw (0.5,0.75) -- (2,0.75);	
                \draw (0.75,0.5) -- (2,0.5); 
                \draw (0.5,-0.75) -- (1.5,-0.75); 
                \draw (0.75,-0.5) -- (2,-0.5); 
                \draw (0.5,-0.75) .. controls (0,-0.75) and (0,0.75) .. (0.5,0.75);
                \draw (0.5,-0.75) -- (1.5,-0.75);
                \draw (0.5,0.75) -- (2,0.75);	
                \draw (1.5,-0.75) .. controls (1.75,-0.75) and (1.75,-1.25) .. (1.5,-1.25); 
                \draw (-1.5,-0.75) .. controls (-1.75,-0.75) and (-1.75,-1.25) .. (-1.5,-1.25); 
                \draw (1.5,-1.25) -- (-1.5,-1.25); 
                \draw[ten] (-1.5,-0.9) rectangle (-0.9,-0.35);
                \draw[ten] (1.5,-0.9) rectangle (0.9,-0.35);
                \node at (-1.2,-0.625) {$U$};
                \node at (1.2,-0.625) {$\,U^\dagger$};
                \node[right] at (2.15,0.75) {\scalebox{.6}{}};
                \node[right] at (2.15,-0.5) {\scalebox{.6}{$\out$}};
                \draw[-latex,dashed,color=black!90!white] (-0.8,-0.85) to [out=0,in=270] (-0.15,-0.6);
                \draw[-latex,dashed,color=black!90!white] (0.8,-0.85) to [out=-180,in=270] (0.15,-0.6);
                \node[] at (0,-1.5) {\scalebox{.6}{$(\nin-1)$ traced out qubits}};
                \draw [decorate,decoration={brace,amplitude=2pt,mirror},yshift=0pt]
(2.15,0.5) -- (2.15,0.75);
                \node[right] (ahhh) at (2.15, 0.625) {\scalebox{.6}{$\inn$}};
		\end{scope}
	}\\
    &=~\diagram{
		\begin{scope}[scale=2]
                \draw[white] (0,0) -- (0,1.5);
                \draw (-0.5,-0.75) .. controls (0,-0.75) and (0,0.75) .. (-0.5,0.75); 
                \draw (-0.75,-0.5) .. controls (-0.25,-0.5) and (-0.25,0.5) .. (-0.75,0.5); 
                \draw (-0.5,0.75) -- (-2,0.75);	
                \draw (-0.75,0.5) -- (-2,0.5); 
                \draw (-0.5,-0.75) -- (-1.5,-0.75); 
                \draw (-0.75,-0.5) -- (-2,-0.5); 
                \draw (0.5,-0.75) .. controls (0,-0.75) and (0,0.75) .. (0.5,0.75); 
                \draw (0.75,-0.5) .. controls (0.25,-0.5) and (0.25,0.5) .. (0.75,0.5); 
                \draw (0.5,0.75) -- (2,0.75);	
                \draw (0.75,0.5) -- (2,0.5); 
                \draw (0.5,-0.75) -- (1.5,-0.75); 
                \draw (0.75,-0.5) -- (2,-0.5); 
                \draw (0.5,-0.75) .. controls (0,-0.75) and (0,0.75) .. (0.5,0.75);
                \draw (0.5,-0.75) -- (1.5,-0.75);
                \draw (0.5,0.75) -- (2,0.75);	
                \draw (1.5,-0.75) .. controls (1.75,-0.75) and (1.75,-1.25) .. (1.5,-1.25); 
                \draw (-1.5,-0.75) .. controls (-1.75,-0.75) and (-1.75,-1.25) .. (-1.5,-1.25); 
                \draw (1.5,-1.25) -- (-1.5,-1.25); 
                \draw[ten] (-1.5,0.35) rectangle (-0.9,0.9);
                \draw[ten] (1.5,0.35) rectangle (0.9,0.9);
                \node[rotate=180] at (-1.2,0.625) {$U$};
                \node[rotate=180] at (1.2,0.625) {$\,U^\dagger$};
                \node[right] at (2.15,0.75) {\scalebox{.6}{}};
                \node[right] at (2.15,-0.5) {\scalebox{.6}{$\out$}};
                \draw [decorate,decoration={brace,amplitude=2pt,mirror},yshift=0pt]
(2.15,0.5) -- (2.15,0.75);
                \node[right] (ahhh) at (2.15, 0.625) {\scalebox{.6}{$\inn$}};
		\end{scope}
	}\\
    &=~\diagram{
		\begin{scope}[scale=2]
                \draw[white] (0,0) -- (0,1.5);
                \draw (-0.75,-0.5) .. controls (-0.25,-0.5) and (-0.25,0.5) .. (-0.75,0.5); 
                \draw (-2,0.75) -- (2,0.75);	
                \draw (-0.75,0.5) -- (-2,0.5); 
                \draw (-0.75,-0.5) -- (-2,-0.5); 
                \draw (0.75,-0.5) .. controls (0.25,-0.5) and (0.25,0.5) .. (0.75,0.5); 
                \draw (0.75,0.5) -- (2,0.5); 
                \draw (0.75,-0.5) -- (2,-0.5); 
                \draw[ten] (-1.5,0.35) rectangle (-0.9,0.9);
                \draw[ten] (1.5,0.35) rectangle (0.9,0.9);
                \node[rotate=180] at (-1.2,0.625) {$U$};
                \node[rotate=180] at (1.2,0.625) {$\,U^\dagger$};
                \node[right] at (2.15,0.75) {\scalebox{.6}{}};
                \node[right] at (2.15,-0.5) {\scalebox{.6}{$\out$}};
                \draw [decorate,decoration={brace,amplitude=2pt,mirror},yshift=0pt]
(2.15,0.5) -- (2.15,0.75);
                \node[right] (ahhh) at (2.15, 0.625) {\scalebox{.6}{$\inn$}};
		\end{scope}
	}
    \end{align*}
    \caption{Tensor network diagram proving alternative definition for $\Phi_{U}$ as stated in \Cref{fact:choirep-altdef}. We adopt the standard convention that \rotatebox[origin=c]{180}{$U$}$\,=U^\top$. Our register labelling follows~\Cref{fig:circuit-register-labels}.}
    \label{fig:single-output-choi}
\end{figure}

Turning to the Choi representation of the auxiliary-free channel $\chan{U}$, the following \nameCref{fact:choirep-altdef} provides us with an alternative definition of $\chan{U}$ that is sometimes more convenient.
\begin{fact} \label{fact:choirep-altdef}
    For each $U\in \unitary{\nin}$, $\choir_U  =  \pbra{ U^\top \otimes I } \pbra{ I^{\otimes (n-1)} \otimes \proj{\bellstate_1} }\pbra{ U^* \otimes I}$
\end{fact}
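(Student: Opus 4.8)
The plan is to expand the definition of $\choir_U$ and then massage it using the standard ``transpose trick'' (ricochet property) of the unnormalized Bell state. Starting from \Cref{def:U-aux-choirep} with $a=0$, we have $\choir_U = (I^{\otimes \nin}\otimes \chan{U})(\proj{\EPR_\nin})$, and unpacking $\chan{U}(\rho) = \tr_{[\nin-1]}(U\rho U^\dagger)$ lets us write
\[\choir_U = (I^{\otimes \nin}\otimes \tr_{[\nin-1]})\bigl[(I^{\otimes \nin}\otimes U)\proj{\EPR_\nin}(I^{\otimes \nin}\otimes U^\dagger)\bigr].\]

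First I would establish the ricochet identity $(I^{\otimes \nin}\otimes U)\ket{\EPR_\nin} = (U^\top \otimes I^{\otimes \nin})\ket{\EPR_\nin}$, which follows from a one-line computation expanding both sides in the computational basis and relabeling summation indices; taking the adjoint gives the corresponding statement for the bra, using the fact that $(U^\top)^\dagger = U^*$. Substituting this into the display above moves both copies of $U$ onto the ``$\inn$'' register:
\[\choir_U = (I^{\otimes \nin}\otimes \tr_{[\nin-1]})\bigl[(U^\top\otimes I^{\otimes \nin})\proj{\EPR_\nin}(U^*\otimes I^{\otimes \nin})\bigr].\]

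Next, since $U^\top$ and $U^*$ act only on the ``$\inn$'' register while the partial trace $\tr_{[\nin-1]}$ acts on the first $\nin-1$ qubits of the output register, these operations commute, and I can pull the unitaries outside the partial trace:
\[\choir_U = (U^\top\otimes I)\bigl[(I^{\otimes \nin}\otimes \tr_{[\nin-1]})\proj{\EPR_\nin}\bigr](U^*\otimes I).\]
The final step is a direct computation: expanding $\proj{\EPR_\nin} = \sum_{x,x'}\ket{x}\bra{x'}\otimes\ket{x}\bra{x'}$ and tracing out the first $\nin-1$ qubits of the second register enforces agreement on the first $\nin-1$ coordinates via a Kronecker delta, collapsing the sum to $I^{\otimes(\nin-1)}\otimes\proj{\EPR_1}$. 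This yields exactly the claimed expression.

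The computation is essentially bookkeeping, so I anticipate no genuine obstacle; the one place demanding care is the register convention. As the paper emphasizes (and as \Cref{fig:single-output-choi} depicts), the labeling of incoming and outgoing wires differs, so I must track carefully which tensor factors are acted on by $U^\top$, by $U^*$, and by the partial trace. Keeping the ``$\inn$''/``$\out$'' bookkeeping straight is what makes both the commuting step and the final collapse valid, and the tensor-network diagram in \Cref{fig:single-output-choi} carries out exactly this sequence of manipulations pictorially.
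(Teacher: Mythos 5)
Your proposal is correct and follows essentially the same route as the paper's own proof: expand $\choir_U$ via the channel definition, apply the ricochet identity $(I^{\otimes \nin}\otimes U)\ket{\EPR_\nin} = (U^\top\otimes I^{\otimes \nin})\ket{\EPR_\nin}$ to move the unitaries onto the input register, commute them past the partial trace, and compute $\tr_{\junk}\pbra{\proj{\bellstate_\nin}} = I^{\otimes(\nin-1)}\otimes\proj{\bellstate_1}$. The only cosmetic difference is that the paper presents these same steps alongside tensor-network diagrams, while you carry them out purely in bra-ket notation.
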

\begin{proof}
    This proof is illustrated via tensor networks in~\Cref{fig:single-output-choi}.
    \begin{align}
        \choir_U &:= ( I^{\otimes \nin} \otimes \chan{U})\pbra{\proj{\bellstate_\nin}} \nonumber\\
        &= \tr_{\junk}\pbra{\pbra{  I^{\otimes \nin} \otimes U} \proj{\bellstate_\nin} \pbra{ I^{\otimes \nin} \otimes U^\dagger}} \nonumber
    \end{align}
    Using the fact that for any unitary $U\in \unitary{\nin}$, we have
    $(I^{\otimes \nin}\otimes U) \ket{\EPR_\nin} = (U^\top\otimes I^{\otimes \nin})\ket{\EPR_\nin}$, 
    we can further rearrange as
    \begin{align}
        \choir_U &= \tr_{\junk} \pbra{\pbra{ U^\top \otimes I^{\otimes \nin} } \proj{\bellstate_\nin} \pbra{ U^* \otimes I^{\otimes \nin} }}. \nonumber\\
        &= \pbra{ U^\top \otimes I } \tr_{\junk} (\proj{\bellstate_\nin})  \pbra{ U^* \otimes I} \nonumber\\
        &= \pbra{ U^\top \otimes I } \pbra{ I^{\otimes (\nin-1)} \otimes \proj{\bellstate_1} }\pbra{ U^* \otimes I} \label{eq:altdef_choirep}.
\end{align}

\end{proof}

\subsection{\texorpdfstring{$\QACZ$}{QAC0} Circuits}
\label{subsec:qacz}

Our model for quantum circuits with unitary gates is standard and can be found in several textbooks, including~\cite{Nielsen2010}. We also refer the interested reader to a survey by Bera, Green and Homer~\cite{bera2007small} for further background on small-depth quantum circuits.

\begin{definition} \label{def:cz}
    The \emph{controlled phase-flip unitary} on $k$-qubits, written $\CZ_k$, is the unitary operator defined by the following action on the computational basis:
    \[\CZ_k\ket{x_1, \ldots, x_k} = (-1)^{x_1\cdots x_k}\ket{x_1, \ldots, x_k}\]
    for $x_i \in \{0,1\}$. More generally, given a subset $S\sse[n]$ with $|S| = k$, we define $\CZ_S$ to be the unitary operator that acts on the qubits in the registers indexed by $S$ with a $\CZ_{k}$ gate and as the identity on all other qubits, i.e.
    \[\CZ_S\ket{x_1, \ldots, x_n} := I_{}\ket{x_i}_{i\notin S} \otimes \CZ_{k}\ket{x_i}_{i\in S}.\]
\end{definition}

Consider a quantum circuit $C$, written as $C = L_0 M_1 L_1 \ldots M_d L_d$ such that each \emph{layer} $L_i$ consists only of single-qubit gates (i.e. $L_i$ is a tensor product of operators in $\calU_2$) and each $M_k$ is a layer of multi-qubit gates. The \emph{size} of the circuit $C$ is the number of multi-qubit gates in $C$, and the \emph{depth} is the number of layers of multi-qubit gates in $C$.\footnote{In particular, note that single-qubit gates do not count towards circuit depth.}

\begin{definition}[$\QAC$ circuits] \label{def:qac-circuits}
    The class $\QAC$ consists of quantum circuits with $\CZ$ gates and arbitrary single-qubit gates.
\end{definition}

\begin{definition}[$\QACZ$ circuits]
    The class $\QACZ$ contains $\QAC$ circuit families with depth bounded below a constant. That is, the circuit family $\{C_n\}_{n\in \NN}$ is in $\QACZ$ if there exists a $d\in \NN$ such that for each $n\in \NN$, $C_n$ is a $\QAC$ circuit on $n$ qubits with depth at most $d$. 
\end{definition}

We note that the terminology ``$\QACZ$ circuit'' implicitly refers to a $\QACZ$ circuit family; this usage is standard. 

\begin{figure}
    \centering
    \begin{quantikz}[row sep = 2mm]
		& \ghost \vee \qw & \ctrl 1 & \qw & \qw \\
		& \ghost \vee \qw & \ctrl 1 & \qw & \qw \\
		& \ghost \vee \qw & \ctrl 1 & \qw & \qw \\
		\ghost \vee \qw & \qw  & \targ{} & \qw & \qw
    \end{quantikz}   
    \quad = \quad 
    \begin{quantikz}[row sep = 2mm]
		& \ghost \vee \qw & \ctrl 1 & \qw & \qw \\
		& \ghost \vee \qw & \ctrl 1 & \qw & \qw \\
		& \ghost \vee \qw & \ctrl 1 & \qw & \qw \\
		\qw & \gate H  & \gate Z & \gate H & \qw
    \end{quantikz}
    \
    
    \caption{Simulating a Toffoli (i.e. $\mathsf{CNOT}$) gate using $\CZ$ gates and Hadamards.}
    \label{fig:toffoli-cz}
\end{figure}


\begin{remark} \label{remark:toffoli-cz}
    $\QAC$ is sometimes defined using \emph{Toffoli} (i.e.~$\mathsf{CNOT}$) gates instead of $\CZ$ gates. Note that $\mathsf{CNOT}$ gates can be obtained by conjugating the $\CZ$ qubit corresponding to the $\mathsf{CNOT}$ target with Hadamard gates (cf. \Cref{fig:toffoli-cz}). Similarly, $\CZ$ gates can be obtained by conjugating the $\mathsf{CNOT}$ target with Hadamard gates. 
    
    A slight technical advantage of working with $\CZ$ gates over $\mathsf{CNOT}$ gates is the lack of a distinguished ``target'' qubit, which simplifies some of our analyses. More formally, the $\CZ$ gate commutes with the $\mathsf{SWAP}$~operator applied to any two of its qubits.
\end{remark}

Distinct from other works studying $\QACZ$ circuits, this paper is focused on \textit{channels} that are implemented with $\QACZ$ circuits, rather than just unitaries. In this more general framework, we do not assume that the $\QACZ$ circuit ``cleanly computes'' its auxiliary qubits (meaning the auxiliary qubits must start and end as $\ket{0}$. Furthermore, we specifically consider channels implemented with a $\QACZ$ circuit, with only a single designated output qubit.

\section{Low-Degree Concentration of \texorpdfstring{$\QACZ$}{QAC0} Circuits}
\label{sec:low-deg-conc}

In this section, we will establish the following theorem:
\begin{theorem} \label{thm:pauli-concentration}
    Let $C$ be an $(\nin+a)$-qubit $\QAC$ circuit of depth $d$ and size $\size$. Let  $\psi \in \calM_{2^a}$ be an $a$-qubit quantum state. For each $k \in [n+1]$,
    \begin{align}
        \weight{>k}[\choir_{C, \psi}] \leq O\pbra{s^2 2^{-k^{1/d}}}\cdot  2^a \fnorm{\psi}^2 
        .
    \end{align}
\end{theorem}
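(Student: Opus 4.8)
The plan is to first prove the bound for the \emph{full} circuit $C$ regarded as a channel with no auxiliary qubits (all $m := \nin + a$ qubits are inputs, one qubit is output), obtaining $\weight{>k}[\choir_C] \le O\pbra{\size^2 2^{-k^{1/d}}}$, and then to recover the auxiliary version by ``contracting in'' the state $\psi$ along the auxiliary legs of $\choir_C$. Throughout I would work with the clean conjugation form of the Choi representation: setting $V := C^\top$, which is again a depth-$d$, size-$\size$ $\QAC$ circuit (both $\CZ$ and single-qubit gates are closed under transpose), \Cref{fact:choirep-altdef} gives $\choir_C = \pbra{V\otimes I_\out}\,\sigma\,\pbra{V\otimes I_\out}^\dagger$ with $\sigma = I^{\otimes(m-1)}\otimes\proj{\bellstate_1}$.

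First I would prove a \emph{lightcone lemma}: if a depth-$d$ $\QAC$ circuit has every $\CZ$ gate of width strictly less than $w$, then its Choi representation has \emph{zero} Pauli weight above level $w^d$. Expanding $\proj{\bellstate_1} = \tfrac12\sum_{Q\in\cbra{I,X,Y,Z}}\epsilon_Q\, Q\otimes Q$ with signs $\epsilon_Q\in\bits$, one gets $\choir_C = \tfrac12\sum_Q \epsilon_Q\pbra{V\pbra{I^{\otimes(m-1)}\otimes Q}V^\dagger}\otimes Q_\out$, so each term is the Heisenberg evolution of a single-qubit Pauli tensored with one output-qubit Pauli. Conjugation by single-qubit gates preserves $\supp$, while conjugation by $\CZ_S$ can only grow the support within $S$; iterating over the $d$ multi-qubit layers of width-$<w$ gates bounds the backward lightcone of the target qubit by $w^d$, so every Pauli appearing has degree at most $w^d+1$. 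Taking $w := k^{1/d}$ makes this at most $k$ (an off-by-one is absorbed into the $O(\cdot)$).

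Next I would run a hybrid argument over the wide gates. Let $C'$ be $C$ with every $\CZ$ gate of width $\ge k^{1/d}$ replaced by the identity; by the lightcone lemma $\weight{>k}[\choir_{C'}]=0$, so Parseval (\Cref{eq:qu_parseval_plancharel}) gives $\weight{>k}[\choir_C] \le \tfrac1N\fnorm{\choir_C - \choir_{C'}}^2$ with $N = 2^{m+1}$. Deleting the $\le\size$ wide gates one at a time and telescoping, it suffices to bound a single deletion. Writing the two neighboring circuits as conjugations by $W_1 = V_{j-1}\otimes I$ and $W_2 = V_j \otimes I$ differing by one gate, the identity $W_1\sigma W_1^\dagger - W_2\sigma W_2^\dagger = (W_1-W_2)\sigma W_1^\dagger + W_2\sigma(W_1-W_2)^\dagger$ together with unitary invariance of $\fnorm{\cdot}$ and Hermiticity of $\sigma$ reduces this to $2\fnorm{(W_1-W_2)\sigma}$, where $W_1 - W_2 = \pbra{A(\CZ_S - I)B}\otimes I$. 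A short computation using $\sigma^2 = 2\sigma$ and $\fnorm{\CZ_S - I}^2 = 2^{\,m-\abs{S}+2}$ gives, after dividing by $N$, a per-gate contribution of $O\pbra{2^{-\abs{S}}} = O\pbra{2^{-k^{1/d}}}$ since $\abs{S}\ge k^{1/d}$; summing and squaring over the $\le\size$ deletions yields $\weight{>k}[\choir_C] \le O\pbra{\size^2\, 2^{-k^{1/d}}}$.

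Finally, for the auxiliary reduction I would establish the insertion identity $\choir_{C,\psi} = \Tr_{\aux}\sbra{\choir_C\,\pbra{I_\inn\otimes\psi^\top_{\aux}\otimes I_\out}}$ by a direct computational-basis calculation (splitting inputs as $(u,v)$ with $u\in\zo^\nin$, $v\in\zo^a$). Decomposing each Pauli as $P_\inn\otimes P_{\aux}\otimes P_\out$, this identity yields $\wh{\choir_{C,\psi}}(R) = \sum_{P_{\aux}}\wh{\choir_C}(R_\inn\otimes P_{\aux}\otimes R_\out)\Tr(P_{\aux}\psi^\top)$ for $R = R_\inn\otimes R_\out$. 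The key point is that contracting the auxiliary legs can only \emph{lower} degree: every $P$ contributing to a coefficient with $\abs{R}>k$ itself satisfies $\abs{P} = \abs{R}+\abs{P_{\aux}}>k$, and distinct $(R,P_{\aux})$ give distinct $P$. Applying Cauchy--Schwarz to the sum over the $4^a$ choices of $P_{\aux}$, and using $\sum_{P_{\aux}}\abs{\Tr(P_{\aux}\psi^\top)}^2 = 2^a\fnorm{\psi}^2$ (Parseval for the $a$-qubit operator $\psi^\top$), collapses everything to $\weight{>k}[\choir_{C,\psi}] \le 2^a\fnorm{\psi}^2\cdot\weight{>k}[\choir_C]$, which combined with the previous step is exactly the claimed bound. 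The step requiring the most care is the lightcone lemma together with the normalization in the per-gate Frobenius estimate: a naive partial-trace bound such as $\fnorm{\Tr_B(M)}\le\sqrt{d_B}\fnorm{M}$ is far too lossy and produces a spurious factor exponential in $m$, whereas working directly with the conjugation form and exploiting $\sigma^2 = 2\sigma$ is precisely what makes the dimension dependence cancel and leaves the clean $O\pbra{\size^2 2^{-k^{1/d}}}$ scaling.
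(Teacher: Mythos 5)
Your proposal is correct and follows essentially the same route as the paper's proof: a lightcone lemma showing that circuits with only narrow gates have zero Pauli weight above level $k$, a gate-by-gate hybrid argument showing that deleting the wide $\CZ$ gates perturbs the Choi representation by only $O\pbra{\size^2 2^{-k^{1/d}}}$ in normalized squared Frobenius norm, and a Cauchy--Schwarz contraction argument paying $2^a \fnorm{\psi}^2$ to fix the auxiliary register (exactly the paper's postselection propositions). The only differences are cosmetic: you run the lightcone argument in the Heisenberg picture and telescope directly at the level of Choi representations using $\sigma^2 = 2\sigma$, whereas the paper first bounds the unitary distance $\fnorm{C - \wt{C}}$ and then separately converts it to a Choi-representation distance via a projector argument.
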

We note that if $\psi$ is pure, $\fnorm{\psi}^2 = 1$, but that in general $\fnorm{\psi}^2 \in [2^{-a}, 1]$.
We will first prove \Cref{thm:pauli-concentration} for channels implemented by $\QACZ$ circuits \emph{without} auxiliary qubits (i.e. in the setting when $a = 0$), and will then prove the general statement in \Cref{subsec:postselecting}. Our correlation bound against parity (cf.~\Cref{thm:parity-lb}) will be an immediate consequence of \Cref{thm:pauli-concentration} and will be proved in \Cref{subsec:parity-lb}.

As a corollary (proved in \Cref{appendix:lin-comb-channels}, we see that the result holds also for linear combinations of such channels.
\begin{corollary}
    Consider the channel $\calE(\rho) := \sum_i \alpha_i \calE_i(\rho)$ where $\sum_i \abs{\alpha_i} = 1$ and each $\calE_i = \calE_{U_i, \psi_i}$ for some $(n+a)$-qubit $\QAC$ circuit of depth $d$ and size $\size$, and $a$-qubit state $\psi_i$. Then 
    \[\weight{>k}[\choirE] \leq O\pbra{\size^2 2^{-k^{1/d}}\cdot 2^a \cdot \max_i \fnorm{\psi_i}^2}\]
\end{corollary}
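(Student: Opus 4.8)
The plan is to exploit the \emph{linearity} of the Choi map in the channel, which reduces the corollary to the single-channel bound of \Cref{thm:pauli-concentration} via a convexity argument. First I would observe that the map $\calE \mapsto \choirE$ is linear, so that $\choir_{\calE} = \sum_i \alpha_i \choir_{\calE_i}$ where each $\choir_{\calE_i} = \choir_{U_i, \psi_i}$. Since the Pauli coefficient $\wh{\cdot}(P)$ is itself a linear functional (a fixed scalar multiple of the Hilbert--Schmidt inner product $\abra{\cdot, P}$), this gives $\wh{\choirE}(P) = \sum_i \alpha_i \wh{\choir_{\calE_i}}(P)$ for every Pauli $P \in \paulis[n+1]$.

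The only real subtlety is that the quantity we wish to bound, $\weight{>k}[\choirE]$, is \emph{quadratic} in the Pauli coefficients, so bounds on the individual $\weight{>k}[\choir_{\calE_i}]$ cannot simply be combined by a triangle inequality applied to the weights themselves. Instead I would work at the level of coefficient vectors: for each $i$, set $v_i := \pbra{\wh{\choir_{\calE_i}}(P)}_{|P| > k}$, so that $\|v_i\|_2^2 = \weight{>k}[\choir_{\calE_i}]$ and $\weight{>k}[\choirE] = \|\sum_i \alpha_i v_i\|_2^2$. Assuming the combination is convex ($\alpha_i \geq 0$, as in the intended application), the triangle inequality gives $\|\sum_i \alpha_i v_i\|_2 \leq \sum_i \alpha_i \|v_i\|_2$, and then Jensen's inequality (equivalently Cauchy--Schwarz, using $\sum_i \alpha_i = 1$) yields $\weight{>k}[\choirE] \leq \sum_i \alpha_i \|v_i\|_2^2 = \sum_i \alpha_i \weight{>k}[\choir_{\calE_i}]$.

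Finally I would apply \Cref{thm:pauli-concentration} to each summand to bound $\weight{>k}[\choir_{\calE_i}] \leq O\pbra{\size^2 2^{-k^{1/d}}} \cdot 2^a \fnorm{\psi_i}^2$. Pulling out the worst auxiliary state and using $\sum_i \alpha_i = 1$ collapses the convex sum to $O\pbra{\size^2 2^{-k^{1/d}}} \cdot 2^a \max_i \fnorm{\psi_i}^2$, which is exactly the claimed bound.

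I do not anticipate a genuine obstacle here: the entire content is linearity of the Choi representation together with the convexity step, and the heavy lifting is already carried out in \Cref{thm:pauli-concentration}. The one place to be careful is the quadratic-versus-linear mismatch noted above — one must apply the triangle inequality to the coefficient vectors \emph{before} squaring, not to the weights directly — together with the mild assumption that the $\alpha_i$ are nonnegative (the convex case); without it, the clean $\max_i$ bound would instead have to carry an extra factor of $\sum_i \abs{\alpha_i}$.
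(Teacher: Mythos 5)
Your proposal is correct and follows essentially the same route as the paper's own proof in \Cref{appendix:lin-comb-channels}: linearity of $\calE \mapsto \choirE$, passing to the vectors of high-degree Pauli coefficients, applying the triangle inequality \emph{before} squaring, and then invoking \Cref{thm:pauli-concentration} termwise. Your extra Jensen step (getting $\sum_i \alpha_i \weight{>k}[\choir_{\calE_i}]$ rather than bounding $\pbra{\sum_i \alpha_i \sqrt{\weight{>k}[\choir_{\calE_i}]}}^2$ directly) is an immaterial variant, and your remark that nonnegativity of the $\alpha_i$ is needed (else a $\sum_i \abs{\alpha_i}$ factor appears) is a valid point that the paper leaves implicit.
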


\begin{notation}
    Throughout this section, we will identify a circuit $C$ with the unitary transformation it implements.
\end{notation}

\subsection{Concentration of \texorpdfstring{$\QACZ$}{QAC0} Circuits Without Auxiliary Qubits} 
\label{subsec:conc-no-ancilla}

We will first establish \Cref{thm:pauli-concentration} for channels $\chan{C}$ implemented by a $\QACZ$ $C$ circuit \textit{without} an auxiliary state.

\begin{theorem}[Spectral concentration without auxiliary qubits]\label{thm:anc_free_concentration}
    Suppose $C$ is a depth-$d$, size-$\size$ $\QAC$ circuit on $n$ qubits. 
    Then for each $k \in [n+1]$, we have
    \[\weight{>k}\sbra{\choir_{C}} 
    \leq O\pbra{\size^2 2^{-k^{1/d}}}.\]
\end{theorem}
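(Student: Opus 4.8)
The plan is to follow the two-step strategy from the technical overview. Set a width threshold $\tau := \lfloor (k-1)^{1/d}\rfloor + 1 = \Theta\pbra{k^{1/d}}$, call a $\CZ$ gate \emph{wide} if it acts on at least $\tau$ qubits, and let $C'$ be the circuit obtained from $C$ by deleting every wide gate (i.e.\ replacing it with the identity). The two steps are: (i) $C'$ has \emph{no} Pauli weight above level $k$, i.e.\ $\weight{>k}[\choir_{C'}] = 0$; and (ii) passing from $C$ to $C'$ perturbs the Choi representation only slightly in normalized Frobenius norm. These combine via Parseval (\Cref{eq:qu_parseval_plancharel}): since the degree-$>k$ part of $\wh{\choir_{C'}}$ vanishes, the triangle inequality applied to the $\ell_2$-norm restricted to high-degree coordinates gives
\[
\weight{>k}[\choir_C] \;\le\; \sum_{P\in\paulis[\nin+1]}\abs{\wh{\choir_C}(P) - \wh{\choir_{C'}}(P)}^2 \;=\; \tfrac{1}{\Ntot}\fnorm{\choir_C - \choir_{C'}}^2 ,
\]
with $\Ntot = 2^{\nin+1}$, so it remains to bound $\fnorm{\choir_C - \choir_{C'}}$.

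\textbf{Step (i): lightcone bound.} I would show that if every $\CZ$ gate has width $< \tau$, then $\wh{\choir_C}(P_\inn\otimes P_\out)=0$ whenever $\abs{P_\inn\otimes P_\out} > k$. Starting from \Cref{fact:choirep-to-output}, for Paulis $P_\inn\in\paulis[\nin]$ and $P_\out\in\paulis[1]$ one computes $\tr\pbra{\choir_C\,(P_\inn\otimes P_\out)} = \tr\pbra{\chan{C}(P_\inn^\top)\,P_\out} = \tr\pbra{P_\inn^\top\,\chan{C}^\dagger(P_\out)}$, so the coefficient is governed by the Heisenberg-evolved observable $\chan{C}^\dagger(P_\out) = C^\dagger\pbra{I^{\otimes(\nin-1)}\otimes P_\out}C$. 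A standard backward-lightcone argument shows this operator is supported on the set $T$ of input qubits reachable from the output qubit; crucially, the $\CZ$ gates within a single layer act on \emph{disjoint} qubit sets, so $T$ grows by a factor of at most $\tau-1$ per layer, giving $\abs{T}\le(\tau-1)^d\le k-1$. By orthogonality of Paulis, $\tr\pbra{P_\inn^\top\,\chan{C}^\dagger(P_\out)}=0$ unless $\supp(P_\inn)\subseteq T$, hence every nonvanishing coefficient has $\abs{P_\inn}\le k-1$ and total degree $\le k$.

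\textbf{Step (ii): Frobenius closeness.} The key quantitative estimate is that a wide $\CZ$ gate is close to the identity: writing $\CZ_S = I - 2\Pi_S$ where $\Pi_S$ projects onto the all-ones configuration on $S$, one has $\fnorm{\CZ_S - I} = 2\fnorm{\Pi_S} = 2\cdot 2^{(\nin-\abs{S})/2}$, i.e.\ $\tfrac{1}{\sqrt{N}}\fnorm{\CZ_S - I} = 2\cdot 2^{-\abs{S}/2}$ with $N=2^\nin$. To transfer this to the (quadratic) Choi map I would use \Cref{fact:choirep-altdef}, $\choir_A = (A^\top\otimes I)\,W\,(A^*\otimes I)$ with $W := I^{\otimes(\nin-1)}\otimes\proj{\bellstate_1}$: adding and subtracting, using unitary invariance of $\fnorm{\cdot}$ and $\vabs{W}_{\mathrm{op}}=\vabs{\proj{\bellstate_1}}_{\mathrm{op}}=2$, yields for any $\nin$-qubit unitaries $A,B$ the clean bound $\tfrac{1}{\sqrt{\Ntot}}\fnorm{\choir_A - \choir_B}\le 4\cdot\tfrac{1}{\sqrt{N}}\fnorm{A-B}$. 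Telescoping the removal of the (at most $\size$) wide gates one at a time as $C=C_0,\dots,C_m=C'$, each step writes $C_j = A_j\,\CZ_{S_j}\,B_j$, $C_{j+1}=A_jB_j$, so that $\tfrac{1}{\sqrt N}\fnorm{C_j-C_{j+1}} = \tfrac{1}{\sqrt N}\fnorm{\CZ_{S_j}-I} = 2\cdot 2^{-\abs{S_j}/2}\le 2\cdot 2^{-\tau/2}$; summing gives $\tfrac{1}{\sqrt{\Ntot}}\fnorm{\choir_C - \choir_{C'}}\le 8\,\size\,2^{-\tau/2}$. Squaring and combining with the reduction above yields $\weight{>k}[\choir_C]\le 64\,\size^2\,2^{-\tau}=O\pbra{\size^2 2^{-k^{1/d}}}$, as claimed.

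I expect the main obstacle to be Step (ii): the conceptually important estimate is the $2^{-\abs{S}/2}$ bound (the quantitative form of ``wide gates rarely fire''), and the delicate part is carrying the correct normalizations through the quadratic Choi map and verifying that the $\nin$-dependent factors cancel, so that each gate deletion costs only the constant-times-$2^{-\tau/2}$ one wants. Step (i) is a routine lightcone computation, and the floor/threshold bookkeeping for $\tau$ only affects constants in the $O(\cdot)$.
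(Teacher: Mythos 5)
Your proposal is correct and takes essentially the same route as the paper's proof: the paper likewise deletes all $\CZ$ gates of width at least roughly $(k-1)^{1/d}$, shows via a lightcone argument that the truncated circuit has no Pauli weight above level $k$ (\Cref{lem:weight_conc_small_gates}), and controls the resulting perturbation of the Choi representation through the identical estimates $\fnorm{\CZ_S - I} = 2\cdot 2^{(n-|S|)/2}$ and $\fnorm{\choir_{U} - \choir_{\wt{U}}} \le 4\sqrt{2}\,\fnorm{U - \wt{U}}$ (\Cref{lem:removing_CZgates}). The only differences are cosmetic: you run the lightcone step in the Heisenberg picture with Pauli orthogonality and telescope at the Choi level, whereas the paper factors the circuit as $C = VU_{L}$ to localize the channel and telescopes the unitaries before applying the Choi--Lipschitz bound once.
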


As discussed in \Cref{subsec:technical-overview}, our argument will proceed in two steps: We will first establish that if a depth-$d$ $\QACZ$ circuit does not have any Toffoli gates of width at least $k^{1/d}$, then its Choi representation has no Pauli weight above level $k$. Next, we will show that deleting such ``wide'' Toffoli gates does not affect the action of the circuit by much.

We start by establishing that circuits with narrow gates have all their Pauli weight on low levels:

\begin{lemma}\label{lem:weight_conc_small_gates}
    Suppose $C$ is a depth-$d$ $\QAC$ circuit such that each gate acts on at most $\ell$ qubits, for some $\ell \geq 2$, and let $\psi$ be an auxiliary state.
    Let $\choir_{C, \psi}$ be as in \Cref{def:U-aux-choirep}.
    Then we have 
    \[\weight{> \ell^d +1}[\choir_{C, \psi}] = 0.\]
\end{lemma}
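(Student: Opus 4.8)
The plan is to reduce the claim about the Pauli support of $\choir_{C,\psi}$ to a lightcone statement about a single-qubit observable evolved in the Heisenberg picture. First I would establish a reduction formula expressing each Pauli coefficient of the Choi representation as a single expectation value. Writing a Pauli on the $(n+1)$-qubit Choi space as $P\otimes Q$ with $P\in\paulis[n]$ on the ``$\inn$'' register and $Q\in\{I,X,Y,Z\}$ on the ``$\out$'' register, a direct computation using $\proj{\EPR_n}=\sum_{x,y}\ketbra{x}{y}\otimes\ketbra{x}{y}$ together with the transpose identity $\sum_{x,y}\langle y|P|x\rangle\ketbra{x}{y}=P^\top$ gives
\[
\wh{\choir_{C,\psi}}(P\otimes Q)=\frac{1}{2^{n+1}}\tr\sbra{Q\,\chan{C,\psi}(P^\top)}.
\]
Unfolding the channel $\chan{C,\psi}(\rho)=\tr_{\out^c}\pbra{C(\rho\otimes\psi)C^\dagger}$ and pushing the partial trace over $\out^c$ back into the full $(n+a)$-qubit trace, this becomes
\[
\wh{\choir_{C,\psi}}(P\otimes Q)=\frac{1}{2^{n+1}}\tr\sbra{\,C^\dagger\,\wt{Q}\,C\,\pbra{P^\top\otimes\psi}\,},
\]
where $\wt{Q}$ denotes the operator acting as $Q$ on the single output qubit and as the identity on the remaining $n+a-1$ circuit qubits.

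Second, I would invoke a backward-lightcone argument on $C^\dagger\,\wt{Q}\,C$. Since $\wt{Q}$ is supported on one qubit, and $C$ has depth $d$ with each multi-qubit gate touching at most $\ell$ qubits, the set $L\subseteq[n+a]$ of circuit qubits on which $C^\dagger\,\wt{Q}\,C$ acts nontrivially grows by a multiplicative factor of at most $\ell$ per layer of multi-qubit gates (the single-qubit layers $L_0,\dots,L_d$ leave the support unchanged). Hence $|L|\le\ell^d$, and we may write $C^\dagger\,\wt{Q}\,C=O_L\otimes I_{L^c}$ for some operator $O_L$ supported on the qubits in $L$.

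Finally I would conclude by a ``partial trace kills off-lightcone Paulis'' argument. Using $\tr\sbra{(O_L\otimes I_{L^c})M}=\tr_L\sbra{O_L\,\tr_{L^c}(M)}$ with $M=P^\top\otimes\psi$, note that $P^\top$ is (up to sign) a Pauli tensor with $\supp(P^\top)=\supp(P)\subseteq[n]$, and that the transpose distributes over the tensor factors. If $|P|>\ell^d\ge|L|$, then $\supp(P)$ must contain some input qubit $j\notin L$; tracing that qubit out against the identity contributes the factor $\tr\sbra{P_j^\top}=\tr\sbra{P_j}=0$, so the entire coefficient vanishes. Thus every nonzero coefficient has $\supp(P)\subseteq L$, forcing $|P|\le\ell^d$, and since $|Q|\le1$ the total degree satisfies $|P\otimes Q|=|P|+|Q|\le\ell^d+1$. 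This is precisely $\weight{>\ell^d+1}[\choir_{C,\psi}]=0$.

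The conceptual heart is the standard lightcone bound, which is elementary here; I expect the only points requiring genuine care to be bookkeeping. In particular: (i) tracking the two distinct register conventions---the $n+a$ physical circuit qubits versus the $n+1$ qubits of $\choir_{C,\psi}$---and remembering that the auxiliary state $\psi$ lives on circuit qubits that never appear in the Choi representation; and (ii) making precise that $L$ may intersect both the input and auxiliary qubits, while the only support condition that matters is $\supp(P)\cap([n]\setminus L)$, since $P$ acts solely on the input qubits. The reduction formula (where transposes and signs must be tracked) and the partial-trace vanishing are where register labels demand attention, but neither poses a real obstacle.
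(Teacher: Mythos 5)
Your proof is correct. It rests on the same combinatorial fact as the paper's proof---the backwards lightcone of the output qubit in a depth-$d$ circuit with $\ell$-local gates contains at most $\ell^d$ qubits---but executes it in the dual (Heisenberg) picture rather than at the level of the circuit itself. The paper decomposes the circuit as $C = VU_L$, where $U_L$ is the sub-circuit induced by all directed paths into the target qubit and $V$ is everything else; since $V$ acts trivially on the output qubit it cancels under the partial trace, so $\chan{C,\psi}=\chan{U_L,\psi}$ depends on at most $k \le \ell^d$ input qubits, the Choi representation factors as $I^{\otimes(n-k)}\otimes\choir_{\calE_k}$, and the degree bound $k+1$ follows by dimension counting. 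You instead write each coefficient as
\[
\wh{\choir_{C,\psi}}(P\otimes Q)=\frac{1}{2^{n+1}}\tr\sbra{C^\dagger\,\wt{Q}\,C\,\pbra{P^\top\otimes\psi}},
\]
bound the support $L$ of the Heisenberg-evolved observable $C^\dagger\wt{Q}C$ by $\ell^d$, and kill every coefficient with $\supp(P)\not\subseteq L$ via tracelessness of non-identity Paulis. The two routes are equivalent in content: both in fact prove the stronger statement that all Pauli weight sits on Paulis supported inside the lightcone (plus the output qubit), not merely that weight vanishes above degree $\ell^d+1$. The paper's version buys the explicit structural form of the Choi matrix with no Pauli algebra at all; yours buys a reusable reduction formula $\wh{\choir_{C,\psi}}(P\otimes Q)=\frac{1}{2^{n+1}}\tr\sbra{Q\,\chan{C,\psi}(P^\top)}$ (which the paper never states) and handles the auxiliary state with zero friction, since $\psi$ merely sits inside the trace and never needs to factor or be product across the auxiliary register. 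Your bookkeeping of the two register conventions and of the transpose is accurate throughout.
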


Note in particular that the bound in \Cref{lem:weight_conc_small_gates} is independent of the number of input qubits~$n$ as well as the number of auxiliary qubits. The proof proceeds via a straightforward lightcone argument (cf. \Cref{fig:lightcone} for lightcone diagram):

\begin{proof}
    Let $C$ be the circuit as described in \Cref{lem:weight_conc_small_gates}. We can view $C$ as a directed graph where each edge is oriented along the direction of computation. Let $L$ be the sub-circuit of $C$ induced by all directed paths connecting the input wires of $C$ with the target qubit (which, by convention, is the last qubit).

    Let $U_{L}$ be the unitary that  $L$ implements and let $V$ be the unitary corresponding to the rest of the circuit. By definition, we have 
    \[C = V U_{L}.\] 
    Since $V$ acts trivially on the target qubit, we have that
    \begin{align*}
        \chan{C, \psi}(\rho) &= \tr_{\inn}\pbra{\pbra{VU_{L}} \,(\rho \otimes \psi)\,\pbra{VU_{L}}^\dagger}
        = \tr_{\inn}\pbra{U_{L} (\rho \otimes \psi)\,U_{L}^\dagger}
        = \chan{U_{L}, \psi}(\rho).
    \end{align*}
    Since $C$ has depth $d$ and gates acting on at most $\ell$ qubits, the size of $L$ is at most $\ell^d$. As $U_{L}$ acts non-trivially on at most $\ell^d$ qubits, $\calE_{U_L, \psi}$, only depends on at most $k = \ell^d$ of the input qubits. Without loss of generality suppose these are the last $k$ qubits, then we can write the channel as $\calE_{C, \psi}(\rho) = \calE_k \pbra{\tr_{\{1, \dots, n-k\}}(\rho)}$ for some $k$ to $1$ -qubit channel $\calE_k$. And 
    \begin{align*}
        \choir_{C, \psi} = I^{\otimes (n-k)}\otimes \choir_{\calE_k}.
    \end{align*}
    Since $\choir_{\calE_k} \in \calM_{2^{k+1}}$, it follows that $\choir_{C, \psi}$ has Pauli degree at most $k +1 = 2^d +1$.
\end{proof}

\Cref{lem:weight_conc_small_gates} tells us that upon removing all the wide gates in a circuit, the resulting channel will have all its weight on low levels. The following lemma shows that in fact (in the auxiliary-free setting) the channel obtained by removing wide gates is close to the original:

\begin{lemma}\label{lem:removing_CZgates}
    Suppose $C \in \calU_N$ is a $\QAC$ circuit with $m$ multi-qubit $\CZ$ gates, each acting on at least $\ell$ qubits. Let $\wt{C} \in \calU_N$ be the circuit obtained by removing these $m$ $\CZ$ gates. Then
    \begin{align}
        \norm{\wh{\choir_{C}} - \wh{\choir_{\wt{C}}}}_2^2\leq  O\left(\frac{m^2}{2^\ell}\right) \label{eqn:lem:removing_CZgates:choirepdist}
    \end{align}
    and
    \begin{align} \label{eqn:weights}
        \weight{> k}\sbra{\choir_{C}} \leq \left(\sqrt{\weight{>k}\sbra{\choir_{\wt{C}}}} + O\left(\sqrt{m^2/2^\ell}\right)\right)^2
    \end{align}
\end{lemma}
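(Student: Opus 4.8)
The plan is to establish the Frobenius-type bound \eqref{eqn:lem:removing_CZgates:choirepdist} first, and then read off \eqref{eqn:weights} from it by a triangle inequality. For the latter, write $\wh{M}^{>k}$ for the vector of Pauli coefficients of $M$ of degree exceeding $k$, so that $\norm{\wh{M}^{>k}}_2^2 = \weight{>k}[M]$. Since $\norm{\,\cdot\,}_2$ restricted to the degree-$>k$ coordinates is a genuine norm, the triangle inequality gives
\[\sqrt{\weight{>k}[\choir_{C}]} \;\le\; \sqrt{\weight{>k}[\choir_{\wt{C}}]} + \norm{\pbra{\wh{\choir_{C}}-\wh{\choir_{\wt{C}}}}^{>k}}_2 \;\le\; \sqrt{\weight{>k}[\choir_{\wt{C}}]} + \norm{\wh{\choir_{C}}-\wh{\choir_{\wt{C}}}}_2,\]
and \eqref{eqn:lem:removing_CZgates:choirepdist} bounds the last term by $O(\sqrt{m^2/2^\ell})$; squaring yields \eqref{eqn:weights}. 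So the entire content is \eqref{eqn:lem:removing_CZgates:choirepdist}, which I would prove by a hybrid argument over the wide gates. Order the $m$ wide gates as they appear in $C$ and write $C = W_m\,\CZ_{S_m}\,W_{m-1}\cdots W_1\,\CZ_{S_1}\,W_0$, where each $W_i$ collects the single-qubit gates and narrow $\CZ$ gates and each $\CZ_{S_t}$ acts on $|S_t|\ge \ell$ qubits. Let $C^{(t)}$ be the circuit obtained by replacing $\CZ_{S_{t+1}},\dots,\CZ_{S_m}$ by the identity, so $C^{(m)} = C$ and $C^{(0)} = \wt{C}$. With $L := W_m\cdots W_t$ and $R := W_{t-1}\,\CZ_{S_{t-1}}\cdots \CZ_{S_1}\,W_0$, consecutive hybrids satisfy $C^{(t)} = L\,\CZ_{S_t}\,R$ and $C^{(t-1)} = LR$, i.e.\ they differ by the insertion of one wide gate between two unitaries. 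By the triangle inequality it suffices to bound $\norm{\wh{\choir_{C^{(t)}}}-\wh{\choir_{C^{(t-1)}}}}_2$ for each $t$.

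For the single-gate estimate, abbreviate $G := \CZ_{S_t}$, $S := S_t$, and $\Psi := I^{\otimes(n-1)}\otimes \proj{\bellstate_1}$. Using \Cref{fact:choirep-altdef} together with $G^\top = G^* = G$, one computes
\[\choir_{LGR} = (R^\top\otimes I)\,(G\otimes I)\,\choir_{L}\,(G\otimes I)\,(R^\top\otimes I)^\dagger, \qquad \choir_{LR} = (R^\top\otimes I)\,\choir_{L}\,(R^\top\otimes I)^\dagger,\]
where $\choir_{L} = (L^\top\otimes I)\Psi(L^*\otimes I)$. Since $R^\top\otimes I$ is unitary and the Frobenius norm is unitarily invariant, the surrounding circuit cancels and
\[\fnorm{\choir_{LGR}-\choir_{LR}} = \fnorm{(G\otimes I)\,\choir_{L}\,(G\otimes I)-\choir_{L}}.\]
Writing $P_S$ for the projector onto computational basis states whose bits indexed by $S$ are all $1$, we have $G = \CZ_S = I - 2P_S$; and since $(G\otimes I)^2 = I$, the right-hand side equals $\fnorm{[\,G\otimes I,\,\choir_{L}\,]} = 2\fnorm{[\,P_S\otimes I,\,\choir_{L}\,]} \le 4\,\fnorm{(P_S\otimes I)\,\choir_{L}}$.

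Two structural features of $\choir_{L}$ finish the estimate. First, $\Psi$ has all nonzero eigenvalues equal to $2$ (as $\proj{\bellstate_1}$ is twice a rank-one projector), so $\choir_L$, being a unitary conjugate of $\Psi$, satisfies $\choir_{L} = 2\Pi$ for a projector $\Pi$, whence $\choir_{L}^2 = 2\choir_{L}$. Second, trace-preservation of $\chan{L}$ gives $\tr_{\out}\choir_{L} = I^{\otimes n}$. Therefore
\[\fnorm{(P_S\otimes I)\choir_{L}}^2 = \tr\pbra{(P_S\otimes I)\choir_{L}^2} = 2\,\tr\pbra{(P_S\otimes I)\choir_{L}} = 2\,\tr_{\inn}\pbra{P_S\,\tr_{\out}\choir_{L}} = 2\tr(P_S) = 2^{\,n-|S|+1} \le 2^{\,n-\ell+1}.\]
Combining with Parseval (\Cref{eq:qu_parseval_plancharel}) on the $(n+1)$-qubit space, each hybrid step contributes $\norm{\wh{\choir_{C^{(t)}}}-\wh{\choir_{C^{(t-1)}}}}_2 = 2^{-(n+1)/2}\,\fnorm{\choir_{C^{(t)}}-\choir_{C^{(t-1)}}} \le O(2^{-\ell/2})$; summing over $t = 1,\dots,m$ and squaring gives \eqref{eqn:lem:removing_CZgates:choirepdist}.

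The main obstacle, and the crux of the argument, is the reduction from ``deleting $m$ gates at once'' to an isolated single-gate conjugation: unitary invariance of the Frobenius norm lets us strip away the entire surrounding circuit ($L$ and $R$), leaving only a commutator of the \emph{local} operator $\CZ_S\otimes I$ with the Choi operator $\choir_{L}$. The quantitative $2^{-\ell}$ saving then rests on the rigid spectral structure of Choi representations ($\choir_{L} = 2\Pi$) together with the fact that $\CZ_S$ deviates from the identity only on the rank-$2^{\,n-|S|}$ subspace $P_S$; the care needed is in checking that the projector identity $\choir_L^2 = 2\choir_L$ and the trace-preservation identity $\tr_{\out}\choir_L = I^{\otimes n}$ both survive the passage from the raw circuit to $\choir_{L}$.
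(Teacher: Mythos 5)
Your proof is correct, but it reaches \Cref{eqn:lem:removing_CZgates:choirepdist} by a genuinely different route than the paper. The overall architecture is shared: the paper also runs a hybrid argument over the $m$ wide gates, and it also deduces \Cref{eqn:weights} from \Cref{eqn:lem:removing_CZgates:choirepdist} by exactly the triangle-inequality manipulation you use. The difference is in the core per-gate estimate. The paper factors everything through the \emph{unitary} distance: it first proves $\frac{1}{2^n}\fnorm{U-\wt{U}}^2 \leq 4m^2/2^\ell$, where each hybrid step costs exactly $\fnorm{I-\CZ_{S_a}} = 2\cdot 2^{(n-\abs{S_a})/2}$ by unitary invariance, and then proves a separate Lipschitz-type claim $\fnorm{\choir_{U}-\choir_{\wt{U}}} \leq 4\sqrt{2}\,\fnorm{U-\wt{U}}$ using \Cref{fact:choirep-altdef}, the triangle inequality, and the fact that projectors do not increase Frobenius norm. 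You never bound $\fnorm{U-\wt{U}}$ at all: you run the hybrid directly on the Choi representations, use conjugation-invariance to strip away the surrounding circuit $L$ and $R$, and then control the remaining commutator $\fnorm{\sbra{\CZ_{S}\otimes I,\,\choir_{L}}}$ by the exact computation $\fnorm{(P_S\otimes I)\choir_{L}}^2 = 2\tr(P_S) = 2^{\,n-\abs{S}+1}$, which rests on two structural facts the paper never needs: the projector identity $\choir_{L}^2 = 2\choir_{L}$ and the trace-preservation identity $\tr_{\out}\choir_{L} = I^{\otimes n}$ (both of which you justify correctly; your use of $\CZ_S^\top = \CZ_S^* = \CZ_S$ is also valid since $\CZ_S$ is real and diagonal). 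The trade-off: the paper's route is more modular, since the unitary perturbation bound and the unitary-to-Choi Lipschitz claim are independently reusable and require no positivity or trace-preservation of the Choi operator, whereas your route exploits the rigidity of Choi representations to localize the per-gate loss exactly on the rank-$2^{\,n-\abs{S}}$ subspace where $\CZ_S$ differs from the identity, yielding a marginally tighter constant and a more intrinsic explanation of the $2^{-\ell}$ decay. Both arguments give the same $O(m^2/2^\ell)$ bound.
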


Before proving~\Cref{lem:removing_CZgates}, we show how the auxiliary-free concentration bound, \Cref{thm:anc_free_concentration}, follows immediately from \Cref{lem:weight_conc_small_gates,lem:removing_CZgates}.

\begin{proof}[Proof of \Cref{thm:anc_free_concentration}]
    Let $\ell := \ceil{(k-1)^{1/d}}$. Let $\wt{C}$ be the circuit resulting from removing all $m$ of the large $\CZ$ gates (size $\geq \ell$) from $U$. By \Cref{lem:weight_conc_small_gates} we have that
    \[\weight{>k}\sbra{\choir_{\wt{C}}} \leq \weight{>(k-1)^{1/d}}\sbra{\choir_{\wt{C}}} = 0.\] Furthermore, from~\Cref{eqn:weights} (cf. \Cref{lem:removing_CZgates}), we have that 
    \[\weight{>k}\sbra{\choir_{C}} \leq \pbra{\weight{>k}\sbra{\choir_{\wt{C}}} + O\pbra{\sqrt{m^2/2^\ell}}}^2\leq  \frac{32m^2}{2^\ell} \leq O\pbra{\frac{m^2}{2^\ell}},\]
    completing the proof.
\end{proof}

We now turn to the proof of \Cref{lem:removing_CZgates}:

\begin{proof}[Proof of \Cref{lem:removing_CZgates}]
    Let $U, \wt{U} \in \unitary{N}$ be the unitaries corresponding to circuits $C$ and $\wt{C}$ as defined in the statement of \Cref{lem:removing_CZgates}. We will prove the lemma via a sequence of three norm inequalities; we will frequently make use of the unitary invariance of the Frobenius norm.  We start by upper bounding $\fnorm{U - \wt{U}}$.
    \begin{claim}\label{claim:rem_cz_unitary_distance}
        We have $\frac{1}{2^n}\norm{U - \wt{U}}_F^2 \leq 4m^2/2^\ell$.
    \end{claim}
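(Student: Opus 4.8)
The plan is to prove the claim by a \emph{hybrid (telescoping) argument}, swapping out the $m$ wide $\CZ$ gates one at a time and bounding the cost of each individual swap in normalized Frobenius norm. First I would serialize the circuit so that the $m$ wide $\CZ$ gates occur in some fixed order $C_1, \dots, C_m$ (each $C_j = \CZ_{S_j}$ with $|S_j| \geq \ell$), and write
\[
U = B_m \, C_m \, B_{m-1} \, C_{m-1} \cdots B_1 \, C_1 \, B_0,
\]
where the $B_i$ collect all the remaining single-qubit and narrow gates between consecutive wide gates (these are unitary). The pruned circuit is then $\wt U = B_m B_{m-1} \cdots B_1 B_0$, obtained by replacing each $C_j$ with $I$.

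For $0 \le j \le m$, define the hybrid unitary $U^{(j)}$ that keeps $C_1,\dots,C_j$ but replaces $C_{j+1},\dots,C_m$ by the identity, so that $U^{(m)} = U$ and $U^{(0)} = \wt U$. Consecutive hybrids differ only in whether the $j$-th wide gate is present, so we may factor $U^{(j)} = P_j\, C_j\, Q_j$ and $U^{(j-1)} = P_j\, Q_j$ for unitaries $P_j, Q_j$. Using unitary invariance of the Frobenius norm twice,
\[
\fnorm{U^{(j)} - U^{(j-1)}} = \fnorm{P_j (C_j - I) Q_j} = \fnorm{C_j - I}.
\]

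The key estimate is the per-gate bound. Since $\CZ_{S_j}$ acts as $-1$ exactly on the subspace where all qubits in $S_j$ are set to $1$, we have $\CZ_{S_j} = I_N - 2\,\Pi_j$ where $\Pi_j$ is the projector onto that subspace, of rank $2^{n-|S_j|}$. Hence $\fnorm{\CZ_{S_j} - I}^2 = 4\,\fnorm{\Pi_j}^2 = 4\cdot 2^{n-|S_j|}$, and dividing by $N = 2^n$ gives $\tfrac{1}{2^n}\fnorm{C_j - I}^2 = 4\cdot 2^{-|S_j|} \le 4\cdot 2^{-\ell}$, i.e. $\tfrac{1}{\sqrt{2^n}}\fnorm{C_j - I} \le 2\cdot 2^{-\ell/2}$. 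Combining this with the triangle inequality across the $m$ hybrids yields
\[
\frac{1}{\sqrt{2^n}}\fnorm{U - \wt U} \;\le\; \sum_{j=1}^m \frac{1}{\sqrt{2^n}}\fnorm{U^{(j)} - U^{(j-1)}} \;\le\; m\cdot 2\cdot 2^{-\ell/2},
\]
and squaring gives $\tfrac{1}{2^n}\fnorm{U - \wt U}^2 \le 4m^2/2^\ell$, as claimed.

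I do not expect a genuine obstacle here: the only real content is the per-gate Frobenius estimate, where the exponential width $2^{-|S_j|}$ (rather than a constant) is exactly what makes wide gates cheap to delete; everything else is bookkeeping. The one point requiring a little care is the serialization and factorization of the circuit into the $B_i$ blocks so that the unitary-invariance cancellation is clean, but this is routine. This claim then feeds forward into bounding $\fnorm{\choir_C - \choir_{\wt C}}$ via \Cref{fact:choirep-altdef}, which is the next step toward \Cref{lem:removing_CZgates}.
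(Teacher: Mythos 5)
Your proof is correct and follows essentially the same route as the paper: a telescoping hybrid argument over the $m$ wide gates, unitary invariance of the Frobenius norm to reduce each step to $\fnorm{I - \CZ_{S_j}}$, and the rank-$2^{n-|S_j|}$ projector identity $\CZ_{S_j} = I - 2\Pi_j$ to get the per-gate bound $2\cdot 2^{(n-|S_j|)/2}$. The only differences are cosmetic (your hybrids add gates back while the paper's remove them, and you make the projector's rank explicit where the paper computes the norm directly).
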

    \begin{proof}
        Let $S_1, \dots, S_m\subseteq [n]$ denote the subsets of qubits that each of the $m$ multi-qubit \CZ gates act on. We can decompose $U$ as 
        \[
        U = U_0\, \CZ_{S_1}\, U_1\, \CZ_{S_2} \dots \CZ_{S_m}\, U_{m} \qquad\text{and}\qquad \wt{U} = U_0\, U_1 \dots U_{m},
        \]
        where $U_0, \dots, U_{m}$ are  $n$-qubit unitaries.
        Let $\wt{U}^{(a)}$ be the unitary resulting from removing the first $a$ $\CZ$ gates from $U$, i.e. 
        \[\wt{U}^{(a)} = U_0\, U_1\, \dots U_{a}\, \CZ_{S_{a+1}}\, U_{a+1}\, \dots \CZ_{S_m}\, U_m.\] 
        Note that $\wt{U}^{(0)} = U$ and $\wt{U}^{(m)} = \wt{U}$. By the triangle inequality, we have
        \begin{equation} \label{eq:triangle-ineq-app1}
            \norm{U- \wt{U}}_F = \norm{ \sum_{a = 1}^m \wt{U}^{(a)} - \wt{U}^{(a-1)}}_F
            \leq \sum_{a = 1}^m \norm{ \wt{U}^{(a)} - \wt{U}^{(a-1)}}_F.
        \end{equation}
        Note, however, that due to the unitary invariance of the Frobenius norm, we have
        \[\fnorm{\wt{U}^{(a)} - \wt{U}^{(a-1)}} = \fnorm{I - \CZ_{S_a}}.\]
        Since $\CZ_{S_a} = I - I_{\overline{S}_a}\otimes2\proj{1^{|S_a|}}$, we have that 
        \[\fnorm{\wt{U}^{(a)} - \wt{U}^{(a-1)}} = \norm{2 \cdot I_{\overline{S}_a}\otimes \proj{1^{|S_a|}}_{S_a}}_F = 2 \cdot 2^{(n-|S_a|)/2}.\]
        Combining this with \Cref{eq:triangle-ineq-app1} and recalling that $|S_a|\geq \ell$, we get that
        \begin{align*}
            \fnorm{U - \wt{U}} & \leq \sum_{a=1}^m \fnorm{\wt{U}^{(a)} - \wt{U}^{(a-1)}}\leq 2m2^{(n-\ell)/ 2},
        \end{align*}
        from which the bound follows. 
    \end{proof}

    We next obtain an upper bound on $\|\choir_U - \choir_{\wt{U}}\|_F$ in terms of $\|U - \wt{U}\|_F$.
    \begin{claim}\label{claim:choi_distance_from_unitary_distance}
        We have 
        \[\norm{\choir_{U} - \choir_{\wt{U}}}_F \leq {4\sqrt{2}}\norm{U - \wt{U}}_F.\]
    \end{claim}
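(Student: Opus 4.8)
The plan is to start from the alternative expression for the Choi representation in \Cref{fact:choirep-altdef}, namely $\choir_U = (U^\top \otimes I)\, P\, (U^* \otimes I)$ where $P := I^{\otimes (n-1)} \otimes \proj{\bellstate_1}$ is the \emph{same} (non-unitary) operator for both $U$ and $\wt{U}$. Since the difference $\choir_U - \choir_{\wt U}$ depends on the unitary only through the two outer factors, I would insert the hybrid term $(\wt U^\top \otimes I)\,P\,(U^*\otimes I)$ and split $\choir_U - \choir_{\wt U} = \big[(U^\top-\wt U^\top)\otimes I\big]\,P\,(U^*\otimes I) + (\wt U^\top\otimes I)\,P\,\big[(U^*-\wt U^*)\otimes I\big]$, then bound each summand by the triangle inequality.

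The key estimate for each summand is the submultiplicative bound $\fnorm{AYB}\le \fnorm{A}\,\norm{Y}_{\mathrm{op}}\,\norm{B}_{\mathrm{op}}$ (and its mirror image $\fnorm{AYB}\le \norm{A}_{\mathrm{op}}\,\norm{Y}_{\mathrm{op}}\,\fnorm{B}$), applied with $Y = P$. The point is that although $P$ is not unitary---so the unitary invariance of the Frobenius norm exploited in \Cref{claim:rem_cz_unitary_distance} cannot be used directly here---it has a controlled spectral norm: since $\proj{\bellstate_1} = \ketbra{\bellstate_1}{\bellstate_1}$ with $\norm{\ket{\bellstate_1}}^2 = 2$, we have $\norm{P}_{\mathrm{op}} = 2$. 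For the first summand I would keep $(U^\top-\wt U^\top)\otimes I$ in Frobenius norm, using $\fnorm{(U^\top-\wt U^\top)\otimes I} = \fnorm{U-\wt U}\cdot\fnorm{I_2} = \sqrt 2\,\fnorm{U-\wt U}$ (transpose preserves the Frobenius norm and $\fnorm{I_2} = \sqrt 2$), while $\norm{U^*\otimes I}_{\mathrm{op}} = 1$ because it is unitary; this yields $\sqrt 2 \cdot 2 \cdot 1 \cdot \fnorm{U-\wt U}$. An identical computation, now keeping the conjugated difference $(U^*-\wt U^*)\otimes I$ in Frobenius norm (entrywise conjugation also preserves $\fnorm{\cdot}$) and placing the unitary $\wt U^\top\otimes I$ in spectral norm, bounds the second summand by the same quantity. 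Adding the two gives $\fnorm{\choir_U - \choir_{\wt U}} \le 4\sqrt 2\,\fnorm{U-\wt U}$, matching the claim.

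The only genuine subtlety---and the one step to get right---is the appearance of the non-unitary factor $P$: one must resist discarding it via unitary invariance and instead account for its spectral norm $\norm{P}_{\mathrm{op}} = 2$ together with the $\fnorm{I_2} = \sqrt 2$ coming from the single-qubit identity on the output register, which are precisely what produce the constant $4\sqrt 2$. Everything else is routine bookkeeping: the triangle inequality, submultiplicativity of the Frobenius norm against the spectral norm, and the invariance of $\fnorm{\cdot}$ under transpose and entrywise conjugation.
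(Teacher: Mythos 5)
Your proof is correct and follows essentially the same route as the paper: the same starting point (\Cref{fact:choirep-altdef}), the same hybrid decomposition of $\choir_U - \choir_{\wt U}$ into two terms, and the same constant accounting ($2 \times 2 \times \sqrt{2}$). The only difference is cosmetic: where you bound the middle factor via $\fnorm{AYB}\le \fnorm{A}\,\norm{Y}_{\mathrm{op}}\,\norm{B}_{\mathrm{op}}$ with $\norm{P}_{\mathrm{op}}=2$, the paper drops the unitary factors by unitary invariance and then uses that $\tfrac{1}{2}\rho_0$ is a projector (which cannot increase the Frobenius norm) --- an equivalent way of extracting the same factor of $2$.
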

    \begin{proof}
        Recall from \Cref{eq:altdef_choirep} that we can express $\choir_U$ and $\choir_{\wt{U}}$ as
        \begin{align*}
            \choir_{U} = (I \otimes U^\top) \rho_0 (I \otimes U^*) && \choir_{\wt{U}} = (I \otimes \wt{U}^\top) \rho_0 (I \otimes \wt{U}^*)
        \end{align*}
        for $\rho_0 := \tr_{\junk}\left(\proj{\bellstate_n}\right)$
        We can thus write
    \begin{align*}
        \norm{\choir_{U} - \choir_{\wt{U}}}_F &= \norm{(I \otimes U^\top) \rho_0 (I \otimes U^*) - (I \otimes \wt{U}^\top) \rho_0 (I \otimes \wt{U}^*)}_F\\
        &= \left\lVert (I \otimes U^\top) \rho_0 (I \otimes (U^* - \wt{U}^*)) + 
        (I \otimes (U^\top - \wt{U}^\top)) \rho_0 (I \otimes \wt{U}^*) \right\rVert_F\\
        &\leq \norm{\rho_0 \left(I \otimes (U^* - \wt{U}^*)\right)}_F + \norm{\rho_0 \left(I \otimes (U^\top - \wt{U}^\top)\right)}_F\\
        \intertext{thanks to the triangle inequality, and so}
        & = 2\cdot \norm{\rho_0\left(I \otimes (U - \wt{U} )\right)}_F.
    \end{align*} 
    
    Note that $\rho_0 = \tr_{\inn}\left(\proj{\bellstate_n} \right)$ can be written---up to a permutation of qubits---as 
    \[I^{\otimes n-1}\otimes \proj{\bellstate_1}.\] Furthermore, note that $\frac{I^{\otimes n-1}}{2} \otimes \proj{\bellstate_1}$ is a projector,\footnote{The $1/2$ is necessary because $\ket{\bellstate_1} = \ket{00} + \ket{11}$ is un-normalized.} and since projectors do not increase Frobenius norm,\footnote{To see that projectors do not increase Frobenius norm, write $P = \sum_v \lambda_v \proj{v}$ in orthonormal basis $\{\ket{v}\}$ with eigenvalues $\lambda_v \in \{0,1\}$; then $\norm{AP}_F^2 = \sum_v \bra{v} P^\dagger A^\dagger A P \ket{v} = \sum_v \lambda_v \bra{v} A^\dagger A \ket{v} \leq \tr(A^\dagger A)$.} it follows that 
    \begin{align*}
        \norm{\choir_{U} - \choir_{\wt{U}}}_F &\leq 4  \norm{I \otimes (U - \wt{U} )}_F\\
        &\leq 4 \sqrt{2}  \norm{U - \wt{U}}_F
    \end{align*}
    where the last inequality used the sub-multiplicativity of the Frobenius norm and the fact that $\norm{I}_F^2 = 2$.
    \end{proof}

    Note that the first item in \Cref{lem:removing_CZgates} (i.e. \Cref{eqn:lem:removing_CZgates:choirepdist}) follows immediately from \Cref{claim:rem_cz_unitary_distance,claim:choi_distance_from_unitary_distance}:
    \[\frac{1}{2^n} \fnorm{\choir_U - \choir_{\wt{U}}}^2 \leq 32 \cdot \frac{m^2}{2^\ell}  \leq O\pbra{\frac{m^2}{2^{\ell}}}.\]
    
    To prove the second item in \Cref{lem:removing_CZgates} (i.e. \Cref{eqn:weights}), we will upper bound  $\weight{>k}\sbra{\choir_{U}}$ by continuing to build on \Cref{claim:choi_distance_from_unitary_distance}.

    \begin{claim}
        Suppose $\frac{1}{2^n}\norm{U - \wt{U}}_F^2 \leq \delta^2$ and $\weight{>k}\sbra{\choir_{\wt{U}}} \leq \epsilon^2$. Then we have 
        \[\weight{>k}\sbra{\choir_{U}} \leq (\epsilon + 4\sqrt{2}\delta )^2.\]
    \end{claim}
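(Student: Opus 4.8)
The plan is to recognize that $\sqrt{\weight{>k}[\,\cdot\,]}$ behaves like a seminorm, so that the statement reduces—via a single triangle inequality—to the already-established \Cref{claim:choi_distance_from_unitary_distance} together with Parseval's formula. Concretely, for any matrix $A$ the quantity $\sqrt{\weight{>k}[A]} = \big(\sum_{|P|>k}|\wh{A}(P)|^2\big)^{1/2}$ is precisely the $\ell_2$-norm of the vector of high-degree Pauli coefficients $(\wh{A}(P))_{|P|>k}$. Since $A\mapsto\wh{A}$ is linear and restriction to the coordinates $|P|>k$ is linear, Minkowski's inequality applies, and I would first record the triangle inequality
\[
\sqrt{\weight{>k}\sbra{\choir_{U}}} \;\leq\; \sqrt{\weight{>k}\sbra{\choir_{\wt{U}}}} \;+\; \sqrt{\weight{>k}\sbra{\choir_{U} - \choir_{\wt{U}}}}.
\]

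The first term on the right is at most $\epsilon$ by hypothesis, so the work is entirely in the second term. Here I would bound the high-degree weight by the total weight and invoke Parseval (\Cref{eq:qu_parseval_plancharel}) for the $(\nin+1)$-qubit Choi operator, giving $\weight{>k}\sbra{\choir_{U} - \choir_{\wt{U}}} \leq \sum_{P}\abs{\wh{(\choir_{U}-\choir_{\wt{U}})}(P)}^2 = \tfrac{1}{2^{\nin+1}}\fnorm{\choir_{U} - \choir_{\wt{U}}}^2$. Chaining \Cref{claim:choi_distance_from_unitary_distance}, which gives $\fnorm{\choir_{U}-\choir_{\wt{U}}} \leq 4\sqrt{2}\,\fnorm{U-\wt{U}}$, with the hypothesis $\tfrac{1}{2^{\nin}}\fnorm{U-\wt{U}}^2 \leq \delta^2$, I obtain
\[
\frac{1}{2^{\nin+1}}\fnorm{\choir_{U} - \choir_{\wt{U}}}^2 \;\leq\; \frac{32}{2^{\nin+1}}\fnorm{U-\wt{U}}^2 \;=\; \frac{16}{2^{\nin}}\fnorm{U-\wt{U}}^2 \;\leq\; 16\delta^2,
\]
so that $\sqrt{\weight{>k}\sbra{\choir_{U} - \choir_{\wt{U}}}} \leq 4\delta \leq 4\sqrt{2}\,\delta$. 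Substituting the two bounds into the triangle inequality and squaring yields $\weight{>k}\sbra{\choir_{U}} \leq (\epsilon + 4\sqrt{2}\,\delta)^2$, as desired.

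The only conceptual point that needs care is justifying the triangle inequality for $\sqrt{\weight{>k}}$, i.e. that it is a genuine seminorm; this is immediate once one views it as the $\ell_2$-norm of a linear image of $A$, and everything after that is routine bookkeeping against the preceding claim and Parseval. I would also remark that the computation in fact produces the tighter constant $4\delta$, so the stated factor $4\sqrt{2}\,\delta$ is satisfied with room to spare—the slack simply matching the constant carried through from \Cref{claim:choi_distance_from_unitary_distance}.
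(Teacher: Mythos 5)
Your proof is correct and takes essentially the same route as the paper's: a triangle inequality for the seminorm $\sqrt{\weight{>k}[\cdot]}$, followed by Parseval's formula (\Cref{eq:qu_parseval_plancharel}) to control $\weight{>k}\sbra{\choir_U - \choir_{\wt{U}}}$ by the Frobenius distance, and then \Cref{claim:choi_distance_from_unitary_distance} together with the hypothesis on $\fnorm{U-\wt{U}}$. Your observation that the argument actually yields the sharper constant $4\delta$ also matches the paper, whose final display likewise arrives at $(\epsilon+4\delta)^2$.
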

    
    \begin{proof} 
        Using the triangle inequality, we have that for all $k$, the following holds:
        \[\weight{>k}\sbra{A + B} \leq \left( \weight{>k}\sbra{A}^{1/2} + \weight{>k}\sbra{B}^{1/2}\right)^2.\]
        Consequently, we can write 
        \begin{align*}
            \weight{>k}\left[\choir_{U}\right] &= \weight{>k}\left[\choir_{\wt{U}} + \left( \choir_{U} - \choir_{\wt{U}}\right)\right]\\
            &\leq \left(\weight{>k}\left[\choir_{\wt{U}}\right]^{1/2} + \weight{>k}\left[\choir_{U} - \choir_{\wt{U}}\right]^{1/2} \right)^2\\
            &\leq \left(\weight{>k}\left[\choir_{\wt{U}}\right]^{1/2} + \frac{1}{\sqrt{2^{n+1}}}\norm{\choir_{U} - \choir_{\wt{U}}}_F \right)^2\\
            &\leq \left(\epsilon + 4\delta\right)^2
        \end{align*}
        where the second inequality follows from Parseval's formula (\Cref{eq:qu_parseval_plancharel}), and the final inequality relies on \Cref{claim:choi_distance_from_unitary_distance}.
    \end{proof}

This completes the proof of \Cref{lem:removing_CZgates}.
\end{proof}

\subsection{Postselecting for Auxiliary States}
\label{subsec:postselecting}

In this section, we investigate how the spectral concentration of channels is affected when we allow them to make use of an \textit{auxiliary state} (independent of the input). That is, channels of the form
\begin{align}
    \chan{}'(\rho) = \chan{}(\rho \otimes \psi) \qquad\text{for~} \rho \in \calM_N.
\end{align}
where $\chan{}$ is an $(n+a)$ to $1$-qubit channel and $\psi \in \calM_{2^a}$ is an \emph{auxiliary} state. It is helpful to consider auxiliary states as \emph{restricting} the input of the underlying circuit. From this perspective, a $\QACZ$ channel using an auxiliary state is equivalent to an auxiliary-free $\QACZ$ channel with restricted input.

In this section, we will first walk through a straightforward analysis of how the spectral concentration of the Choi representation is affected by either a clean or a dirty auxiliary state. We will then formalize these statements to general auxiliary states.

\subsubsection{``{Clean}'' Auxiliary States}

Suppose $U$ is an $(n+a)$-qubit unitary. We use the shorthand $\chan{U, 0^a}$ to denote $\chan{U, \proj{0^a}}$, or in other words
\[
    \chan{U, 0^a}(\rho) := \chan{U}(\rho \otimes \proj{0^a}).
    \]
We emphasize that now $\chan{U}$ is a channel mapping $n+a$ qubits to $1$ qubit, and so its Choi representation $\choir_{\chan{U}}$ is an $(n+a +1)$-qubit (un-normalized) state. We label the input registers of $\choir_{\chan{U}}$ as ``$\inn$'' ($n$ qubits), and ``$\aux$'' ($a$ qubits), and the output registers as $\mathrm{junk}$ ($n+a-1$ qubits) and ``$\out$'' (1 qubit), i.e. 
\begin{align*}
    \choir_{\chan{U}} = \pbra{(I^{\otimes n}_{\inn} \otimes I^{\otimes a}_{\aux}) \otimes (\chan{U})_{\out}} \pbra{\proj{\bellstate_{n+a}}_{(\inn, \aux), \out}}
\end{align*} 
Conveniently, the Choi representation of $\chan{U, 0^a}$ can be expressed as the result of \emph{postselecting} the auxiliary register $\aux$ of the Choi representation of $\chan{U}$. 
\begin{align}
    \choir_{\chan{U, 0^a}} = \bra{0^a}_{\aux} \choir_{\chan{U}}  \ket{0^a}_{\aux} \label{eq:choi_anc_contract}
\end{align}
where we use $\ket{0^a}_{\aux}$ as shorthand for $(I^{\otimes (n+1)}_{\inn\, \cup\, \out} \otimes \ket{0^a}_{\aux})$. This is easy to see from the following few lines, or the tensor network diagrams in \Cref{fig:choi-with-ancilla}.
\begin{align*}
    \bra{0^a}_{\aux} \choir_{\chan{U}}  \ket{0^a}_{\aux} 
    &= \bra{0^a}_{\aux} \pbra{(I^{\otimes n} \otimes \chan{U})\pbra{\proj{\bellstate_{n+a}}}} \ket{0^a}_{\aux}\\
    &= \bra{0^a}_{\aux} \pbra{\sum_{\substack{x, x' \in \zo^n \\ z,z' \in \zo^{a}}} \ketbra{x}{x'}_{\inn}\otimes \ketbra{z}{z'}_{\aux}\otimes \chan{U}\pbra{\ketbra{x}{x'}\otimes \ketbra{z}{z'}}_{\out}} \ket{0^a}_{\aux}\\
    &= \sum_{x, x' \in \zo^n } \ketbra{x}{x'}_{\inn}\otimes \chan{U}\pbra{\ketbra{x}{x'}\otimes \ketbra{0^a}{0^a}}_{\out}\\
    &= \pbra{I^{\otimes n}\otimes \chan{U, 0^a}}\pbra{\proj{\bellstate_n}} 
    = \choir_{\chan{U, 0^a}}.
\end{align*}

\begin{figure}[t]
    \begin{align*}
        {\footnotesize (I^{\otimes n} \otimes \chan{U})\pbra{\ket{\EPR_n}\!\!\bra{\EPR_n} \otimes \proj{0^a}}} 
        &=~\diagram{
		\begin{scope}[scale=2]
                \draw[white] (0,0) -- (0,1.5);
                \draw (-0.75,-0.5) .. controls (-0.25,-0.5) and (-0.25,0.5) .. (-0.75,0.5); 
                \draw (-1,-0.25) .. controls (-0.5,-0.25) and (-0.5,0.25) .. (-1,0.25); 
                \draw (-0.75,0.5) -- (-2,0.5); 
                \draw (-1,0.25) -- (-2,0.25); 
                \draw (-0.5,-0.75) -- (-1.5,-0.75); 
                \draw (-0.75,-0.5) -- (-1.75,-0.5); 
                \draw (-1,-0.25) -- (-2,-0.25); 
                \draw (0.75,-0.5) .. controls (0.25,-0.5) and (0.25,0.5) .. (0.75,0.5); 
                \draw (1,-0.25) .. controls (0.5,-0.25) and (0.5,0.25) .. (1,0.25); 
                \draw (0.75,0.5) -- (2,0.5); 
                \draw (1,0.25) -- (2,0.25); 
                \draw (0.75,-0.5) -- (1.75,-0.5); 
                \draw (1,-0.25) -- (2,-0.25); 
                \draw (0.5,-0.75) -- (1.5,-0.75);
                \draw (1.75,-0.5) .. controls (2,-0.5) and (2,-1.5) .. (1.75,-1.5); 
                \draw (1.5,-0.75) .. controls (1.75,-0.75) and (1.75,-1.25) .. (1.5,-1.25); 
                \draw (-1.5,-0.75) .. controls (-1.75,-0.75) and (-1.75,-1.25) .. (-1.5,-1.25); 
                \draw (-1.75,-0.5) .. controls (-2,-0.5) and (-2,-1.5) .. (-1.75,-1.5); 
                \draw (1.5,-1.25) -- (-1.5,-1.25); 
                \draw (-1.75, -1.5) -- (1.75, -1.5); 
                \draw[ten] (-1.5,-0.85) rectangle (-0.9,-0.1);
                \draw[ten] (1.5,-0.85) rectangle (0.9,-0.1);
                \node at (-1.2,-0.475) {$U$};
                \node at (1.2,-0.475) {$\,U^\dagger$};
                \node[right] at (2.15,0.75) {};
                \node[right] at (2.15,0.375) {\scalebox{.6}{$\inn$}};
                \node[right] at (2.15,-0.25) {\scalebox{.6}{$\out$}};
                \node[right] at (-0.55,-0.75) {\scalebox{.6}{$\ket{0^a}$}};
                \node[left] at (0.55,-0.75) {\scalebox{.6}{$\bra{0^a}$}};
                \draw [decorate,decoration={brace,amplitude=2pt,mirror},yshift=0pt]
(2.15,0.25) -- (2.15,0.5);
		\end{scope}
	}\\
    &=~\diagram{
		\begin{scope}[scale=2]
                \draw[white] (0,0) -- (0,1.5);
                \draw (-0.5,-0.75) .. controls (0,-0.75) and (0,0.75) .. (-0.5,0.75); 
                \draw (-0.75,-0.5) .. controls (-0.25,-0.5) and (-0.25,0.5) .. (-0.75,0.5); 
                \draw (-1,-0.25) .. controls (-0.5,-0.25) and (-0.5,0.25) .. (-1,0.25); 
                \draw (-0.5,0.75) -- (-1.6,0.75);	
                \draw (-0.75,0.5) -- (-2,0.5); 
                \draw (-1,0.25) -- (-2,0.25); 
                \draw (-0.5,-0.75) -- (-1.5,-0.75); 
                \draw (-0.75,-0.5) -- (-1.75,-0.5); 
                \draw (-1,-0.25) -- (-2,-0.25); 
                \draw (0.5,-0.75) .. controls (0,-0.75) and (0,0.75) .. (0.5,0.75); 
                \draw (0.75,-0.5) .. controls (0.25,-0.5) and (0.25,0.5) .. (0.75,0.5); 
                \draw (1,-0.25) .. controls (0.5,-0.25) and (0.5,0.25) .. (1,0.25); 
                \draw (0.5,0.75) -- (1.6,0.75);	
                \draw (0.75,0.5) -- (2,0.5); 
                \draw (1,0.25) -- (2,0.25); 
                \draw (0.5,-0.75) -- (1.5,-0.75); 
                \draw (0.75,-0.5) -- (1.75,-0.5); 
                \draw (1,-0.25) -- (2,-0.25); 
                \draw (0.5,-0.75) .. controls (0,-0.75) and (0,0.75) .. (0.5,0.75);
                \draw (0.5,-0.75) -- (1.5,-0.75);	
                \draw (1.75,-0.5) .. controls (2,-0.5) and (2,-1.5) .. (1.75,-1.5); 
                \draw (1.5,-0.75) .. controls (1.75,-0.75) and (1.75,-1.25) .. (1.5,-1.25); 
                \draw (-1.5,-0.75) .. controls (-1.75,-0.75) and (-1.75,-1.25) .. (-1.5,-1.25); 
                \draw (-1.75,-0.5) .. controls (-2,-0.5) and (-2,-1.5) .. (-1.75,-1.5); 
                \draw (1.5,-1.25) -- (-1.5,-1.25); 
                \draw (-1.75, -1.5) -- (1.75, -1.5); 
                \draw[ten] (-1.5,0.1) rectangle (-0.9,0.85);
                \draw[ten] (1.5,0.1) rectangle (0.9,0.85);
                \node[rotate=180] at (-1.2,0.475) {$U$};
                \node[rotate=180] at (1.2,0.475) {$\,U^\dagger$};
                \node[right] at (1.5,0.75) {\scalebox{.6}{$\ket{0^a}$}};
                \node[left] at (-1.5,0.75) {\scalebox{.6}{$\bra{0^a}$}};
                \node[right] at (2.15,0.375) {\scalebox{.6}{$\inn$}};
                \node[right] at (2.15,-0.25) {\scalebox{.6}{$\out$}};
                \draw [decorate,decoration={brace,amplitude=2pt,mirror},yshift=0pt]
(2.15,0.25) -- (2.15,0.5);                
                \draw [decorate,decoration={brace,amplitude=2pt},yshift=0pt]
(2.15,-0.5) -- (2.15,-1.5); 
                \node [right] (ahhhhhh) at (2.15, -1) {\scalebox{.6}{$\mathrm{junk}$}};
		\end{scope}
	}\\
    &=~\diagram{
		\begin{scope}[scale=2]
                \draw[white] (0,0) -- (0,1.5);
                \draw (-0.5,-0.75) .. controls (0,-0.75) and (0,0.75) .. (-0.5,0.75); 
                \draw (-0.75,-0.5) .. controls (-0.25,-0.5) and (-0.25,0.5) .. (-0.75,0.5); 
                \draw (-1,-0.25) .. controls (-0.5,-0.25) and (-0.5,0.25) .. (-1,0.25); 
                \draw (-0.5,0.75) -- (-1.6,0.75);	
                \draw (-0.75,0.5) -- (-2,0.5); 
                \draw (-1,0.25) -- (-2,0.25); 
                \draw (-0.5,-0.75) -- (-1.5,-0.75); 
                \draw (-0.75,-0.5) -- (-1.75,-0.5); 
                \draw (-1,-0.25) -- (-2,-0.25); 
                \draw (0.5,-0.75) .. controls (0,-0.75) and (0,0.75) .. (0.5,0.75); 
                \draw (0.75,-0.5) .. controls (0.25,-0.5) and (0.25,0.5) .. (0.75,0.5); 
                \draw (1,-0.25) .. controls (0.5,-0.25) and (0.5,0.25) .. (1,0.25); 
                \draw (0.5,0.75) -- (1.6,0.75);	
                \draw (0.75,0.5) -- (2,0.5); 
                \draw (1,0.25) -- (2,0.25); 
                \draw (0.5,-0.75) -- (1.5,-0.75); 
                \draw (0.75,-0.5) -- (1.75,-0.5); 
                \draw (1,-0.25) -- (2,-0.25); 
                \draw (0.5,-0.75) .. controls (0,-0.75) and (0,0.75) .. (0.5,0.75);
                \draw (0.5,-0.75) -- (1.5,-0.75);	
                \draw (1.75,-0.5) .. controls (2,-0.5) and (2,-1.5) .. (1.75,-1.5); 
                \draw (1.5,-0.75) .. controls (1.75,-0.75) and (1.75,-1.25) .. (1.5,-1.25); 
                \draw (-1.5,-0.75) .. controls (-1.75,-0.75) and (-1.75,-1.25) .. (-1.5,-1.25); 
                \draw (-1.75,-0.5) .. controls (-2,-0.5) and (-2,-1.5) .. (-1.75,-1.5); 
                \draw (1.5,-1.25) -- (-1.5,-1.25); 
                \draw (-1.75, -1.5) -- (1.75, -1.5); 
                \draw[ten] (-1.5,-0.1) rectangle (-0.9,-0.85);
                \draw[ten] (1.5,-0.1) rectangle (0.9,-0.85);
                \node[] at (-1.2,-0.475) {$U$};
                \node[] at (1.2,-0.475) {$\,U^\dagger$};
                \node[right] at (1.5,0.75) {\scalebox{.6}{$\ket{0^a}$}};
                \node[left] at (-1.5,0.75) {\scalebox{.6}{$\bra{0^a}$}};
                \node[right] at (2.15,0.375) {\scalebox{.6}{$\inn$}};
                \node[right] at (2.15,-0.25) {\scalebox{.6}{$\out$}};
                \draw [decorate,decoration={brace,amplitude=2pt,mirror},yshift=0pt]
(2.15,0.25) -- (2.15,0.5);                
		\end{scope}
	}\\[1em]
        &= \bra{0^a}_{\aux} \pbra{(I^{\otimes n} \otimes \chan{U})\pbra{\proj{\bellstate_{n+a}}}} \ket{0^a}_{\aux}
    \end{align*}
    
    \caption{Tensor network diagram for $\Phi_{U}$ where we adopt the standard convention that \rotatebox[origin=c]{180}{$U$}$\,=U^\top$ and write $m := n + a$ for the total number of qubits. Our register labelling follows~\Cref{fig:circuit-register-labels}.}
    \label{fig:choi-with-ancilla}
\end{figure}

This relationship between channels with and without clean auxiliary qubits immediately allows us to extend our concentration results via the following Proposition.
\begin{proposition} \label{prop:conc_change_choi_anc}
    Suppose $\chan{}$ is a quantum channel mapping $(n+a)$ to $1$ qubits, and $\chan{}'$ is the following quantum channel mapping $n$ to $1$ qubits.
    \[
        \chan{}'(\rho) = \chan{}(\rho \otimes \proj{0^a}), \qquad \rho \in \calM_N
    \]
    Then $\weight{>k}[\choir_{\chan{}'}] \leq 2^{a} \cdot \weight{>k}[\choir_{\chan{}}]$ for each $k\in \{0, 1, \dots, n+1\}$.
\end{proposition}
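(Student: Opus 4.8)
The plan is to take the postselection identity $\choir_{\chan{}'} = \bra{0^a}_{\aux}\,\choir_{\chan{}}\,\ket{0^a}_{\aux}$ from \eqref{eq:choi_anc_contract} (whose derivation uses only linearity of the Choi map, and hence applies verbatim to a general channel $\chan{}$, not just one induced by a unitary) and convert it into an \emph{exact} formula for the Pauli coefficients of $\choir_{\chan{}'}$ in terms of those of $\choir_{\chan{}}$. Once that formula is available, the weight bound will fall out of Cauchy--Schwarz together with a degree-monotonicity observation.

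First I would expand $\choir_{\chan{}} = \sum_{P\in\paulis[n+a+1]} \wh{\choir_{\chan{}}}(P)\,P$ and split each Pauli as $P = Q\otimes R$, where $Q\in\paulis[n+1]$ is supported on the $\inn$ and $\out$ registers and $R\in\paulis[a]$ is supported on the $\aux$ register. Since $\bra{0}I\ket{0}=\bra{0}Z\ket{0}=1$ while $\bra{0}X\ket{0}=\bra{0}Y\ket{0}=0$, the scalar $\bra{0^a}R\ket{0^a}=\prod_j \bra{0}R_j\ket{0}$ equals $1$ exactly when $R\in\cbra{I,Z}^{\otimes a}$ and vanishes otherwise. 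Substituting into \eqref{eq:choi_anc_contract} and reading off coefficients from the (unique) Pauli expansion of $\choir_{\chan{}'}$ yields the coefficient identity
\[
\wh{\choir_{\chan{}'}}(Q) = \sum_{R\in\cbra{I,Z}^{\otimes a}} \wh{\choir_{\chan{}}}(Q\otimes R)
\qquad\text{for every } Q\in\paulis[n+1].
\]

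With this identity in hand, I would bound the tail weight directly. Fixing $k$ and any $Q$ with $\abs{Q}>k$, Cauchy--Schwarz across the $2^a$ summands gives $\abs{\wh{\choir_{\chan{}'}}(Q)}^2 \leq 2^a\sum_{R\in\cbra{I,Z}^{\otimes a}}\abs{\wh{\choir_{\chan{}}}(Q\otimes R)}^2$. Summing over all $Q$ with $\abs{Q}>k$ and using that $\abs{Q\otimes R}=\abs{Q}+\abs{R}\geq\abs{Q}>k$ (so every Pauli $Q\otimes R$ occurring on the right has degree exceeding $k$, and distinct pairs $(Q,R)$ produce distinct Paulis), the resulting double sum is at most $\weight{>k}[\choir_{\chan{}}]$. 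This gives precisely $\weight{>k}[\choir_{\chan{}'}] \leq 2^a\,\weight{>k}[\choir_{\chan{}}]$.

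I do not anticipate a genuine obstacle here, since each step is short; the points demanding the most care are organizational. The first is confirming that no stray normalization constant intrudes into the coefficient formula even though $\choir_{\chan{}}$ and $\choir_{\chan{}'}$ live on $n+a+1$ and $n+1$ qubits respectively---this works because \eqref{eq:choi_anc_contract} is an operator identity whose coefficients against the Pauli basis can be read off directly. The second is verifying that the factor of $2^a$ genuinely reflects postselection collapsing the $\aux$ Paulis onto the $2^a$-element family $\cbra{I,Z}^{\otimes a}$ rather than onto a single term; this is exactly the source of the auxiliary-qubit blow-up appearing in \Cref{thm:pauli-concentration}.
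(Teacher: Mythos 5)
Your proof is correct and follows essentially the same route as the paper's: the postselection identity \eqref{eq:choi_anc_contract} gives the coefficient formula $\wh{\choir_{\chan{}'}}(Q) = \sum_{R\in\{I,Z\}^{\otimes a}} \wh{\choir_{\chan{}}}(Q\otimes R)$, and Cauchy--Schwarz plus degree monotonicity of $Q\otimes R$ yields the $2^a$ blow-up. If anything you are slightly more careful than the paper at the final step, which writes the restricted double sum as \emph{equal} to $\weight{>k}[\choir_{\chan{}}]$, whereas (as you note) it is only upper-bounded by it since the $\aux$-register Paulis are restricted to $\{I,Z\}^{\otimes a}$; this does not affect the conclusion.
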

\begin{proof}
By \Cref{eq:choi_anc_contract}, we can express the Pauli coefficients of $\choir_{\chan{}'}$ as 
\begin{align}
    \wh{\choir_{\chan{}'}}(P) = \sum_{Q \in \paulis[a]} \wh{\choir_{\chan{}}}(P\otimes Q) \bra{0^a} Q \ket{0^a} = \sum_{Q\in \{I, Z\}^{\otimes a}} \wh{\choir_{\chan{}}}(P\otimes Q)
\end{align}
for each $P\in \paulis[n+1]$. Now, making use of the Cauchy Schwarz inequality, we get our desired bound on $\weight{>k}[\choir_{\chan{}'}]$:
    \begin{align*}
        \weight{>k}[\choir_{\chan{}'} ]
        &= \sum_{\substack{P \in \paulis[n+1] : \\ |P| >k}} \abs{ \sum_{Q \in \{I, Z\}^{\otimes a}} \wh{\choir_{\chan{}}}(P\otimes Q)}^2\\
        &\leq 2^{a} \sum_{\substack{P \in \paulis[n+1] : \\ |P| >k}} \sum_{Q \in \{I, Z\}^{\otimes a}} \abs{\wh{\choir_{\chan{}}}(P\otimes Q)}^2 \\
        &\leq 2^{a} \weight{>k}[\choir_{\chan{}}]
    \end{align*}
    which completes the proof.
\end{proof}

\subsubsection{``Dirty'' Auxiliary States}
In contrast with a \emph{clean} auxiliary state that is always initialized to the $\ket{0^a}$ state, we consider the case where the auxiliary state given to the circuit is \emph{dirty}, that is there is no guarantee what the auxiliary state will be, yet the circuit still has the desired behavior. This corresponds to an auxiliary state that is maximally mixed ($\psi = \frac{1}{2^a} I^{\otimes a}$):
\begin{align*}
    \chan{}'(\rho) = \chan{}\pbra{\rho \otimes \tfrac{1}{2^a} I^{\otimes a}}.
\end{align*}
Following a similar argument to that used for ``clean'' auxiliary states, one can show that
\begin{align}
    \choir_{\chan{}'} = \frac{1}{2^a} \cdot \tr_\aux \pbra{\choir_{\chan{}}} \qquad \text{and} \qquad \weight{>k}[\choir_{\chan{}'}] \leq \weight{>k}[\choirE], \quad \text{for each } k\geq 1. \label{eq:dirty-change}
\end{align}
As opposed to clean auxiliary states, dirty auxiliary states cannot blow up high-degree Pauli weight, and can be considered ``free.'' 
Below we prove analogous statements for general auxiliary states of which both dirty and clean auxiliary states are a special case.

\subsubsection{Arbitrary Auxiliary States}

More generally, we consider channels that have access to arbitrary auxiliary states, that are not necessarily pure states. As we will see in the following \nameCref{prop:choi_anc_collapse_general}, the Choi representation of such channels is also conveniently related to the auxiliary-free channel.

\begin{proposition} \label{prop:choi_anc_collapse_general}
    Let $\chan{}$ be a quantum channel mapping $n+a$ qubit states to single qubit states, and let $\psi \in \calM_{2^a}$ be a quantum state. Let $\chan{}'$ be the $n$ to $1$ qubit channel resulting from fixing the last $a$ input qubits of $\chan{}$ to $\psi$.
    \begin{align*}
        \chan{}'(\rho) = \chan{}(\rho \otimes \psi) \qquad\text{for}~ \rho \in \calM_{N}.
    \end{align*}
    We label the registers of $\choir_{\chan{}}$, as $\inn$ ($n$ qubits), $\out$ (1 qubit), $\aux$ ($a$ qubits) as follows.
    \begin{align*}
        \choir_{\chan{}} = \pbra{I^{\otimes n}_{\inn} \otimes I^{\otimes a}_{\aux} \otimes \chan{\out}} \pbra{\proj{\bellstate_{n+a}}_{(\inn, \aux), \out}}.
    \end{align*} 
    Then $\choir_{\chan{}'} = \tr_{\aux} \pbra{\choir_{\chan{}} (I^{\otimes n}_{\inn} \otimes (\psi^\top)_{\aux} \otimes I_{\out})}$. 
\end{proposition}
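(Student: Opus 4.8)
The plan is to verify the claimed identity by a direct computation in the computational basis, expanding both sides and matching them term by term; the only genuinely delicate point is keeping track of the transpose on $\psi$. Conceptually, this proposition is the ``partial'' analogue of \Cref{fact:choirep-to-output}: we inject a fixed state $\psi$ only on the $\aux$ register while leaving the $\inn$ register maximally entangled, and the appearance of $\psi^\top$ (rather than $\psi$) is exactly the ricochet/transpose phenomenon already visible in that fact.

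First I would unfold the left-hand side from the definition of the Choi representation. Writing $\proj{\bellstate_n} = \sum_{x,x'\in\zo^n}\ketbra{x}{x'}\otimes\ketbra{x}{x'}$ and applying $I^{\otimes n}\otimes\chan{}'$, I obtain $\choir_{\chan{}'} = \sum_{x,x'}\ketbra{x}{x'}_{\inn}\otimes \chan{}'(\ketbra{x}{x'})_{\out}$, into which I substitute the definition $\chan{}'(\ketbra{x}{x'}) = \chan{}(\ketbra{x}{x'}\otimes\psi)$. This is the target expression that the right-hand side must reproduce.

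Next I would expand the right-hand side. Splitting each reference index of the $(n+a)$-qubit Bell state into its $\inn$ and $\aux$ parts gives $\choir_{\chan{}} = \sum_{x,x',z,z'}\ketbra{x}{x'}_{\inn}\otimes\ketbra{z}{z'}_{\aux}\otimes\chan{}(\ketbra{x}{x'}\otimes\ketbra{z}{z'})_{\out}$. Right-multiplying by $I_{\inn}\otimes(\psi^\top)_{\aux}\otimes I_{\out}$ and tracing out $\aux$ turns the $\aux$ factor into the scalar $\tr(\ketbra{z}{z'}\psi^\top) = \bra{z'}\psi^\top\ket{z} = \bra{z}\psi\ket{z'}$. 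This is precisely the step where the transpose earns its keep: the partial trace naturally produces $\bra{z'}(\cdot)\ket{z}$ with the indices flipped, and the $\top$ on $\psi$ flips them back, so the surviving coefficient is exactly the matrix entry $\bra{z}\psi\ket{z'}$. Getting this orientation right is the main (and really only) obstacle; everything else is bookkeeping.

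Finally I would reassemble using linearity of $\chan{}$ to pull the sum over $z,z'$ inside the channel, together with the observation $\sum_{z,z'}\bra{z}\psi\ket{z'}\,\ketbra{z}{z'} = \psi$. This collapses the $\aux$-sum into $\chan{}(\ketbra{x}{x'}\otimes\psi)$, matching the left-hand side term by term and completing the proof. As a sanity check I would confirm that the two earlier special cases drop out immediately: taking $\psi=\proj{0^a}$ recovers the postselection formula \Cref{eq:choi_anc_contract}, and taking $\psi=\tfrac{1}{2^a}I^{\otimes a}$ recovers the dirty-auxiliary formula \Cref{eq:dirty-change}.
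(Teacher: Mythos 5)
Your proposal is correct and follows essentially the same route as the paper's proof: expand $\choir_{\chan{}}$ in the computational basis, reduce the $\aux$ factor to the scalar $\Tr\pbra{\ketbra{y}{y'}\psi^\top} = \bra{y}\psi\ket{y'}$ after multiplying by $(\psi^\top)_{\aux}$ and tracing out, then use linearity of $\chan{}$ and the identity $\sum_{y,y'}\bra{y}\psi\ket{y'}\,\ketbra{y}{y'} = \psi$ to collapse the sum. The transpose bookkeeping you flag as the delicate point is handled exactly as in the paper.
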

\begin{proof}
We have
    \begin{align*}
        \tr_{\aux}\pbra{\choir_{\chan{}} \pbra{I^{\otimes n +1}_{A^\complement} \otimes \psi^\top_{\aux}}} &= \sum_{\substack{x,x'\in \zo^n \\ y,y'\in \zo^a}} \tr_{\aux}\pbra{\ketbra{x}{x'}\otimes \pbra{\ketbra{y}{y'} \psi^\top}_{\aux} \otimes \chan{}\pbra{\ketbra{x}{x'}\otimes \ketbra{y}{y'}}}\\
        &= \sum_{\substack{x,x'\in \zo^n \\ y,y'\in \zo^a}} \Tr\pbra{\ketbra{y}{y'} \psi^\top} \pbra{\ketbra{x}{x'}\otimes  \chan{}\pbra{\ketbra{x}{x'}\otimes \ketbra{y}{y'}}}\\
        &= \sum_{x,x'\in \zo^n} \ketbra{x}{x'}\otimes  \chan{}\pbra{\ketbra{x}{x'}\otimes \pbra{\sum_{y,y'\in \zo^a} \bra{y'} \psi^\top \ket{y} \ketbra{y}{y'}}}\\
        &= \sum_{x,x'\in \zo^n} \ketbra{x}{x'}\otimes  \chan{}\pbra{\ketbra{x}{x'}\otimes \psi}
        = \choir_{\chan{}}
    \end{align*}
    where in the third equality we used the fact that $\chan{}$ is a linear map.
\end{proof}
Next, we generalize \Cref{prop:conc_change_choi_anc} for arbitrary auxiliary quantum states. Upon restricting $a$ of the input qubits of a channel $\chan{}$ to be the state $\psi \in \calM_{2^a}$, we see that the high degree weight of the Choi representation blows up by at most $2^{a} \fnorm{\psi}^2$.

\begin{proposition}\label{prop:conc_change_choi_anc_general}
    Let $\chan{}, \chan{}', \psi$ be as in \Cref{prop:choi_anc_collapse_general}. Then
    \[\weight{>k}[\choir_{\chan{}'}] \leq 2^{a} \cdot \fnorm{\psi}^2 \cdot \weight{>k}[\choir_{\chan{}}] \text{ for each } k \in \{0,1,\dots n+1\}.\]
\end{proposition}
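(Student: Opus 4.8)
The plan is to mirror the proof of the clean-auxiliary case (\Cref{prop:conc_change_choi_anc}), replacing the postselection identity \Cref{eq:choi_anc_contract} by the general contraction formula just established in \Cref{prop:choi_anc_collapse_general}. First I would organize the Pauli basis on the $(n+1+a)$ qubits of $\choir_{\chan{}}$: after grouping the $\inn,\out$ registers together and the $\aux$ register separately, every such Pauli can be written as $R\otimes Q$ with $R\in\paulis[n+1]$ acting on $(\inn,\out)$ and $Q\in\paulis[a]$ acting on $\aux$. Plugging the Pauli expansion $\choir_{\chan{}}=\sum_{R,Q}\wh{\choir_{\chan{}}}(R\otimes Q)\,R\otimes Q$ into the formula $\choir_{\chan{}'}=\tr_{\aux}(\choir_{\chan{}}(I_{\inn,\out}\otimes\psi^\top_{\aux}))$ and using $\tr_{\aux}((R\otimes Q)(I\otimes\psi^\top))=R\cdot\tr(Q\psi^\top)$, I would read off
\[\wh{\choir_{\chan{}'}}(R)=\sum_{Q\in\paulis[a]}\wh{\choir_{\chan{}}}(R\otimes Q)\,\tr(Q\psi^\top).\]

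Next I would apply Cauchy--Schwarz to this identity to split it as $|\wh{\choir_{\chan{}'}}(R)|^2\le\big(\sum_Q|\wh{\choir_{\chan{}}}(R\otimes Q)|^2\big)\big(\sum_Q|\tr(Q\psi^\top)|^2\big)$. The second factor is exactly where the constant $2^a\fnorm{\psi}^2$ arises: since the Pauli coefficients of the $a$-qubit operator $\psi^\top$ are $\wh{\psi^\top}(Q)=2^{-a}\tr(Q\psi^\top)$, Parseval (\Cref{eq:qu_parseval_plancharel}) gives $\sum_Q|\tr(Q\psi^\top)|^2=2^{2a}\sum_Q|\wh{\psi^\top}(Q)|^2=2^{a}\fnorm{\psi^\top}^2=2^{a}\fnorm{\psi}^2$, using that transposition preserves the Frobenius norm.

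Finally I would sum over all $R$ with $|R|>k$ and observe that the remaining double sum is a \emph{sub-sum} of the Pauli weight of $\choir_{\chan{}}$ above level $k$: whenever $|R|>k$, the full Pauli $R\otimes Q$ has degree $|R|+|Q|\ge|R|>k$, so $\sum_{|R|>k}\sum_Q|\wh{\choir_{\chan{}}}(R\otimes Q)|^2\le\weight{>k}[\choir_{\chan{}}]$. Combining the three steps yields $\weight{>k}[\choir_{\chan{}'}]\le 2^a\fnorm{\psi}^2\,\weight{>k}[\choir_{\chan{}}]$, as desired. None of the steps is a genuine obstacle; the only point requiring care is the normalization bookkeeping in the Parseval computation (the stray factor of $2^a$ coming from $\tr(Q\psi^\top)=2^a\wh{\psi^\top}(Q)$). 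As sanity checks I would verify that this recovers the clean specialization ($\psi=\proj{0^a}$, $\fnorm{\psi}^2=1$, reproducing \Cref{prop:conc_change_choi_anc}) and the dirty specialization ($\psi=2^{-a}I$, $\fnorm{\psi}^2=2^{-a}$, reproducing \Cref{eq:dirty-change}).
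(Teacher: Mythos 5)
Your proposal is correct and is essentially the paper's own proof: the same Pauli-coefficient contraction formula $\wh{\choir_{\chan{}'}}(P)=\sum_{Q}\wh{\choir_{\chan{}}}(P\otimes Q)\tr(Q\psi^\top)$ derived from \Cref{prop:choi_anc_collapse_general}, followed by the same Cauchy--Schwarz split, the same Parseval computation giving $\sum_Q|\tr(Q\psi^\top)|^2=2^a\fnorm{\psi}^2$, and the same observation that the remaining double sum is dominated by $\weight{>k}[\choir_{\chan{}}]$. Your added care about register grouping, the transpose preserving the Frobenius norm, and the clean/dirty sanity checks are fine but do not change the argument.
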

\begin{proof}
    As shown in \Cref{prop:choi_anc_collapse_general}, we have $\choir_{\chan{}'} = \tr_{\aux} \pbra{\choir_{\chan{}} (I^{\otimes n}_{\inn} \otimes (\psi^\top)_{\aux} \otimes I_{\out})}$. Therefore, 
\begin{align}
    \wh{\choir_{\chan{}'}}(P) = \sum_{Q \in \paulis[a]} \wh{\choir_{\chan{}}}(P\otimes Q) \tr(Q \psi^\top) && \text{for each } P \in \paulis[n+1].
\end{align}
And so,
\begin{align*}
    \weight{>k}[\choir_{\chan{}'}] &= \sum_{\substack{P \in \paulis[n+1] \\ |P| > k}} \abs{\wh{\choir_{\chan{}}}(P)}^2
    = \sum_{\substack{P \in \paulis[n+1] \\ |P| > k}} \abs{\sum_{Q \in \paulis[a]} \wh{\choir_{\chan{}}}(P\otimes Q) \cdot \tr(Q \psi^\top)}^2.
\end{align*}
Using the Cauchy Schwarz inequality, we can upper-bound this with
\begin{align}\label{eq:intermediate:weight_choir_anc_general}
    \weight{>k}[\choir_{\chan{}'}] &\leq  \pbra{ \sum_{\substack{P \in \paulis[n+1] \\ |P| > k}} \sum_{Q \in \paulis[a]} \abs{\wh{\choir_{\chan{}}}(P\otimes Q)}^2} \pbra{\sum_{Q' \in \paulis[a]} \abs{ \tr(Q \psi^\top)}^2}
\end{align}
We can upper-bound the expression inside the first set of parenthesis with 
    \[\sum_{\substack{P\in \paulis[n+1] \\ |P| >k}} \sum_{Q \in \paulis[a]} \abs{\wh{\choir_{\chan{}}}(P\otimes Q)}^2 \leq \weight{>k}[\choir_{\chan{}}].\]
Furthermore, note that for each $Q \in \paulis[a]$, we have $\tr(Q\psi^\top) = 2^a \wh{\psi^\top}(Q)$. Combining this with Parseval's equation (\Cref{eq:qu_parseval_plancharel}), the sum in the second set of parenthesis in \Cref{eq:intermediate:weight_choir_anc_general} is equal to
\begin{align*}
    \sum_{Q' \in \paulis[a]} \abs{ \tr(Q \psi^\top)}^2 = 2^{2a}  \sum_{Q' \in \paulis[a]} \abs{\wh{\psi^\top}(Q')}^2 =  2^a \fnorm{\psi}^2.
\end{align*}
Plugging these into \Cref{eq:intermediate:weight_choir_anc_general} we get our desired inequality.
\begin{align*}
    \weight{>k}[\choir_{\chan{}'}] &\leq 2^{a} \fnorm{\psi}^2 \weight{>k}[\choir_{\chan{}}] 
\end{align*}
which completes the proof.
\end{proof}

Combining \Cref{prop:conc_change_choi_anc_general} and \Cref{thm:anc_free_concentration} we immediately get the following \nameCref{thm:low_deg_conc_formal_arb_anc}.
\begin{theorem}\label{thm:low_deg_conc_formal_arb_anc}
    Let $C$ be an $(n+a)$-qubit $\QAC$ circuit of depth $d$ and size $\size$. Let  $\psi \in \calM_{2^a}$ be an $a$-qubit quantum state. For each $k \in [n+1]$,
    \begin{align}
        \weight{>k}[\choir_{C, \psi}]
        \leq O\pbra{\size^2 \cdot 2^{a - k^{1/d}} \fnorm{\psi}^2}
        .
    \end{align}
\end{theorem}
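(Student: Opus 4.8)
The plan is to obtain the general auxiliary-state bound by simply composing the two structural results already in hand: the auxiliary-free concentration bound (\Cref{thm:anc_free_concentration}) and the input-restriction bound (\Cref{prop:conc_change_choi_anc_general}). The key observation is that $\chan{C,\psi}$ factors as ``run the auxiliary-free channel on $n+a$ qubits, then fix the last $a$ input qubits to $\psi$.'' Concretely, letting $\chan{C}$ denote the $(n+a)$-to-$1$-qubit channel obtained by applying $C$ and tracing out all but the designated output qubit, we have $\chan{C,\psi}(\rho) = \chan{C}(\rho \otimes \psi)$, which is precisely the setup of \Cref{prop:choi_anc_collapse_general,prop:conc_change_choi_anc_general}.

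First I would apply \Cref{thm:anc_free_concentration} to the circuit $C$, now regarded as an \emph{auxiliary-free} $\QAC$ circuit acting on all $m = n+a$ qubits. Since $C$ has depth $d$ and size $\size$, this yields $\weight{>k}[\choir_{C}] \leq O\pbra{\size^2 2^{-k^{1/d}}}$. The one point worth checking is that the right-hand side of \Cref{thm:anc_free_concentration} depends only on $\size$, $d$, and the level $k$, with \emph{no} dependence on the number of qubits; hence the bound transfers verbatim from the $n$-qubit formulation to the $(n+a)$-qubit circuit, and in particular holds for every $k \in [n+1]$.

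Next I would invoke \Cref{prop:conc_change_choi_anc_general} with this $\chan{C}$ and the auxiliary state $\psi$, which states precisely that fixing $a$ of the input qubits to $\psi$ inflates the high-degree Pauli weight by a factor of at most $2^a \fnorm{\psi}^2$, i.e. $\weight{>k}[\choir_{C,\psi}] \leq 2^a \fnorm{\psi}^2 \cdot \weight{>k}[\choir_{C}]$. Chaining the two inequalities then gives
\[
\weight{>k}[\choir_{C,\psi}] \leq 2^a \fnorm{\psi}^2 \cdot O\pbra{\size^2 2^{-k^{1/d}}} = O\pbra{\size^2 \cdot 2^{a - k^{1/d}} \fnorm{\psi}^2},
\]
which is exactly the claimed bound.

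There is essentially no genuine obstacle here: the statement is a one-line corollary of two previously established results, since all of the substantive work (the lightcone argument of \Cref{lem:weight_conc_small_gates} and the gate-deletion estimate of \Cref{lem:removing_CZgates}, packaged into \Cref{thm:anc_free_concentration}) has already been done. The only thing demanding care is the bookkeeping of register roles---ensuring that the $a$ qubits fed the state $\psi$ are indeed the ones restricted in \Cref{prop:conc_change_choi_anc_general}, and that ``size $\size$'' counts the same multi-qubit gates in both invocations. The present theorem merely layers the input-restriction overhead $2^a\fnorm{\psi}^2$ on top of the auxiliary-free bound.
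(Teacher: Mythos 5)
Your proposal is correct and takes essentially the same approach as the paper, which likewise obtains \Cref{thm:low_deg_conc_formal_arb_anc} in one step by combining \Cref{thm:anc_free_concentration} (applied to $C$ viewed as an auxiliary-free circuit on all $n+a$ qubits, using that the bound is independent of the qubit count) with \Cref{prop:conc_change_choi_anc_general} (which inflates the high-degree weight by at most $2^a \fnorm{\psi}^2$). Your two checkpoints---the qubit-count independence of the auxiliary-free bound and the identification $\chan{C,\psi}(\rho) = \chan{C}(\rho \otimes \psi)$---are exactly the right details to verify.
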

We note that the settings discussed at the start of this subsection, where the auxiliary state is either \emph{clean} ($\proj{0^a}$) or \emph{dirty} ($2^{-a} I ^{\otimes a}$) are special cases of \Cref{thm:low_deg_conc_formal_arb_anc}.

Note that this concentration bound is unaffected by the number of dirty auxiliary qubits. Furthermore, the bound does not change if instead of $\ket{0^c}$ we used an arbitrary $c$-qubit pure state, including states such as $\ket{\textsf{CAT}_c}$ state which are believed to require $\QAC$ depth $\Omega(\log(c))$ to prepare.

\label{subsec:ancilla}
\subsection{\texorpdfstring{$\QACZ$}{QAC0} Lower Bounds for Parity and Majority}
\label{subsec:parity-lb}
In this section, we consider Boolean functions $f
\isazofunc$ and show that when viewing them as a quantum channel $\chan{f}$, their Pauli spectrum is essentially the same as their Fourier spectrum. Using this observation, we connect our low-degree concentration results from the previous section to prove correlation bounds against $\QACZ$ circuits with limited auxiliary qubits for computing high-degree Boolean functions. In particular, we provide explicit bounds for $\Parity_n$ and $\Majority_n$ functions.

The following proposition relates the Fourier spectrum of a Boolean function $f\isazofunc$ to the Pauli spectrum of this function, viewed as a quantum channel as discussed above. We refer the reader to~\cite{ODonnell2014} for further background.

\begin{proposition} \label{prop:fourier-to-pauli}
    Suppose $f\isazofunc$ is a Boolean function and $\rho$. Let $\calE_f$ be the channel mapping $n$-qubit input state $\rho$ to a single-qubit state as follows:
    \begin{align} \label{eq:canonical-function-chanel}
      \calE_f(\rho) := \sum_{x\in\zo^n} \bra{x}\rho\ket{x} \ket{f(x)}\!\!\bra{f(x)}, && \choir_{\chan{f}} = \sum_{x\in \zo^n} \proj{x} \otimes \proj{f(x)}.  
    \end{align}
    Then we have 
    \[\Phi_{\calE_f} = \frac{1}{2}\pbra{I^{\otimes (n+1)} + \sum_{S\sse[n]} \wh{f}(S)Z_S\otimes Z}\]
    where $Z_S$ is defined in \Cref{eq:pauli-subset-notation} and $\wh{f}(S) = \E_{x\sim \zo^\nin}[(-1)^{f(x)} \chi_S(x)]$ \footnote{Recall from \Cref{sec:LMN} that $\chi_S(x):= \prod_{i\in S} x_i$}. 
\end{proposition}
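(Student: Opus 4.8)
The plan is to carry out a direct computation in the Pauli basis, reducing everything to the orthogonality of the characters $\chi_S$. The starting point is the single-qubit identity $\proj{b} = \tfrac{1}{2}\pbra{I + (-1)^b Z}$ for $b\in\zo$. Tensoring this over the $n$ input qubits immediately yields the Pauli expansion of an $n$-qubit computational-basis projector,
\[\proj{x} = \frac{1}{2^n}\sum_{S\sse[n]} \pbra{\prod_{i\in S}(-1)^{x_i}} Z_S = \frac{1}{2^n}\sum_{S\sse[n]}\chi_S(x)\, Z_S,\]
where I am using $\chi_S(x) = \prod_{i\in S}(-1)^{x_i}$ to denote the character evaluated at the $\pm1$-encoding of $x$. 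The output projector is handled by the same single-qubit identity, $\proj{f(x)} = \tfrac{1}{2}\pbra{I + (-1)^{f(x)} Z}$.

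Next I would substitute both expansions into the definition $\choir_{\chan{f}} = \sum_{x\in\zo^n}\proj{x}\otimes\proj{f(x)}$ and interchange the order of summation, grouping the result into the term carrying $I$ on the output qubit and the term carrying $Z$. This produces
\[\choir_{\chan{f}} = \frac{1}{2^{n+1}}\sum_{S\sse[n]}\pbra{\sum_{x}\chi_S(x)} Z_S\otimes I + \frac{1}{2^{n+1}}\sum_{S\sse[n]}\pbra{\sum_x (-1)^{f(x)}\chi_S(x)} Z_S\otimes Z.\]
The two inner sums are then evaluated separately. For the first, character orthogonality gives $\sum_{x\in\zo^n}\chi_S(x) = 2^n\cdot\one\{S=\emptyset\}$, so the entire first term collapses to $\tfrac{1}{2} I^{\otimes(n+1)}$. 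For the second, I would recognize $\tfrac{1}{2^n}\sum_x (-1)^{f(x)}\chi_S(x) = \E_{x\sim\zo^n}[(-1)^{f(x)}\chi_S(x)] = \wh{f}(S)$, which is exactly the Fourier coefficient in the statement. Combining the two pieces yields $\choir_{\chan{f}} = \tfrac{1}{2}\pbra{I^{\otimes(n+1)} + \sum_S \wh{f}(S)\, Z_S\otimes Z}$, as desired.

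There is no genuine technical obstacle here; the argument is a one-pass calculation, and the ``hard part'' is really only careful bookkeeping. The single point requiring attention is notational consistency: the statement writes $\chi_S(x) = \prod_{i\in S} x_i$, and one must interpret this in the $\pm1$-encoding (i.e.\ as $\prod_{i\in S}(-1)^{x_i}$ when $x\in\zo^n$) so that the single-qubit identity $\proj{b} = \tfrac{1}{2}\pbra{I+(-1)^b Z}$ lines up correctly with the definition of $\wh{f}(S)$. Tracking the factors of $2$ and isolating the $S=\emptyset$ contribution in the first sum constitutes essentially the entire content of the verification.
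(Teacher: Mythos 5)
Your proof is correct and follows essentially the same route as the paper: both expand the computational-basis projectors $\proj{x}$ and $\proj{f(x)}$ in the $\{I,Z\}$ Pauli basis, interchange sums, and evaluate the inner sums via character orthogonality (the $S=\emptyset$ term giving $\tfrac12 I^{\otimes(n+1)}$) and the definition of $\wh{f}(S)$, with your $\chi_S(x)=\prod_{i\in S}(-1)^{x_i}$ playing the role of the paper's $(-1)^{x\cdot y}$. Your remark about the $\pm1$-encoding convention is exactly the right bookkeeping point, and starting from $\choir_{\chan{f}}=\sum_x\proj{x}\otimes\proj{f(x)}$ is legitimate since that identity is given in the statement (the paper merely re-derives it in one line from the definition of the Choi representation).
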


\begin{proof}
    For convenience, we will write $\calE := \calE_f$ and $\Phi:= \Phi_{\calE_f}$ throughout the proof. Note that 
    \begin{align}
        I^{\otimes n} \otimes \calE \pbra{\ket{\EPR_n}\!\!\bra{\EPR_n}} &= \sum_{x\in\zo^n}\ket{x}\!\!\bra{x}\otimes\pbra{\sum_{y\in\zo^n}\braket{y | x}\!\!\braket{x | y} \ket{f(y)}\!\!\bra{f(y)} } \nonumber\\
        &= \sum_{x\in\zo^n} \ket{x}\!\!\bra{x} \otimes \ket{f(x)}\!\!\bra{f(x)}. \label{eq:potato1}
    \end{align}
    It is straightforward to check that the Pauli decomposition of $\proj{x}$ is given by 
    \[\proj{x} = \frac{1}{2^n}\sum_{y\in\zo^n}(-1)^{x\cdot y} Z_y\]
    where we identify $y\in\zo^n$ with the indicator string of the set $S := \{i \in [n]: y_i = 1\}$. Combining this with \Cref{eq:potato1}, we get that 
    \begin{align*}
        I^{\otimes n}\otimes \calE\pbra{\proj{\EPR_n}} 
        &= \sum_{x\in\zo^n} \pbra{\frac{1}{2^n}\sum_{y\in\zo^n}(-1)^{x\cdot y} Z_y} \otimes \pbra{\frac{1}{2}\sum_{b\in \zo} (-1)^{f(x)\cdot b} Z_b}\\
        &= \frac{1}{2}\sum_{\substack{y\in\zo^n\\b\in \zo}} {\pbra{\frac{1}{2^n} \sum_{x\in\zo^n} (-1)^{x\cdot y + f(x)\cdot b}}} Z_y \otimes Z_b.
    \end{align*}
    Note, however, that 
    \[{{\frac{1}{2^n} \sum_{y\in\zo^n} (-1)^{x\cdot y + f(x)\cdot b}}}
        = \begin{cases}
            \mathbf{1}\cbra{y = 0^n} & \text{if}~b = 0\\
            \wh{f}(y) & \text{if}~b = 1
        \end{cases}\] where we once again identify $y$ with a subset $S := \{i\in[n]:y_i = 1\}$, and so we have
    \begin{align}
        I^{\otimes n}\otimes \calE\pbra{\proj{\EPR_n}} 
        = \frac{1}{2}\pbra{I^{\otimes (n+1)} + \sum_{S\sse[n]} \wh{f}(S)Z_S\otimes Z}. \label{eq:unnormalized_boolean_choirep}
    \end{align}

    Now that we have shown the correspondence between the Pauli spectrum and the Fourier spectrum of $f$, we next bound the correlation between a quantum channel and a Boolean function entirely in terms of their Pauli and Fourier spectra respectively.
    \begin{proposition}\label{prop:bool-correlation-from-pauli-conc}
        Suppose $\chan{}$ is an $n$ to $1$-qubit channel.
        For each $x\in \zo^n$ let random variable $Q_{\chan{}}(x)\in \zo$ 
        be the outcome of measuring $\chan{}(\proj{x})$ in the computational basis.
        Then for each $f\isazofunc$ and each $k\in \{0, 1, \dots, n\}$,
        \begin{align*}
            \Prx_{x\in \zo^n} \sbra{Q_{\chan{}}(x) = f(x)}\leq \frac{1}{2}+ \frac{1}{2}\sqrt{\weight{>k}[f]} + \sqrt{\weight{>k+1}[\choir_{\chan{}}]}.
        \end{align*}
    \end{proposition}
    \begin{proof}
        Let $\chan{}, f$ be as in \Cref{prop:bool-correlation-from-pauli-conc}, and let $\chan{f}$ be as defined in \Cref{prop:fourier-to-pauli}. We refer to the probability that $Q_\calE$ and $f$ output the same result when given a uniformly random input, $\Pr_{x\in \zo^n}[Q_\calE(x) = f(x)]$, as the probability of success. Let $\wt{\calE}$ be the \emph{classical} version of the channel $\calE$. That is, $\wt{\calE}$ is the channel resulting from applying a computational basis measurement before and after applying $\calE$.
        \begin{align*}
            \wt{\calE}(\rho) = \calM_\nout\pbra{\calE\pbra{\calM_\nin(\rho)}}
        \end{align*}
        Where $\calM_k$ denotes the $k$-qubit computational basis measurement channel: for each $x \neq x' \in \zo^k$, $\calM(\ketbra{x}{x'}) =0$.
        
        For each $x\in \zo^n$, we can write the probability that $\chan{}$ when given $\ket{x}$ as input, correctly outputs $\ket{f(x)}$ as
        \begin{align*}
            \Pr\sbra{Q_{\chan{}}(x) = f(x)} = \tr\pbra{\proj{f(x)} \chan{}\pbra{\proj{x}}} 
            = \tr\pbra{\chan{f}\pbra{\proj{x}}^\dagger \wt{\chan{}}\pbra{\proj{x}}}.
        \end{align*}
        We now show that the overall probability of success can be written in terms of $\abra{\choir_f, \choir_\calE}$.
        \begin{align*}
            \Ex_{x\in \zo^n} \Pr \sbra{Q_{\chan{}}(x) = f(x)} &= \frac{1}{2^n}\sum_{x\in \zo^n} \tr\pbra{\chan{f}\pbra{\proj{x}}^\dagger \wt{\chan{}}\pbra{\proj{x}}}\\
            &= \frac{1}{2^n}\sum_{x, x'\in \zo^n} \tr\pbra{\chan{f}\pbra{\ketbra{x}{x'}}^\dagger \wt{\chan{}}\pbra{\ketbra{x}{x'}}}\\
            &= \frac{1}{2^n} \Tr\pbra{\choir_{f}^\dagger \choir_{\wt{\calE}}} = \frac{1}{2^n} \abra{\choir_f, \choir_{\wt{\calE}}}
        \end{align*}
        In the second equality we used the fact that $\chan{f}$ is a classical channel --- for each $x\neq x'$, $\chan{f}(\ketbra{x}{x'}) = 0$.
        By Plancharel's (\Cref{eq:plancherel}) We have that $\frac{1}{2^{n+1}}\abra{\choir_f, \choir_\calE} = \abra{\wh{\choir_f}, \wh{\choir_\calE}}$ since the Choi representations $\choir_f, \choir_\calE$ have dimension $2^{n+1}$. Therefore, 
        \begin{align}
            \Pr_x\sbra{Q_{\chan{}}(x) = f(x)} = 2\cdot \abra{\wh{\choir_f}, \wh{\choir_{\wt{\calE}}}}\label{eq:prsucc_as_ip}
        \end{align}
        We now have our probability of success in terms of the Pauli coefficients of the Choi representations. 
        Recall from \Cref{prop:fourier-to-pauli} that for each $S\subseteq [n]$, $\wh{\choir_f}(Z_S\otimes Z) = \frac{1}{2}\wh{f}(S)$, $\wh{\choir_f}(I) = \frac{1}{2}$ and all other coefficients are 0.
        Furthermore, since each quantum channel is trace-preserving (by definition), we also have that $\wh{\choir_{\chan{}}}(I) = \wh{\choir_f}(I)= \frac{1}{2}$. Therefore,
        \begin{align}
            \abra{\wh{\choir_f}, \wh{\choir_{\wt{\calE}}}} = \sum_{P\in \paulis[n+1]} \wh{\choir_f}(P)^*\wh{\choir_{\wt{\calE}}}(P) &=  \frac{1}{4} + \frac{1}{2}\sum_{S \subseteq [n]} \wh{f}(S)\wh{\choir_{\wt{\calE}}}(Z_S \otimes Z)
            \label{eq:coeff_ip_1}
        \end{align}
        For any $k\in \{0, 1, \dots, n\}$ we can bound the second term as follows using Cauchy Schwarz.
        \begin{align*}
            \sum_{S \subseteq [n]} \wh{f}(S)\wh{\choir_{\wt{\calE}}}(Z_S \otimes Z)
            &=\sum_{S \subseteq [n] :|S|\leq k} \wh{f}(S)\wh{\choir_{\wt{\calE}}}(Z_S \otimes Z) + \sum_{S \subseteq [n] :|S|> k} \wh{f}(S)\wh{\choir_{\wt{\calE}}}(Z_S \otimes Z)\\
            &\leq \pbra{\sum_{\substack{S\subseteq [n] : \\|S|\leq k} }\wh{f}(S)^2}^{1/2} \pbra{\sum_{\substack{P\in \paulis[\nin+1] : \\ 0 < |P|\leq k+1}}\wh{\choir_{\wt{\calE}}}(P)^2} ^{1/2} + \pbra{\sum_{\substack{S\subseteq [n] : \\|S|> k} } \wh{f}(S)^2}^{1/2} \pbra{\sum_{\substack{P\in \paulis[\nin+1] : \\ |P|> k+1}}\wh{\choir_{\wt{\calE}}}(P)^2} ^{1/2}
        \end{align*}
        Using the fact that $\wt{\calE}$ is classical and Parseval's (\Cref{eq:qu_parseval_plancharel}), 
        \begin{align*}
            \sum_{\substack{P\in \paulis[\nin+1] : \\ 0 < |P|\leq k+1}}\wh{\choir_{\wt{\calE}}}(P)^2 &= \frac{1}{2^{\nin+1}} \fnorm{\choir_{\wt{\calE}}}^2 - \abs{\wh{\choir_{\wt{\calE}}}}^2 \\
            &= \frac{1}{2^{\nin + 1}} \fnorm{\sum_{x\in \zo^n} \proj{x} \otimes \wt{\calE}(\proj{x})}^2 -\frac{1}{4}\\
            &\leq \frac{1}{2} \E_{x\in \zo^n} \fnorm{\wt{\calE}(\proj{x})}^2 - \frac{1}{4} \leq \frac{1}{4}.
        \end{align*}
        Now plugging this back into the previous equation, along with $\sum_{S\subseteq [n]} \wh{f}(S)^2 \leq 1$, we see that
        \begin{align*}
            \sum_{S \subseteq [n]} \wh{f}(S)\wh{\choir_{\wt{\calE}}}(Z_S \otimes Z) &\leq \frac{1}{2}\pbra{\sum_{\substack{S\subseteq [n] : \\|S|\leq k} } \wh{f}(S)^2}^{1/2} + \pbra{\sum_{\substack{P\in \paulis[\nin+1] : \\ |P|> k+1}}\wh{\choir_{\wt{\calE}}}(P)^2} ^{1/2}\\
            &= \frac{1}{2}\sqrt{\weight{\leq k}[f]} + \sqrt{\weight{>k+1}[\choir_{\wt{\calE}}]}
        \end{align*}
        
Combining this with \Cref{eq:prsucc_as_ip,eq:coeff_ip_1} we see that $\Pr_{x\in \zo^n}\sbra{Q_\calE(x) = f(x)} \leq \frac{1}{2} + \frac{1}{2}\sqrt{\weight{\leq k}[f]} + \sqrt{\weight{>k+1}[\choir_{\wt{\calE}}]}$. Finally, we observe that for each $k \in\{0, 1, \dots, n+1\}$, $\weight{k}[\choir_{\wt{\calE}}] \leq \weight{k}[\choir_{\wt{\calE}}]$. This is because $\choir_{\wt{\calE}} = \calM_{\nin + \nout}(\choir_{\calE})$, and the measurement channel preserves all Paulis in $\{I, Z\}^{\otimes n}$, and maps all other Paulis to $0$.
        
    \end{proof}

    Combining \Cref{prop:bool-correlation-from-pauli-conc} with our concentration bound \Cref{thm:pauli-concentration}, directly implies the following theorem:

    \begin{theorem}\label{QAC-corr-bound}
        Suppose $C$ is a $\QAC$ circuit with depth $d$ and size $\size$ and $\psi$ is a $a$-qubit auxiliary state.
        For each $x\in \zo^n$, let $C(x)\in \zo$ be the random variable indicating the standard basis measurement outcome of $\chan{C, \psi}(\proj{x})$. Then for each Boolean function $f\isazofunc$ and $k \in \{0, 1, \dots, n\}$
        \begin{align}
            \Prx_{x\in \zo^n}\sbra{C(x) = f(x)} \leq \frac{1}{2} + \frac{1}{2}\sqrt{\weight{\leq k}[f]} + O\pbra{\size \cdot 2^{-k^{1/d}/2}} \cdot 2^{a/2} \label{eq:QAC-corr-gen}
        \end{align}
        In particular, 
        \begin{align}
            \Prx_x \sbra{C(x) = \Parity(x)} \leq \frac{1}{2} + O\pbra{\size \cdot 2^{-(n^{1/d} -a)/2}} \label{eq:parity-corr}
        \end{align}
        and
        \begin{align}
            \Prx_x \sbra{C(x) = \Majority(x)} \leq 1 - \Omega\pbra{n^{-1/2}} + 2^{-\Omega\pbra{n^{1/d} - a  -\log \size}}.
        \end{align}
    \end{theorem}
    \begin{proof}
        \Cref{eq:QAC-corr-gen} follows directly from combining \Cref{prop:bool-correlation-from-pauli-conc} and \Cref{thm:pauli-concentration}. The Parity bound (\Cref{eq:parity-corr}) is achieved by plugging in $k = n-1$, since $\weight{\leq n-1}[\Parity_n] = 0$. The majority bound is achieved by  $k = \Omega(n)$ since  $\weight{\leq k}\sbra{\Majority} \leq 1- \frac{1}{\sqrt{k}}$ and so $\sqrt{\weight{\leq k}\sbra{\Majority}}\leq1- \frac{1}{2\sqrt{k}}$.
    \end{proof}
    
This completes the proof.
\end{proof}

\begin{remark}
    More generally, we note that \Cref{thm:pauli-concentration} implies that $\QACZ$ circuits with $\frac{1}{2} n^{1/d}$ clean auxiliary qubits cannot compute a Boolean function $f\isazofunc$ that has a non-zero amount of mass on sufficiently high degree Fourier coefficients. As a consequence, as is the case with the results of \cite{LMN:93}, we obtain a lower bound against $\QACZ$ circuits computing Boolean functions with large \emph{total influence} (cf.~Chapter~2 of \cite{ODonnell2014}) and more generally for unitaries with large total influence (in the sense of~\cite{chen2023testing}). 
    
    For the sake of simplicity, however, we restrict our attention to the parity and majority functions and refer the interested reader to Section 5 of \cite{LMN:93} (or alternatively Chapter~4 of \cite{ODonnell2014}) instead.
\end{remark}

\section{Learning Low-Degree Channels}
\label{subsec:learning}
We will now leverage our concentration result to propose a learning algorithm for $\QACZ$ circuits. More generally, we establish quasipolynomial sample complexity for learning the Choi representation of \textit{any} low-degree concentrated quantum channel. We consider channels $\chan{}$ mapping $\nin$ input qubits to $\nout$ output qubits--- not just single output channels. Recall that the Choi representation of the channel $\Phi_{\chan{}}$ is of dimension $\Ntot\times \Ntot$ where $\Ntot := 2^\ntot$ and $\ntot = \nin + \nout$ (we further denote $\Nin:=2^\nin$ and $\Nout:=2^\nout$). 
Recall that a quantum channel $\calE$ or its Choi representation $\choir_\calE$ is $\epsilon$-\emph{concentrated up to degree} $k$, if $\weight{>k}[\choir_\calE] \leq \epsilon$. 

For our primary discussion of the learning procedure, we will consider a learning model in which we are given access to many copies of Choi state\footnote{Note that Choi state tomography is often referred to as the ``ancilla-assisted" learning model \cite{leung2000,mohensi2008qpt}.}, i.e. the normalized Choi representation 
$\rho_\calE := \frac{\Phi_\calE}{\tr(\Phi_\calE)}= \frac{1}{\Nin}\choirE$.
Indeed, the Choi state is a valid quantum state as it has $\Tr(\rho_\calE)=1$ and is Hermitian. The Choi state can be prepared by applying the channel $\chan{} $ to half of $\nin$ (\textit{normalized}) Bell pairs. Thus, this is weaker than given general query access to $\calE$. We note that a Choi state is a generalization of sampling a random example $(x, f(x))$ of a Boolean function --- as these are equivalent when the channel is $\chan{f}$.

With this background, we can now state the main result of the section:

\begin{theorem}[Quantum Low-Degree Learning] \label{thm:learnlowdeg}
    Let $k\geq 1$ and let $\calE$ be an $\nin$- to $\nout$-qubit quantum channel that is $\epsilon/2$-concentrated up to Pauli degree $k$. Denote $\ntot=\nin+\nout$ as the total number of input and output qubits of the channel. Then there exists an algorithm, which given $O\pbra{(3\ntot)^k\cdot \frac{1}{4^\ell \epsilon}\cdot \log(\ntot^k/\delta)}$ copies of the Choi state,
    outputs a description of quantum channel $\wt{\calE}$ such that $\frac{1}{\Ntot}\fnorm{\choir_{\calE} - \choir_{\wt{\calE}}}^2 \leq \epsilon$ with probability at least $1 - \delta$. 
\end{theorem}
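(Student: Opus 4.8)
The plan is to mirror the classical low-degree algorithm of \cite{LMN:93}: estimate every low-degree Pauli coefficient of $\choirE$, assemble them into a truncated Hermitian operator, and then round that operator to a legitimate Choi matrix. By Parseval's formula (\Cref{eq:qu_parseval_plancharel}) the target quantity can be rewritten purely in Pauli-coefficient space, namely $\frac{1}{\Ntot}\fnorm{\choirE - \choir_{\wt\calE}}^2 = \sum_{P\in\paulis[\ntot]}\abs{\wh{\choirE}(P) - \wh{\choir_{\wt\calE}}(P)}^2$, so it suffices to control the coefficients. The first step is to relate coefficients to measurable expectations: since $\choirE$ is Hermitian and $\choisE = \choirE/\Nin$, we have $\wh{\choirE}(P) = \tfrac{1}{\Ntot}\tr(\choirE P) = \tfrac{1}{\Nout}\tr(\choisE P)$, so each coefficient is a $\tfrac{1}{\Nout}$-scaled expectation of the Pauli observable $P$ on copies of the Choi state. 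This $\tfrac{1}{\Nout}$ rescaling is precisely the source of the favorable $\tfrac{1}{4^\nout}=\tfrac{1}{\Nout^2}$ factor in the sample complexity, since it lets us tolerate a proportionally larger additive error in the estimate of $\tr(\choisE P)$.

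Next I would apply classical shadow tomography \cite{Huang2020} with random single-qubit (Pauli-basis) measurements to the given copies of $\choisE$. A single pool of snapshots simultaneously yields unbiased real-valued estimators $\wt c_P$ for all low-degree observables, with the per-observable variance controlled by the shadow norm $3^{\abs P}\le 3^{k}$ of a weight-$\le k$ Pauli; crucially, the same snapshots are reused across all $O\pbra{(3\ntot)^k}$ Paulis of degree at most $k$. Setting $\wt\choir := \sum_{\abs P\le k}\wt c_P\,P$, I split the error as
\[
\sum_{P\in\paulis[\ntot]}\abs{\wh{\choirE}(P)-\wh{\wt\choir}(P)}^2 = \underbrace{\sum_{\abs P\le k}\abs{\wh{\choirE}(P)-\wt c_P}^2}_{\text{statistical error}} + \underbrace{\weight{>k}[\choirE]}_{\le\,\epsilon/2}\,,
\]
where the high-degree tail is bounded by the $\epsilon/2$-concentration hypothesis. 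Summing the per-coefficient variances $\tfrac{3^{\abs P}}{\Nout^2 T}$ over all degree-$\le k$ Paulis and forcing the expected statistical error down to $\epsilon/2$ fixes the number $T$ of Choi-state copies; tracking the $\tfrac{1}{\Nout^2}=\tfrac{1}{4^\nout}$ scaling against the count of low-degree Paulis yields the stated $O\pbra{(3\ntot)^k\cdot\tfrac{1}{4^\nout\epsilon}}$ copies per trial, and a median-of-means (or independent-repetition) boosting over the whole estimate supplies the extra $\log(1/\delta)$ factor for the high-probability guarantee.

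The final step is to convert the Hermitian operator $\wt\choir$---which need not be positive semidefinite nor trace-preserving---into the Choi representation of an actual channel. I would let $\calC := \cbra{M\in\calM_{\Ntot} : M\succeq 0,\ \tr_{\out}(M) = I_{\Nin}}$ be the set of valid Choi matrices; this is a closed convex set and, since $\calE$ is CPTP, $\choirE\in\calC$. Defining $\choir_{\wt\calE}$ to be the Frobenius (Euclidean) projection of $\wt\choir$ onto $\calC$, the standard non-expansiveness of projection onto a closed convex set gives $\fnorm{\choir_{\wt\calE}-\choirE}\le\fnorm{\wt\choir-\choirE}$ for the in-set point $\choirE$, so rounding can only decrease the error. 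Combining with the previous paragraph yields $\frac{1}{\Ntot}\fnorm{\choirE-\choir_{\wt\calE}}^2\le\epsilon$ with probability at least $1-\delta$, as required; note that only the \emph{existence} of this projection is needed here, not an efficient algorithm to compute it.

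I expect the main obstacle to be the sample-complexity accounting of the estimation step: one must bound the aggregate statistical error over exponentially-many (in $k$) low-degree Pauli coefficients while reusing a single pool of snapshots, keeping careful track of how the shadow variance $3^{\abs P}$, the coefficient rescaling $\tfrac{1}{\Nout}$, and the number of degree-$\le k$ Paulis interact, and then upgrading the in-expectation guarantee to a high-probability one. By contrast, the two bookends are comparatively routine: the tail control is exactly the $\epsilon/2$-concentration hypothesis, and the rounding rests only on the elementary fact that Euclidean projection onto a closed convex set containing $\choirE$ does not increase distances to $\choirE$.
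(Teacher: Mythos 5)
Your proposal is correct and follows essentially the same route as the paper: estimate the degree-$\le k$ Pauli coefficients of $\choirE$ via classical shadow tomography on copies of the Choi state (exploiting the $1/\Nout$ rescaling between $\tr(\choisE P)$ and $\wh{\choirE}(P)$ to gain the $1/4^\nout$ factor), bound the tail by the concentration hypothesis via Parseval, and round by Frobenius projection onto the convex set of CPTP Choi matrices, which cannot increase the distance to $\choirE$. The only (immaterial) difference is bookkeeping: the paper invokes the high-probability guarantee of shadow tomography with a uniform per-coefficient accuracy $\eta=\sqrt{\epsilon/(2|\calF_k|)}$, whereas you sum per-coefficient variances and boost by repetition, which yields the same sample complexity.
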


We remark that as stated, the sample complexity improves as the output dimension of the channel grows for a fixed approximation error $\epsilon$. Though, it is not quite true that our algorithm is ``better'' for larger $\ell$. Rather, the notion of approximation used gets weaker as the number of outputs $\ell$ grows. If we instead consider the expected trace distance between the channels output for a random input chosen from orthonormal basis $\calB$ of $\CC^\Nin$, we have $\E_{\ket{\psi} \in \calB} \tnorm{(\calE- \calF)(\psi)} \leq \sqrt{\Nout} \cdot E_{\ket{\psi} \in \calB} \fnorm{(\calE- \calF)(\psi)}$. Therefore, by \Cref{proposition:basis-distance-to-choi} we have that
    \begin{align}
        \E_{\ket{\psi} \in \calB} \tnorm{\calE(\psi) - \calF(\psi)}^2 \leq
        \frac{\Nout^2}{\Ntot}\fnorm{\choir_\calE - \choir_{\calF}}^2 \label{eq:avg-tdist-to-choir-fdist}.
    \end{align}
    So if our goal is to learn the expected trace distance squared up to error $\eta$, then our error for our learning algorithm is $\epsilon = \eta/L^2$  giving sample complexity $\mathrm{polylog}(n^k) \frac{1}{\epsilon} \log(1/\delta)$ independent of $\ell$, as we'd expect. 

In the special case where $\calE$ is the classical channel for Boolean function $f :\zo^\nin \to \zo$, learning approximately low-degree functions is a direct corollary of our result.
\begin{corollary}\label{cor:learning-bool}
    For each $f:\zo^\nin \to \zo$ with Fourier weight that is $2\epsilon$-concentrated up to degree $k$, given $O((3\nin)^k \frac{1}{\epsilon} \log(\nin^k/\delta))$ random examples $(x, f(x))$, we can learn $h: \zo^\nin \to \zo$ such that $\Pr_x[f(x) \neq h(x)] \leq 4\epsilon$
\end{corollary}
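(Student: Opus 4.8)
The plan is to obtain \Cref{cor:learning-bool} by specializing the Quantum Low-Degree Learning theorem (\Cref{thm:learnlowdeg}) to the classical channel $\chan{f}$ of \Cref{prop:fourier-to-pauli}, and then \emph{rounding} the learned channel back to a (probabilistic) Boolean function. Two things must be supplied: first, translating the Fourier-concentration hypothesis on $f$ into a Pauli-concentration hypothesis on $\choir_{\chan{f}}$; and second, converting the Frobenius guarantee produced by \Cref{thm:learnlowdeg} into a bound on the pointwise disagreement probability.

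For the translation, recall from \Cref{prop:fourier-to-pauli} that $\choir_{\chan{f}} = \tfrac12\pbra{I^{\otimes(\nin+1)} + \sum_{S\sse[\nin]}\wh f(S)\, Z_S\otimes Z}$, so the only nonzero Pauli coefficients are the degree-$0$ term and the coefficients $\tfrac12\wh f(S)$ on $Z_S\otimes Z$, which has degree $|S|+1$. Hence $\weight{>k'}[\choir_{\chan{f}}] = \tfrac14\,\weight{>k'-1}[f]$ for every $k'\geq 1$. Taking $k' = k+1$ and using the hypothesis $\weight{>k}[f]\leq 2\epsilon$ gives $\weight{>k+1}[\choir_{\chan{f}}] \leq \tfrac{\epsilon}{2}$, i.e.\ $\chan{f}$ is $\epsilon/2$-concentrated up to Pauli degree $k+1$. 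Since a Choi-state sample of $\chan{f}$ is exactly a uniformly random labelled example $(x, f(x))$, I can invoke \Cref{thm:learnlowdeg} with $\nin = n$, $\nout = 1$ (so $4^{\nout} = 4$ and $\ntot = n+1$) and degree $k+1$, producing a channel $\wt\calE$ with $\tfrac{1}{\Ntot}\fnorm{\choir_{\chan{f}} - \choir_{\wt\calE}}^2 \leq \epsilon$; plugging these values into the theorem's sample bound gives $O\!\pbra{(3\ntot)^{k+1}\tfrac{1}{\epsilon}\log(1/\delta)}$, matching the claimed $O((3n)^k\tfrac{1}{\epsilon}\log(1/\delta))$ up to the degree shift and low-order factors.

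For the rounding step, I define the probabilistic function $g$ by letting $g(x)$ be the computational-basis measurement outcome of $\wt\calE(\proj x)$, and set $w_x := \Prx[g(x)\neq f(x)]$, so that $\Prx_x[f(x)\neq g(x)] = \Ex_x[w_x]$. Since $\chan{f}(\proj x) = \proj{f(x)}$ and the measured output is diagonal, a direct computation gives $\tfrac12\fnorm{\meas{1}(\wt\calE(\proj x)) - \chan{f}(\proj x)}^2 = w_x^2$; averaging over $x$ exactly as in the proof of \Cref{prop:bool-correlation-from-pauli-conc} yields $\Ex_x[w_x^2] = \tfrac{1}{\Ntot}\fnorm{\meas{n+1}(\choir_{\wt\calE}) - \choir_{\chan{f}}}^2$. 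Because $\meas{n+1}$ only zeroes out Pauli coefficients and fixes the already-classical target $\choir_{\chan{f}}$, this is at most $\tfrac{1}{\Ntot}\fnorm{\choir_{\wt\calE} - \choir_{\chan{f}}}^2 \leq \epsilon$. Finally Cauchy--Schwarz gives $\Prx_x[f(x)\neq g(x)] = \Ex_x[w_x] \leq \sqrt{\Ex_x[w_x^2]} \leq \sqrt{\epsilon}$.

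The routine part is the spectral translation; the main obstacle is the rounding step, where \Cref{thm:learnlowdeg} only controls the \emph{squared} Frobenius distance of the Choi representations while the corollary asks for a \emph{first-moment} bound on the disagreement. Handling this requires passing to the classical (measured) version of $\wt\calE$, arguing that the measurement channel does not increase the Frobenius distance to the classical target, and then paying a square root via Cauchy--Schwarz---which is precisely what turns the $\epsilon$ Frobenius error into the claimed $\sqrt{\epsilon}$ disagreement.
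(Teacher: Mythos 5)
Your proposal is correct and follows essentially the same route as the paper: translate Fourier concentration of $f$ into Pauli concentration of $\choir_{\chan{f}}$ via \Cref{prop:fourier-to-pauli}, note that Choi-state samples are exactly uniform labelled examples, invoke \Cref{thm:learnlowdeg}, project the learned Choi representation onto the classical (diagonal/$\{I,Z\}$) form to obtain a probabilistic function $g$, and convert the squared-Frobenius guarantee into a disagreement bound by Cauchy--Schwarz, paying the $\sqrt{\epsilon}$. The only difference is cosmetic---you phrase the rounding via the measurement channel $\meas{n+1}$ rather than the paper's ``drop the non-classical Pauli terms'' step, and you track the degree shift $k\mapsto k+1$ more carefully than the paper does (the paper asserts $\epsilon/2$-concentration of $\chan{f}$ at level $k$, which strictly follows only at level $k+1$).
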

\begin{proof}
    First note that $\chois_f = \frac{1}{\Nin}\sum_{x\in \zo^\nin} \proj{x} \otimes \proj{f(x)}$. So, $\chois_f$ is equivalent to being given a random example $(x, f(x))$ for $x$ drawn uniformly from $\zo^\nin$. Furthermore, by \Cref{prop:fourier-to-pauli}, we have that since $f$ is $2\epsilon$-concentrated up to level $k$, then $\calE_f$ is $\epsilon/2$-concentrated up to level $k+1$. Therefore, by \Cref{thm:learnlowdeg} it follows that with $O\pbra{(3\ntot)^k \cdot \frac{1}{4^\ell \epsilon} \cdot \log(\ntot^k/\delta)}$ copies of the Choi state, with probability $1-\delta$, we learn a $\wt{\choir}$ such that $\frac{1}{\Ntot}\fnorm{\choir_f - \wt{\choir}}^2 \leq \epsilon$.

     For the remainder of this proof we relabel the output bits of $f$ from $\zo$ to $\{\pm1\}$ for convenience. Thus the considered Boolean function is $f: \zo^\nin \to \bits$. Adapting \Cref{prop:fourier-to-pauli} with this relabeling, we can write $\choir_f$ as follows.
     \begin{align}
         \choir_f = \sum_{x\in \zo^\nin} \proj{x} \otimes \frac{1}{2} (I + f(x) Z) \ = \ \frac{1}{2} I + \frac{1}{2} \sum_{S\sse [\nin]} \wh{f}(S) Z(S) \otimes Z \label{eq:bool-learning:f}
     \end{align}
    Without loss of generality, we can assume that the Choi representation $\wt{\choir}$ learned by our algorithm has a similar form for some $g: \bits^\nin \to [-1, +1]$ with outputs in the $[-1, +1]$ interval.
    \begin{align}
        \wt{\choir} = \choir_g = \sum_{x\in \zo^\nin} \proj{x} \otimes \frac{1}{2}(I + g(x) Z) \ = \ \frac{1}{2} I + \frac{1}{2} \sum_{S\sse [\nin]} \wh{g}(S) Z(S) \otimes Z \label{eq:bool-learning:g}
    \end{align}
    This is without loss of generality since for each $P \in \paulis[\nin] \setminus \{I, Z\}^{\otimes \nin}$ setting $\wh{g}(P)$ to zero will not increase the error. Note that the corresponding channel $\calE_g$ is now a randomized classical channel: for each $x\in \zo^\nin$,  $\calE_g(\proj{x}) = \frac{1}{2}(1 + g(x)) \proj{0} + \frac{1}{2}(1 - g(x)) \proj{1}$.

    Our algorithm in \Cref{thm:learnlowdeg} learns a description of $\choir_g$ in terms of its Pauli coefficients $\wh{\choir_g}$. From this, we also have the Fourier coefficients of $g$ as $\wh{g}(S) = 2 \cdot \wh{\choir_g}(Z(S) \otimes Z)$. Let $h:\zo^\nin \to \bits$ be the rounded version of $g$, that is $h(x) = \textrm{sign}(g(x))$ for each $x\in \zo^\nin$. (Here $\textrm{sign}(b) = +1$ if $b \geq 0$ and $\textrm{sign}(b) = -1$ if $b < 0$.) We simply output $h$ as our approximation for $f$. All that is left is to bound the probability that $h$ is wrong:
    \begin{align*}
        \Pr_x [ f(x) \neq h(x)] &= \frac{1}{4} \E_x |f(x) - h(x)|^2\\
        &\leq \E_x |f(x) - g(x)|^2\\
        &= \sum_{S\sse [\nin]} |\wh{f}(S) - \wh{g}(S) |^2\\
        &= 4 \cdot \sum_{S\sse [\nin]} |\wh{\choir_f}(Z(S)\otimes Z)  - \wh{\choir_g}(Z(S)\otimes Z)) |^2\\
        &= 4 \cdot \norm{\wh{\choir_f} - \wh{\choir_g}}^2 = \frac{4}{\Ntot} \fnorm{\wh{\choir_f} - \wh{\choir_g}}^2 \leq 4\epsilon.
    \end{align*}
    In the first inequality we used that $|f(x) - h(x)| \leq 2|f(x) - g(x)|$ since $|f(x) - h(x)| = 2$ when $\textrm{sign}(g(x)) \neq f(x)$ and $|f(x) - h(x)| = 0$ otherwise. The remaining equalities follow from Parseval's (\Cref{eq:qu_parseval_plancharel}) and \Cref{eq:bool-learning:f,eq:bool-learning:g}
\end{proof}

We denote $\calF_k$ as the subset of all Paulis of degree at most $k$.
\[\calF_k := \cbra{P\in\calP_\ntot : |P| \leq k}.\]
Note that the size of $|\calF_k|$ is at most $O(\ntot^k)$. Our construction follows a similar ``meta-algorithm'' for learning low-degree Boolean function as outlined in Chapter 3 of \cite{ODonnell2014}, based on original work by \cite{LMN:93}. 
\begin{enumerate}
    \item \textbf{Estimate Pauli Coefficients:} For each of the low degree Paulis $P\in \calF_k$, we first learn an approximation $\wt{\choir}(P)$ for $\wh{\choir}_\calE(P)$ within additive error $\eta$.
    Setting $\eta$ sufficiently small, we get an $\epsilon$-approximation to $\choir_\calE$.
    \item \textbf{Rounding:} Round $\wt{\choir}$ to the nearest $\choir_{\wt{\calE}}$ such that $\wt{\calE}$ is a quantum channel. We design this convex projection to reduce distance from $\choir_\calE$, ensuring that $\choir_{\wt{\calE}}$ is also $\epsilon$-close to $\choir_\calE$.
\end{enumerate}

Before proving \Cref{thm:learnlowdeg}, we first show how to obtain additive $\eta$-approximations of the Choi representation coefficients $\widetilde{\Phi}(P)$ for each low degree Pauli $P \in \mathcal{F}_k$.

\begin{lemma} \label{lem:learn-coeff-choi}
    For each $\delta, \eta \in (0,1)$, and $\nin$- to $\nout$-qubit channel $\chan{}$, there exists an algorithm that given $T = O\pbra{\frac{3^k}{4^\nout \eta^2} \cdot \log(\ntot^k/\delta)}$ copies of $\chois_{\chan{}}$, outputs $\{\wt{\choir_{\calE}}(P) \}_{P\in \calF_k}$ such that $|\wt{\choir_{\calE}}(P) - \wh{\choir}_{\calE}(P) | \leq \eta$ for each $P\in \calF_k$.
\end{lemma}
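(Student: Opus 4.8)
The plan is to reduce the estimation of each low-degree Pauli coefficient to the estimation of a Pauli \emph{expectation value} on the Choi state, and then to run classical shadow tomography with random single-qubit Pauli-basis measurements, exactly the primitive analyzed by Huang, Kueng, and Preskill~\cite{Huang2020}.

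First I would record the rescaling identity that drives the whole argument. Since $\choir_\calE = \Nin\,\chois_\calE$ and both $\choir_\calE$ and $P$ are Hermitian,
\[
\wh{\choir_\calE}(P) \;=\; \frac{1}{\Ntot}\Tr(\choir_\calE P) \;=\; \frac{\Nin}{\Ntot}\Tr(\chois_\calE P) \;=\; \frac{1}{\Nout}\Tr(\chois_\calE P),
\]
using $\Ntot = \Nin\Nout$. Thus estimating $\wh{\choir_\calE}(P)$ to additive error $\eta$ is \emph{exactly} the task of estimating the expectation $\Tr(\chois_\calE P)$ to additive error $\eta\,\Nout$. This observation is precisely where the favorable factor $4^\nout = \Nout^2$ in the denominator of $T$ comes from: because each Pauli coefficient is a $\tfrac{1}{\Nout}$-damped expectation value, we may tolerate $\Nout$ times coarser error in the underlying observable, which cheapens the sample count quadratically in $\Nout$.

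Next I would invoke classical shadows. For each of the $T$ copies of $\chois_\calE$, measure every one of the $\ntot$ qubits in an independently, uniformly random basis among $\{X,Y,Z\}$, producing one classical-shadow snapshot. For a fixed Pauli $P$ of weight $\abs{P} = w$, the standard single-shot estimator $\hat o_P$ of $\Tr(\chois_\calE P)$ is unbiased and has variance controlled by its shadow norm, $\norm{P}_{\mathrm{shadow}}^2 \le 3^{\,\abs{P}}\,\norm{P}_\infty^2 = 3^{\,\abs{P}} \le 3^k$, where the last steps use $\norm{P}_\infty = 1$ and $P \in \calF_k$. Crucially the \emph{same} snapshots are reused for all $P \in \calF_k$, so no additional samples are spent per coefficient. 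Applying the median-of-means estimator across $K = O(\log(\abs{\calF_k}/\delta))$ batches of $N = O\!\pbra{3^k/(\eta\Nout)^2}$ snapshots each, and setting $\wt{\choir_\calE}(P) := \tfrac{1}{\Nout}\,\hat\mu(P)$ for the resulting batch-median $\hat\mu(P)$, yields $\abs{\wt{\choir_\calE}(P) - \wh{\choir_\calE}(P)} \le \eta$ simultaneously for every $P \in \calF_k$ with probability at least $1-\delta$. Since $\abs{\calF_k} = O((3\ntot)^k)$ and $\Nout^2 = 4^\nout$, the total sample count is $T = NK = O\!\pbra{\frac{3^k}{4^\nout \eta^2}\log(1/\delta)}$, after absorbing the lower-order $\log\abs{\calF_k} = O(k\log\ntot)$ factor into the bound.

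I expect the main obstacle to be the variance (shadow-norm) bound $3^{\,\abs{P}}$ for Pauli observables and its correct coupling with the $\tfrac{1}{\Nout}$ rescaling: establishing that a random Pauli-basis snapshot yields an unbiased estimator whose second moment is exactly $3^{\,\abs{P}}$ requires unwinding the inverse of the single-qubit measurement channel and checking that only the qubits in $\supp(P)$ contribute, each with a factor of $3$. By contrast, the median-of-means concentration and the union bound over $\calF_k$ are routine. An alternative, more experimentally friendly route in the measurement-query model replaces the classical shadows by direct tomography on $\calE$ with carefully chosen input states and observables; the variance bookkeeping there is analogous.
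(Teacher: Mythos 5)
Your proposal is correct and follows essentially the same route as the paper's proof: the identity $\wh{\choir_\calE}(P) = \frac{1}{\Nout}\Tr(\chois_\calE P)$ reduces coefficient estimation to estimating Pauli expectations of the Choi state to error $\eta\,\Nout$ (the source of the $4^{\nout}$ savings), followed by classical shadow tomography with random Pauli-basis measurements over all $P \in \calF_k$. The only difference is presentational: the paper invokes the Huang--Kueng--Preskill guarantee as a black box, whereas you unpack its internals (shadow norm $3^{|P|}$, median-of-means, snapshot reuse), which is a fine but not essentially different argument.
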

\begin{proof}
    We denote $\choir = \choir_{\chan{}}$ and $\chois = \chois_{\chan{}}$.
Recall that $\chois = \frac{1}{\Nin} \choir$. So we have that
\[
\tr(P\chois) = \frac{1}{\Nin}\sum_{Q\in \paulis[\ntot]} \wh{\choir}(Q) \tr(PQ) =\frac{1}{\Nin} \wh{\choir}(P) \tr(I^{\otimes \ntot}) = 2^{\nout} \cdot \wh{\choir}(P).
\]
To approximate $\wh{\choir}(Q)$ to within $\eta$ additive error, it is sufficient to approximate observable $\Tr(P\chois)$ to within $\eta_0 = \eta\cdot \Nout$ additive error.
A standard application of the Chernoff bound gives that, with probability $1-\delta_0$, we can obtain a $\pm\eta_0$-estimate $\widetilde{\rho}(Q)$ of $\Tr(Q\rho)$ by performing 
\[O\pbra{\frac{1}{{\eta_0^2}}\cdot\log\pbra{\frac{1}{\delta_0}}}\]
measurements of $\rho$ with respect to Pauli observable $Q$. However, in order to learn all $O(\ntot^k)$ coefficients for $\mathcal{F}_k$ with probability $1-\delta$, by union bound, we must correctly learn each coefficient with probability $1-\delta_0=1-\frac{\delta}{O(\ntot^k)}$. In all, this implies that a sample complexity of 
\[O\pbra{\frac{\ntot^k}{4^\nout \eta^2} \log\left(\frac{\ntot^k}{\delta}\right)}.\]
suffices to learn $\eta$-approximations of all low-degree Choi state coefficients, i.e. $|\wh{\chois}(Q)-\wt{\chois}(Q)| \leq \eta$ for each $Q \in \calF_k$, with probability at least $1-\delta$. Already this gives us an algorithm to learn $\eta$ approximations of the Choi coefficients for $\calF_k$ with sample complexity 

However, we can approximate the set of expectations $\{\tr(P \rho)\}_{P \in \mathcal{F}_k}$ 
with even better sample complexity by leveraging a well-established, sample-optimal quantum tomography procedure, known as Classical Shadow Tomography (CST) due to Huang et al.~\cite{Huang2020} --- which we use as a black-box.

\begin{theorem}[Classical Shadow Tomography \cite{Huang2020}\footnote{Although we reference the original classical shadow tomography work of \cite{Huang2020}, our bounds leverage improved shadow norm results of \cite{huang2022learning} for Pauli observables, as described in Appendix C.} with Low-Degree Pauli Observables]\label{thm:CST} Assume we are given an unknown $\nin$-qubit state $\rho$, and accuracy parameters $\eta_0,\delta \in [0,1]$, 
and  
\begin{align} \label{eqn:CST_sample}
    O \pbra{\frac{3^k}{\eta_0^2}  \log\pbra{\frac{\nin^k}{\delta}} }
\end{align}
copies of $\rho$ that we make random Pauli measurements on. Then with probability at least $(1-\delta)$, we output an estimate $\wt{s}(P) \in [-1, 1]$ for each $P \in \calF_k$ such that $\abs{\wt{s}(P)-\tr(P \rho)} \leq \eta_0$.
\end{theorem}

By applying CST as in \Cref{thm:CST} on the Choi state with $\eta_0 = \eta \cdot \Nout$, we get our desired sample complexity of $O\pbra{\frac{3^k}{4^\nout \eta^2} \log\pbra{\frac{\ntot^k}{\delta}}}$.
\end{proof}
In this work, we do not provide an explicit rounding procedure, but instead treat finding the nearest Choi representation corresponding to a quantum channel as a black box. In \Cref{sec:rounding} we illustrate that the rounding task can be described as a convex optimization problem. As far as the authors are aware, it is currently unknown whether this optimization problem can be solved exactly in $O(|\calF_k|)$ time.

We are now equipped to prove \Cref{thm:learnlowdeg}.

\begin{proof}[Proof of \Cref{thm:learnlowdeg}]
    Let $\eta = \sqrt{\frac{\epsilon}{2|\calF_k|}}$. By 
    \Cref{lem:learn-coeff-choi} we can learn an approximation $\wt{\choir}(P)$ with $|\wt{\choir}(P) - \wh{\choir}(P)| \leq \eta$ for each $P \in \calF_k$ using only $O\pbra{\frac{3^k}{4^\nout \eta^2} \cdot \log(\ntot^k/\delta)}$ copies of Choi state $\chois_{\calE}$. 
    Since $\choir$ is $\epsilon/2$-concentrated up to level $k$ we have that
    \begin{align} \label{eqn:choi_pars}
    \sum_{P\in\pauli_\ntot}\left|\widehat{\Phi}(P)-\widetilde{\Phi}(P)\right|^2&= \sum_{P\in\mathcal{F}_k}\left|\widehat{\Phi}(P)-\widetilde{\Phi}(P)\right|^2 +  \sum_{P\notin\mathcal{F}_k}\left|\widehat{\Phi}(P)\right|^2\\
    &\leq |\calF_k| \cdot \frac{\epsilon}{2|\calF_k|} + \frac{\epsilon}{2} = \epsilon.
    \end{align}
    By Parseval's (\Cref{eq:qu_parseval_plancharel}) it follows that $\frac{1}{\Ntot}\fnorm{\choir - \wt{\choir}}^2 \leq \epsilon$. The next step is to round $\wt{\choir}$ to the nearest Choi representation $\choir_{\wt{\calE}}$ for a quantum channel $\wt{\calE}$. As we will discuss in \Cref{sec:rounding}, this rounding can be formulated as a convex projection which will only further reduce the error. Thus, 
    \begin{align*}
        \frac{1}{\Ntot}\fnorm{\choir - \choir_{\wt{\calE}}}^2 \leq \frac{1}{\Ntot}\fnorm{\choir - \wt{\choir}}^2 \leq \epsilon.
    \end{align*}
\end{proof}

\subsection{Learning \texorpdfstring{$\QACZ$}{QAC0} Channels} \label{sec:qacz_learn}
Combining \Cref{thm:learnlowdeg} with our low-degree concentration bound of $\QACZ$ in \Cref{sec:low-deg-conc}, a learning algorithm for $\QACZ$ circuit immediately follows. All together, the learning algorithm applied to $\QACZ$ channels closely follows the template of the ``low-degree algorithm'' introduced by Linial, Mansour, and Nisan to learn $\ACZ$ circuits (see~\Cref{sec:LMN}). 

\begin{theorem}[Quasipolynomial Learning of $\QACZ$]\label{thm:learnqac0}
   Let $C$ be a $(n+a)$-qubit $\QAC$ circuit of depth-$d$ and size-$\size$ with $a \leq \frac{1}{2}\log\pbra{s^2/\epsilon }$. Let $\calE = \calE_{C, \ket{0^a}}$. For each $\delta>0$ and error $\epsilon$, we can learn a channel $\widetilde{\calE}$, with 
   $\frac{1}{\Ntot}\fnorm{\choir_{\calE} - \choir_{\wt{\calE}}}^2 \leq \epsilon$
   except with probability $\delta$, using $T = n^{O\pbra{\log^d(\size^2/\epsilon)}}\cdot \log\left(\frac{1}{\delta}\right)$ copies of the Choi state.
\end{theorem}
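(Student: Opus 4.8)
The plan is to combine the low-degree concentration bound for $\QACZ$ channels with clean auxiliary qubits (\Cref{thm:low_deg_conc_formal_arb_anc}) with the generic quantum low-degree learning algorithm (\Cref{thm:learnlowdeg}); the theorem then follows by choosing the truncation degree $k$ correctly and pushing the parameters through. First I would observe that since $\calE = \calE_{C, \ket{0^a}}$ has a single output qubit, we have $\nout = 1$ and $\ntot = n+1$, and the auxiliary state $\psi = \proj{0^a}$ is pure, so $\fnorm{\psi}^2 = 1$. Applying \Cref{thm:low_deg_conc_formal_arb_anc} then gives $\weight{>k}[\choir_{\calE}] \leq O\pbra{\size^2 \cdot 2^{a - k^{1/d}}}$ for every $k \in [n+1]$.

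The key calculation is to pick $k$ so that $\calE$ is $\epsilon/2$-concentrated up to degree $k$. Requiring $O\pbra{\size^2 2^{a-k^{1/d}}} \leq \epsilon/2$ amounts to $k^{1/d} \gtrsim \log(\size^2/\epsilon) + a$. Here the hypothesis $a \leq \tfrac{1}{2}\log\pbra{\size^2/\epsilon}$ is exactly what is needed: it guarantees $\log(\size^2/\epsilon) + a = O\pbra{\log(\size^2/\epsilon)}$, so it suffices to take $k = \Theta\pbra{\log^d(\size^2/\epsilon)}$. This is the one place the bound on the number of auxiliary qubits enters, and it is precisely where the $2^a$ blow-up from \Cref{prop:conc_change_choi_anc} (clean auxiliary qubits) is being paid for.

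With this choice of $k$, I would invoke \Cref{thm:learnlowdeg} with $\nout = 1$, whose sample complexity is $O\pbra{(3\ntot)^k \cdot \tfrac{1}{4\epsilon} \cdot \log(1/\delta)}$. Substituting $\ntot = n+1$ and $k = \Theta\pbra{\log^d(\size^2/\epsilon)}$, the factor $(3(n+1))^k$ is $n^{O\pbra{\log^d(\size^2/\epsilon)}}$, and both the constant $3$ and the $1/\epsilon$ factor are absorbed into this quasipolynomial, since $n^{\log^d(\size^2/\epsilon)} \geq 2^{\log^d(1/\epsilon)} \geq 1/\epsilon$. This delivers the claimed bound $T = n^{O\pbra{\log^d(\size^2/\epsilon)}}\log(1/\delta)$. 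The algorithm itself is the two-step procedure of \Cref{thm:learnlowdeg}: estimate every low-degree Pauli coefficient of $\choir_{\calE}$ via classical shadow tomography (\Cref{lem:learn-coeff-choi}), then round the resulting estimate to a valid CPTP Choi representation.

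Since both ingredients are already established, there is no serious technical obstacle here; the argument is essentially a parameter chase, and the only thing demanding care is verifying that the constraint on $a$ keeps $k$ at $O\pbra{\log^d(\size^2/\epsilon)}$ and that all lower-order multiplicative factors (the $1/\epsilon$, the $4^{\nout}$, and the various constants) are genuinely swallowed by the $n^{O(\cdot)}$ expression. A secondary caveat worth flagging — though it does not affect the sample complexity stated — is that \Cref{thm:learnlowdeg} treats the final rounding step as a black box, so the theorem is a statement about sample complexity rather than running time.
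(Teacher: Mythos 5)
Your proposal is correct and follows essentially the same route as the paper's proof: apply \Cref{thm:low_deg_conc_formal_arb_anc} with the pure auxiliary state $\proj{0^a}$, use the hypothesis $a \leq \tfrac{1}{2}\log(\size^2/\epsilon)$ to conclude that $k = \Theta\pbra{\log^d(\size^2/\epsilon)}$ yields $\epsilon/2$-concentration, and plug this $k$ into \Cref{thm:learnlowdeg} with $\nout = 1$. If anything, your parameter chase is slightly more careful than the paper's (which sets $\wt{k}^{1/d}$ exactly equal to $\log(\size^2/\epsilon)$, leaving residual weight $O(\size\sqrt{\epsilon})$ rather than $O(\epsilon)$ unless the constant in $\wt{k}$ is enlarged, as your formulation does implicitly), and your closing caveat about the rounding step being a sample-complexity-only black box matches the paper's own treatment.
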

\begin{proof}
    Following from \Cref{thm:low_deg_conc_formal_arb_anc}, the weight of $\chan{}$ above level $k$ is at most
    \begin{align*}
        \weight{>k}[\choir_{\chan{}}] \leq O\pbra{\size^2 2^{-k^{1/d} + a}}.
    \end{align*}
    Setting $\wt{k} = \log^d(\size^2/\epsilon)$ and using that $a\leq \frac{1}{2} \log(\size^2/\epsilon) = \frac{1}{2}\wt{k}^{1/d}$, we see that 
    $\choir_{\chan{}}$ is $O(\epsilon)$-concentrated below $\wt{k}$, i.e. $\weight{>\wt{k}}[\choir_{\chan{}}] \leq O(\epsilon)$.
    
    Noting that the $\QACZ$ channel has $\nout=1$ output qubits and plugging this concentration degree bound into the sample complexity of \Cref{thm:learnlowdeg} implies that the \qalg uses 
    \begin{align*}
        &O \pbra{\frac{ (3\ntot)^{\wt{k}}}{\epsilon\cdot 4}\log\pbra{\frac{\ntot^{\wt{k}}}{\delta}}} = \ntot^{O\pbra{\log^d(s^2/\epsilon)}} \cdot \log(1/\delta). 
    \end{align*}
    copies of the Choi state,
    or pure state queries to learn an $\epsilon$-approximation of $\choir_\calE$ with probability $1-\delta$.
\end{proof}

Note that naive tomography would result in a $O(4^\ntot)$ bound on the sample complexity (and hence runtime) of learning the Choi representation $\Phi$. \Cref{thm:learnqac0} obtains significant savings by requiring only a quasi-polynomial number of samples (When $\epsilon, \log(1/\delta) = O(1/\poly(\ntot))$).

\subsection{Rounding onto a CPTP Map via Convex Optimization}\label{sec:rounding}
\begin{figure}[t]
    \centering
    \includegraphics[width=.5\textwidth]{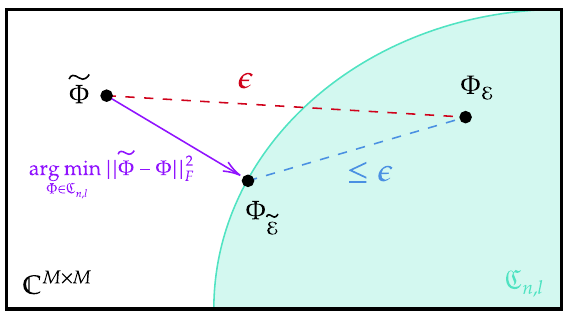}
    \caption{The projection $\Phi_{\widetilde{\calE}} = \underset{\Phi \in \choispace}{\arg\min}\: \|\widetilde{\Phi}-\Phi\|_F^2 \in \choispace$ maps the learned matrix $\widetilde{\Phi}\in\mathbb{C}^{\Ntot\times \Ntot}$ onto the nearest valid Choi representation. Since the space of valid Choi representations, $\choispace$, is a closed convex set, the distance between $\Phi_{\widetilde{\calE}}$ and any other Choi representation must be less than or equal to the distance between $\widetilde{\Phi}$ and any Choi representation. Therefore, if $\widetilde{\Phi}$ is $\epsilon$-close to $\Phi_\calE \in \choispace$, $\Phi_{\widetilde{\calE}}$ is $\epsilon$-close to $\Phi_\calE \in \choispace$.}
    \label{fig:projection}
\end{figure}

The final step of the \qalg outlined in the previous section is to round the learned matrix $\wt{\choir}$, encoding a low-degree approximation of $\choir_\calE$, to a Choi representation $\choir_{\wt{\calE}}$, encoding a completely-positive trace-preserving (CPTP) map $\wt{\calE}$. In order to encode a CPTP map, $\choir_{\wt{\calE}}$ must satisfy\footnote{For a good reference on the mathematical properties of Choi representations and CPTP maps, we refer the interested reader to \cite{watrous2018theory}.} the Complete-Positivity (CP) criterion:
\begin{align} \label{eqn:cp}
    \choir_{\wt{\calE}} \succeq 0 \quad \text{(positive semi-definite)}
\end{align}
and the Trace-Preserving (TP) criterion:
\begin{align} \label{eqn:tp}
    \tr{}_\out(\choir_{\wt{\calE}}) = I_\inn.
\end{align}

Generically, this involves solving a convex constrained minimization problem. Namely, if we let $\choispace$ denote the set of all Choi representation matrices corresponding to an $\nin$ to $\nout$-qubit CPTP, then 
\begin{align} \label{eqn:min_choi}
    \choir_{\wt{\calE}} = \underset{\Phi \in \choispace}{\arg\min}\: \|\widetilde{\Phi}-\Phi\|_F^2.
\end{align}
Since $\choispace$ is a convex set, as demonstrated in \Cref{appendix:convex_proj}, projecting onto $\choispace$ by minimizing with respect to the Frobenius norm reduces the overall error of our learning algorithm. As depicted in \Cref{fig:projection}, with this projection, $\choir_{\wt{\calE}}$ lies closer to the true Choi representation $\Phi_\calE$ than $\wt{\Phi}$ does. Also note that this minimization over the set $\choispace$ can equivalently be expressed as a constrained minimization problem over the larger space of complex matrices $\mathbb{C}^{\Ntot \times \Ntot}$:
\begin{align} \label{eqn:min_e}
    E^* = \underset{E\in\mathbb{C}^{\Ntot \times \Ntot}}{\arg\min} \|E\|_F^2 \hspace{0.1in} \text{ such that } \hspace{0.1in} \begin{cases}
        \tr{}_\out(E) = I_\text{in}-\tr{}_\out(\widetilde{\Phi}) \\
        \widetilde{\Phi} - E \succeq 0
    \end{cases},
\end{align}
where $\choir_{\wt{\calE}}= \widetilde{\Phi} - E^*$. Note, however, that generic convex optimization algorithms generally only guarantee efficient convergence to $\epsilon$-approximate solutions, whereas we need an exact solution to ensure that $\choir_{\wt{\calE}}$ encodes a physical quantum channel.

\subsection{Learning with Measurement Queries}
Finally, in our proof of \Cref{lem:learn-coeff-choi}, we proposed an efficient procedure for approximately learning Pauli coefficients of the Choi representation of quantum channel. Despite the promising sample complexity of this approach, there is a notable physical implementation challenge. Namely, physical implementation would require producing a Choi state, necessitating the preparation and manipulation of a physical Bell state over $2m$-qubits. This, however, is very difficult in practice for large $m$, especially with modern NISQ multi-qubit gate fidelities and state coherence
times.

In light of these challenges, we will now offer a second approach for learning the Choi representation coefficients which is more implementation-friendly. Rather than performing tomography on the Choi state, the physical implementation will amount to preparing $\nin$-qubit input states, applying the channel $\calE$, and performing measurements corresponding to $\nout$-qubit Pauli observables. Notably, all these operations are far more practically feasible than the repeated preparation and measurement of $n$-qubit Choi state $\chois_\calE$. The relevant query model consists of \emph{measurement queries}. Note that, in concurrent work, Wadhwa and Doosti \cite{wadhwa2024learning} formalize this query model as the ``Quantum Process Statistical Query (QPSQ)'' model and study its power for learning arbitrary quantum processes.

\paragraph{Measurement queries} For each measurement query, the learner provides an $\nin$-qubit quantum state $\rho$ along with an observable $O \in \calM_\Nout$, and is given the measurement outcome of $\calE(\rho)$ with respect to observable $O$.

\begin{lemma}
    For each $\delta, \epsilon \in (0,1)$, and $\nin$- to $\nout$-qubit channel $\chan{}$, there exists an algorithm that given $T = O\pbra{\frac{\ntot^k}{4^\ell \eta^2} \cdot \log(1/\delta)}$ measurement queries to $\calE$ outputs $\{\wt{\choir_{\calE}}(P) \}_{P\in \calF_k}$ such that $|\wt{\choir_{\calE}}(P) - \choir_{\calE}(P) | \leq \eta$ for each $P\in \calF_k$.
\end{lemma}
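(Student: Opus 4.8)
The plan is to mirror the structure of \Cref{lem:learn-coeff-choi}, but to replace Choi-state tomography by a randomized single-shot estimator built from measurement queries. The starting point is to express each low-degree Pauli coefficient as a quantity we can actually measure. Writing a Pauli $P \in \calF_k$ as $P = P_\inn \otimes P_\out$ with $P_\inn$ acting on the input register and $P_\out$ on the output register, the Choi relation (\Cref{fact:choirep-to-output}) gives, for any $\rho$, that $\tr(P_\out\,\calE(\rho)) = \tr\pbra{(\rho^\top \otimes P_\out)\,\choir_\calE}$. Taking $\rho = P_\inn^\top$ and using $(P_\inn^\top)^\top = P_\inn$ yields
\[
\wh{\choir_\calE}(P) = \frac{1}{\Ntot}\tr\pbra{(P_\inn\otimes P_\out)\,\choir_\calE} = \frac{1}{\Ntot}\tr\pbra{P_\out\,\calE(P_\inn^\top)}.
\]
Since $P_\inn^\top$ is a (signed) Pauli rather than a valid input state, the remaining task is to realize the right-hand side using only physically preparable states together with the observable $P_\out$.

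First I would dispose of the degenerate cases. If $P_\out = I$, then trace preservation gives $\tr_\out(\choir_\calE) = I_\inn$, so $\wh{\choir_\calE}(P_\inn\otimes I) = \frac{1}{\Ntot}\tr(P_\inn)$, which equals $\frac{1}{\Nout}$ if $P_\inn = I$ and $0$ otherwise; these coefficients are known exactly with no queries. For $P_\out \neq I$, I would decompose $P_\inn^\top$ over product states supported on $S := \supp(P_\inn)$: on each $i \in S$ write the single-qubit Pauli $\sigma_i^\top = \proj{v^i_0} - \proj{v^i_1}$ with $v^i_c$ its eigenvectors, and on each $i \notin S$ write $I_i = \proj{w^i_0} + \proj{w^i_1}$ for any fixed basis, so that
\[
P_\inn^\top = 2^{\nin}\,\Ex_{c,d}\sbra{(-1)^{|c|}\,\sigma_{c,d}}, \qquad \sigma_{c,d} := \bigotimes_{i\in S}\proj{v^i_{c_i}}\otimes\bigotimes_{i\notin S}\proj{w^i_{d_i}},
\]
where $c,d$ are uniform and $|c| = \sum_{i \in S} c_i$. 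Each $\sigma_{c,d}$ is a product pure state, hence trivially preparable. Substituting and using $\Ntot = \Nin\Nout$ gives the clean identity $\Nout\,\wh{\choir_\calE}(P) = \Ex_{c,d}\sbra{(-1)^{|c|}\,\tr\pbra{P_\out\,\calE(\sigma_{c,d})}}$.

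This identity suggests the estimator: sample $(c,d)$, issue the measurement query $(\sigma_{c,d}, P_\out)$, record the $\pm 1$ outcome $o$, and output $(-1)^{|c|}o$. This single-shot quantity lies in $[-1,1]$ and has expectation exactly $\Nout\,\wh{\choir_\calE}(P)$; crucially, its range is bounded independently of $|S|$, which is what lets us avoid the $2^{k}$ blow-up that a term-by-term estimation of the $2^{|S|}$ summands would incur. Averaging $R$ i.i.d. copies and invoking a Chernoff/Hoeffding bound, $R = O\pbra{\eta_0^{-2}\log(1/\delta')}$ queries estimate $\Nout\,\wh{\choir_\calE}(P)$ to additive error $\eta_0$; taking $\eta_0 = \Nout\eta = 2^\ell\eta$ and rescaling by $\Nout$ gives an $\eta$-accurate estimate of $\wh{\choir_\calE}(P)$ from $O\pbra{\tfrac{1}{4^\ell\eta^2}\log(1/\delta')}$ queries. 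Running this for all $P \in \calF_k$ with $\delta' = \delta/|\calF_k|$ and a union bound, and recalling $|\calF_k| = O(\ntot^k)$, yields the claimed total $T = O\pbra{\tfrac{\ntot^k}{4^\ell\eta^2}\log(1/\delta)}$. The only genuinely delicate point is the normalization bookkeeping: tracking the $2^{\nin}$ from the state decomposition against the $\Ntot$ in the definition of $\wh{\choir_\calE}$ so that the estimator's variance carries exactly the $4^{-\ell}$ factor, and confirming that the randomized (rather than exhaustive) expansion of $P_\inn^\top$ preserves the bounded-range property. Everything else is a routine concentration argument.
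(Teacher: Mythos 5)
Your proof is correct, but it takes a genuinely different route from the paper's. The paper works from the identity $\tr\pbra{O\,\calE(\rho)} = \sum_{P} \wh{\choir}(P)\tr(\rho P_\inn)\tr(O P_\out)$ and uses two \emph{fixed} inputs per Pauli: the maximally mixed state $\tfrac{1}{\Nin}I_\inn$ (whose expectation against $P_\out$ yields $\Nout\wh{\choir}(I\otimes P_\out)$) and the mixed state $\tfrac{1}{\Nin}(I_\inn + P_\inn)$ (whose expectation yields $\Nout\bigl(\wh{\choir}(I\otimes P_\out) + \wh{\choir}(P)\bigr)$); it then recovers $\wh{\choir}(P)$ in two stages by subtraction, which requires a small amount of error-propagation bookkeeping between the two estimates. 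You instead build a \emph{one-stage unbiased estimator}: expand $P_\inn^\top$ in its product eigenbasis as $2^{\nin}\Ex_{c,d}[(-1)^{|c|}\sigma_{c,d}]$, sample a product pure state, query it against $P_\out$, and attach the sign --- the signed $\pm 1$ outcome has expectation exactly $\Nout\wh{\choir_\calE}(P)$, and its range is bounded independently of $|\supp(P_\inn)|$, so a single Hoeffding bound per coefficient suffices. Both routes give the same per-coefficient cost $O\pbra{\tfrac{1}{4^\ell\eta^2}\log(1/\delta')}$ and the same total after a union bound (note that, exactly as in the paper's own proof, the union bound really produces a $\log(\ntot^k/\delta)$ factor rather than the $\log(1/\delta)$ in the lemma statement, so you are no worse off on this point). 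What your version buys is the elimination of the two-stage subtraction and its error propagation, plus an honest account of physical preparation: the paper's input $\tfrac{1}{\Nin}(I_\inn + R_\inn)$ is a mixed state whose laboratory preparation would itself be realized by sampling product eigenstates, which is precisely what your randomized decomposition makes explicit. What the paper's version buys is simplicity of the query set --- only two deterministic input states per Pauli, with no sampling over inputs --- and a marginally more elementary analysis (two Chernoff bounds and a subtraction, no importance-sampling argument). Your handling of the degenerate case $P_\out = I$ via trace preservation (those coefficients are $\tfrac{1}{\Nout}\one\{P_\inn = I\}$, known exactly with zero queries) is also correct and slightly cleaner than leaving it implicit.
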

\begin{proof}
We denote $\choir = \choir_{\calE}$. Recall from \Cref{fact:choirep-to-output} that for any input state $\rho \in \calM_{\Nin}$, $\calE(\rho) = \tr_{\inn}(\choir(\rho^\top \otimes I^{\otimes \nin}))$ 
where we label the first $\nin$ qubits of $\choir$ as ``$\inn$'' and the remaining $\nout$ qubits as ``$\out$''. For each Pauli $P\in \paulis[\ntot]$, we let $P_{\inn} \in \paulis[\nin]$ and $P_{\out} \in \paulis[\nout]$ be such that $P = P_{\inn} \otimes P_{\out}$. The key insight of this proof is that measurement of a quantum channel $\calE$, with respect to an input $\rho$ and observable $O$, is related to a linear combination of the Choi representation coefficients $\widehat{\Phi}(P)$ as follows.
\begin{align}
    \tr\left(O \calE(\rho)\right) &= \tr\pbra{O \tr_\inn\pbra{\Phi(\rho_\inn^\top \otimes I_\out )}} \nonumber\\
    &= \sum_{P\in\pauli_\ntot} \wh{\Phi}(P) \tr \pbra{O \tr_\inn\pbra{(P_\inn \otimes P_\out)(\rho_\inn^\top \otimes I_\out)}} \nonumber\\ 
    &=  \sum_{P\in\pauli_\ntot} \widehat{\Phi}(P) \tr (\rho P_\inn) \tr (O P_\out).\nonumber
\end{align}
Given repeated query access to $\calE$, we will see how to learn the Choi representation coefficient $\widehat{\Phi}(P)$ corresponding to any Pauli $P\in\pauli_\ntot$. To this end, there are two related procedures: one for the set of all Paulis $Q\in\{I^{\otimes \nin} \otimes \paulis[\nout]\}$ and another for the remaining Paulis $R\in\pauli_\ntot\backslash\{I^{\otimes \nin} \otimes \paulis[\nout]\}$.

For any $Q\in\{I^{\otimes \nin} \otimes \paulis[\nout]\}$, $\widehat{\Phi}(Q)$ can be estimated via repeated queries of $\calE$ with respect to input $\rho'=\frac{1}{\Nin}I_\inn$ and observable $O'=Q_\out$.\footnote{For the $\QACZ$ channels of interest in this work, the channel output is a single qubit. In this case, $|Q|\leq 1$, meaning $Q$ is guaranteed to lie in the desired set of low-degree Paulis, i.e. $Q\in\mathcal{F}_k$.} Namely, the measurement expectation is proportional to~$\widehat{\Phi}(Q)$.
\begin{align}
    \tr\left(O' \calE(\rho')\right) &= \sum_{P\in\pauli_\ntot} \widehat{\Phi}(P) \tr \left( \frac{1}{\Nin}I_\inn P_\inn\right) \tr (Q_\out P_\out) \nonumber\\
    &=\sum_{P\in\pauli_\ntot} \widehat{\Phi}(P) \cdot \one\cbra{P_\inn=I_\inn} \cdot \Nout\cdot\one\cbra{P_\out=Q_\out} \nonumber\\
    &= \Nout \cdot \widehat{\Phi}(I_\inn\otimes Q_\out) \nonumber\\
    &= \Nout \cdot \widehat{\Phi}(Q).\nonumber
\end{align}
Therefore, to learn $\widehat{\Phi}(Q)$ to error $\eta$, we simply need to learn $\tr\left(O' \calE(\rho')\right)$ to error $\eta_0 = 2^\ell \eta$, 

On the other hand, for each $R\in\pauli_\ntot\backslash\{I^{\otimes \nin} \otimes \paulis[\nout]\}$, to estimate $\widehat{\Phi}(R)$ we consider the expectation of $\calE(\rho^*)$ for input $\rho^*=\frac{1}{\Nin}(I_\inn+R_\inn)$ with respect to observable $O^*=R_\out$. We can write it as follows.
\begin{align*} 
    \tr\left(O^* \calE(\rho^*)\right) &= \sum_{P\in\pauli_\ntot} \widehat{\Phi}(P) \tr \left(\left(\frac{I_\inn+R_\inn}{\Nin}\right) P_\inn\right) \tr (R_\out P_\out) \\
    &= \frac{1}{\Nin} \sum_{P\in\pauli_\ntot} \widehat{\Phi}(P) \left(\tr \left( P_\inn\right)+\tr \left( R_\inn P_\inn\right)\right) \tr (R_\out P_\out) \\
    &= \frac{1}{\Nin} \sum_{P\in\pauli_\ntot} \widehat{\Phi}(P) \left(\Nin\cdot \one\cbra{P_\inn=I_\inn}+\Nin\cdot\one\cbra{P_\inn=R_\inn}\right) \cdot \Nout\cdot\one\cbra{P_\out=R_\out} \\
    &= \Nout \left(\widehat{\Phi}(I_\inn\otimes R_\out)+ \widehat{\Phi}(R_\inn\otimes R_\out)\right) \\
    &= \Nout \left(\widehat{\Phi}(I_\inn\otimes R_\out)+ \widehat{\Phi}(R)\right).
\end{align*}
Note that $\widehat{\Phi}(I_\inn\otimes R_\out)\in \{I^{\otimes \nin} \otimes \paulis[\nout]\}$, so we can estimate $\wh{\choir}(R)$ within error $\eta$ if we have a $2^{\ell-1} \cdot \eta$ estimate of $\tr\pbra{O^* \calE(\rho)}$ and a $\eta/2$ estimate of $\wh{\choir}(R)$. Therefore, to get an additive $\eta$ estimate for $\wh{\choir}(P)$ for each $P\in \calF_k$, it is sufficient for us to estimate $\tr(O' \calE(\rho')$ and $\tr(O^* \calE(\rho^*))$ within error $\eta \cdot 2^\ell /2$ for each $P \in \calF_k$ where $O', O^*, \rho', \rho^*$ depend on $P$ as described above.

Via a Chernoff bound, for each input $\rho$ and observable $P$, $\tr\left(P \calE(\rho)\right)$ can be learned to within additive error $\eta_0$, with probability $1-\delta_0$ via $O(1/\eta_0^2 \cdot \log(1/\delta_0))$ queries. We repeat this process for each $P\in \calF_k$ with $\eta_0 = 2^{\ell-1} \cdot \eta$, and $\delta_0 = \delta/|\calF_k|$. By a union bound, we learn $\eta$-approximations of all $|\mathcal{F}_k|=O(\ntot^k)$ low-degree coefficients with probability at least $1-\delta$ using
\begin{align}
    O\left(\frac{\ntot^k}{4^{\nout}\: \eta^2} \log\left(\frac{\ntot^k}{\delta}\right)\right)
\end{align}
queries to the quantum channel $\calE$. 
\end{proof}

Note that the sample complexity of this procedure is slightly worse than given the Choi state directly, scaling as $O(\ntot^k)$ rather than $O(3^k)$. 
Following the analysis in the Proof of \Cref{thm:learnlowdeg}, this method provides us with a learning algorithm with $O\pbra{\ntot^{2k} \frac{1}{\epsilon} \log(1/\delta)}$ measurement queries.

\bibliographystyle{alpha}
\bibliography{references}

\appendix
\section{Generalizing to linear combinations of channels} \label{appendix:lin-comb-channels}
In this appendix, we show that our result on low-degree Pauli concentration (\Cref{thm:pauli-concentration}) can be generalized from $\QACZ$ channels to linear combinations of such channels. 
\begin{lemma}
    Suppose $\calE$ is a linear combination of channels $\calE_1, \calE_2, \dots, \calE_t$ .
    \begin{align*}
        \calE(\rho) = \sum_{i=1}^t \alpha_i \calE_i(\rho)
    \end{align*}
    such that $\weight{>k}(\choir_{\calE_i}) \leq \epsilon$ for each $i\in [t]$. 
    Then for each $k$,
    \begin{align*}
        \weight{>k}[\choir_\calE] \leq \pbra{\sum_{i=1}^t \abs{\alpha_i}}^2 \epsilon.
    \end{align*}
\end{lemma}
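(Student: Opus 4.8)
The plan is to exploit the linearity of the Choi representation together with the subadditivity of high-degree Pauli weight that was already established in the proof of \Cref{lem:removing_CZgates}. First I would observe that the map $\calE \mapsto \choir_\calE = (I^{\otimes n} \otimes \calE)(\proj{\bellstate_n})$ (see \Cref{def:choi-rep}) is linear in the channel $\calE$, since $(I^{\otimes n}\otimes(\cdot))$ and the action on a fixed state are both linear. Hence $\choir_\calE = \sum_{i=1}^t \alpha_i \choir_{\calE_i}$, and taking Pauli coefficients gives $\wh{\choir_\calE}(P) = \sum_{i=1}^t \alpha_i \wh{\choir_{\calE_i}}(P)$ for every $P \in \paulis[n+1]$.

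Next I would recall that $A \mapsto \sqrt{\weight{>k}[A]}$ is a seminorm: by Parseval (\Cref{eq:qu_parseval_plancharel}) it is precisely the $\ell_2$-norm of the Pauli coefficients of $A$ supported above level $k$. In particular it is homogeneous, $\weight{>k}[\alpha A] = |\alpha|^2\,\weight{>k}[A]$, and satisfies the triangle inequality $\weight{>k}[A+B] \leq \pbra{\sqrt{\weight{>k}[A]} + \sqrt{\weight{>k}[B]}}^2$, exactly as invoked inside \Cref{lem:removing_CZgates}. Iterating this over the $t$ summands yields
\[
\sqrt{\weight{>k}[\choir_\calE]} \;\leq\; \sum_{i=1}^t |\alpha_i|\,\sqrt{\weight{>k}[\choir_{\calE_i}]}.
\]
Finally, plugging in the hypothesis $\weight{>k}[\choir_{\calE_i}] \leq \epsilon$ and squaring gives $\weight{>k}[\choir_\calE] \leq \pbra{\sum_i |\alpha_i|}^2 \epsilon$, which is the claimed bound once we use $|\alpha_i| = \alpha_i$.

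The main point to be careful about is purely a matter of signs: there is no real analytic obstacle here. The stated bound features $\sum_i \alpha_i$, which agrees with $\sum_i |\alpha_i|$ only when the coefficients are nonnegative, as in the convex-combination setting that the accompanying corollary actually applies; for general signed $\alpha_i$ the triangle-inequality argument would instead produce $\pbra{\sum_i |\alpha_i|}^2\epsilon$. If one wants the bound $\pbra{\sum_i \alpha_i}^2 \epsilon$ stated literally for nonnegative $\alpha_i$, an equally short alternative is to apply Cauchy--Schwarz coefficient-wise, using $\abs{\sum_i \alpha_i \wh{\choir_{\calE_i}}(P)}^2 \leq \pbra{\sum_i \alpha_i}\pbra{\sum_i \alpha_i \abs{\wh{\choir_{\calE_i}}(P)}^2}$, and then summing over all $P$ with $|P| > k$; this reproduces $\pbra{\sum_i \alpha_i}\pbra{\sum_i \alpha_i\,\weight{>k}[\choir_{\calE_i}]} \leq \pbra{\sum_i \alpha_i}^2\epsilon$ directly.
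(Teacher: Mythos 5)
Your proof is correct and follows essentially the same route as the paper's: both use linearity of the Choi representation to get $\wh{\choir_\calE}(P) = \sum_i \alpha_i \wh{\choir_{\calE_i}}(P)$, view $\sqrt{\weight{>k}[\cdot]}$ as the $\ell_2$-norm of the high-degree Pauli coefficient vector, and conclude by the triangle inequality. Your sign caveat is a fair observation --- the paper's own proof also silently writes $\alpha_i$ where the triangle inequality yields $|\alpha_i|$, which is harmless only because the intended application (the corollary) is to convex combinations with $\alpha_i \geq 0$.
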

\begin{proof}
     Let $\nin$ and $\nout$ be the number of input and output qubits to $\calE$ respectively, and let $\Nin = 2^\nin, \Nout = 2^\nout$ and $\ntot = \nin + \nout$, $\Ntot = 2^{\ntot}$. For each $i\in [t]$, let $\choir_i = \choir_{\calE_i}$.
    Note that
    \begin{align*}
        \choir_\calE = \pbra{I^{\otimes \nin} \otimes \calE} \pbra{\proj{\bellstate_{\nin}}} = \sum_{i=1}^t (I^{\otimes \nin} \otimes \calE_i) \pbra{\proj{\bellstate_\nin}} = \sum_{i=1}^t \alpha_i \choir_i.
    \end{align*}
    Consider a particular $k$, and let $H_i =\sum_{\substack{P\in \paulis[\ntot] :\\ |P| >k}} \wh{\choir_i}(P) P$. Furthermore, let $\wh{H_i}$ be the vector of Pauli coefficients of $H_i$. So $\weight{>k}(\choir_i) = \norm{\wh{H_i}}_2^2$ for each $i \in [t]$. Now
    \begin{align*}
        \weight{>k}[\choir_{\calE}] &= \norm{\sum_{i = 1}^t \alpha_i \wh{H_i}}_2^2 .
    \end{align*}
    Applying the triangle inequality,
    \begin{align*}
        \weight{>k}[\choir_{\calE}]  &\leq  \pbra{\sum_{i = 1}^t \abs{\alpha_i}\cdot \norm{\wh{H_i}}_2}^2\\
        &= \pbra{\sum_{i=1}^t \abs{\alpha_i} \cdot \sqrt{ \weight{>k}[\choir_i}]}^2\\
        &\leq \pbra{\sum_{i=1}^t \abs{\alpha_i}}^2 \epsilon
    \end{align*}
\end{proof}

\section{Convex CPTP Projection} \label{appendix:convex_proj}
Leveraging the fact that $\widetilde{\Phi}$ is $\epsilon$-close to $\Phi_\calE$, we will now prove that the Choi representation $\Phi_{\widetilde{\calE}}$, resulting from the convex Frobenius norm minimizaton described in \Cref{sec:rounding}, is also $\epsilon$-close to $\Phi_\calE$. Note that, in the case that $\widetilde{\Phi} \in \choispace$, the projection simply maps $\widetilde{\Phi}$ to itself, which implies that $\Phi_{\widetilde{\calE}}=\widetilde{\Phi}$ is $\epsilon$-close to $\Phi_\calE$. Therefore, we only need to consider the case in which $\widetilde{\Phi}_\calE \notin\choispace$.

We begin by decomposing the error between  $\widetilde{\Phi}$ and $\Phi_\calE$ in terms of $\Phi_{\widetilde{\calE}}$, as
\begin{align*} 
    \left\| \widetilde{\Phi}- \Phi_\calE  \right\|_F^2 &= \left\| (\widetilde{\Phi} - \Phi_{\widetilde{\calE}})+(\Phi_{\widetilde{\calE}} -\Phi_\calE)  \right\|_F^2 \\
    &= \left\| \widetilde{\Phi} - \Phi_{\widetilde{\calE}} \right\|_F^2 +2 \left\langle \widetilde{\Phi} - \Phi_{\widetilde{\calE}}, \Phi_{\widetilde{\calE}} -\Phi_\calE\right\rangle +\left\| \Phi_{\widetilde{\calE}} -\Phi_\calE  \right\|_F^2.
\end{align*}
Since the Frobenius norm squared is non-negative, if we can simply prove that 
\begin{align} \label{eqn:inner_prod}
    \left\langle \widetilde{\Phi} - \Phi_{\widetilde{\calE}}, \Phi_{\widetilde{\calE}} -\Phi_\calE\right\rangle \geq 0,
\end{align}
then it follows that
\begin{align*} 
     \left\| \Phi_{\widetilde{\calE}} -\Phi_\calE  \right\|_F^2 \leq \left\| \widetilde{\Phi}- \Phi_\calE  \right\|_F^2 < \epsilon.
\end{align*}
Therefore, to prove that $\Phi_{\widetilde{\calE}}$ is $\epsilon$-close to $\Phi_\calE$, we will prove \Cref{eqn:inner_prod}. 

To do so, note that $\choispace$ is a closed convex set in the Hilbert space $\mathbb{C}^{\Ntot\times \Ntot}$, equipped with Frobenius norm inner product $\langle A, B \rangle = \tr [A^\dagger B]$. Furthermore, in \Cref{eqn:min_choi}, we defined $\Phi_{\widetilde{\calE}}$ to be the minimizer of the Frobenius norm distance with respect to $\widetilde{\Phi}$. Therefore $\Phi_{\widetilde{\calE}}$ must satisfy the First-Order Optimality Condition: 

\begin{fact}[First-Order Optimality Condition]\label{thm:firstopt}%
   For an optimization problem 
   \begin{align*}
       \underset{X\in \mathcal{X}}{\arg\min} f(X)
   \end{align*}
   over closed convex set $\mathcal{X}$ and with convex differentiable $f$, feasible point $X$ is optimal if and only if
   \begin{align*}
       \langle\nabla f(X), Y-X\rangle \geq 0, \:\: \forall Y\in\mathcal{X}.
   \end{align*}
\end{fact}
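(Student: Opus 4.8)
The plan is to prove the two implications of the biconditional separately, drawing on two standard facts about a convex differentiable function $f$ on a convex set $\mathcal{X}$: that the restriction of $f$ to a segment inside $\mathcal{X}$ is differentiable with one-sided derivative given by a directional derivative of $f$, and the \emph{gradient inequality} $f(Y) \geq f(X) + \la \nabla f(X), Y - X \ra$, which is one of the equivalent characterizations of convexity for a differentiable $f$.

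For necessity, I would assume $X \in \mathcal{X}$ is optimal and fix an arbitrary $Y \in \mathcal{X}$. By convexity of $\mathcal{X}$, the point $X_t := (1-t)X + tY$ lies in $\mathcal{X}$ for all $t \in [0,1]$, so optimality gives $f(X_t) \geq f(X)$ and hence $\tfrac{1}{t}\pbra{f(X_t) - f(X)} \geq 0$ for every $t \in (0,1]$. Letting $t \to 0^+$ and using the chain rule, the left-hand side tends to the directional derivative $\la \nabla f(X), Y - X \ra$, which is therefore nonnegative; since $Y$ was arbitrary this is exactly the claimed condition. For sufficiency, I would assume $\la \nabla f(X), Y - X \ra \geq 0$ for every $Y \in \mathcal{X}$ and apply the gradient inequality: for any $Y \in \mathcal{X}$, $f(Y) \geq f(X) + \la \nabla f(X), Y - X \ra \geq f(X)$, so $X$ minimizes $f$ over $\mathcal{X}$.

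The argument is entirely standard, so I do not expect a genuine obstacle; the only points deserving care are the passage to the limit in the necessity direction (justifying that the one-sided derivative of $t \mapsto f(X_t)$ at $t = 0$ equals $\la \nabla f(X), Y - X \ra$) and the fact that we work in a \emph{real} inner-product space. For the application in \Cref{sec:rounding}, $\mathcal{X} = \choispace$ is closed and convex and all matrices in play are Hermitian, so we operate in the real Hilbert space of Hermitian matrices with inner product $\la A, B \ra = \tr[AB]$; there the objective $f(\Phi) = \fnorm{\wt{\Phi} - \Phi}^2$ is a smooth convex quadratic with $\nabla f(\Phi) = 2(\Phi - \wt{\Phi})$, and evaluating the optimality condition at the minimizer $\Phi_{\wt{\calE}}$ with test point $Y = \Phi_\calE \in \choispace$ yields $\la \wt{\Phi} - \Phi_{\wt{\calE}}, \Phi_{\wt{\calE}} - \Phi_\calE \ra \geq 0$, which is precisely the inequality needed to conclude that the projection does not increase the distance to $\Phi_\calE$.
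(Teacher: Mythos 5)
Your proof is correct, and it is the standard argument. Note, though, that the paper itself never proves this statement: it is stated as a \emph{Fact} and invoked as a black box in \Cref{appendix:convex_proj}, so your write-up supplies an argument the paper omits rather than parallels one it contains. Both directions are sound: for necessity you need only differentiability of $f$ and convexity of $\mathcal{X}$ (so that $X_t = X + t(Y-X) \in \mathcal{X}$, the difference quotient $\tfrac{1}{t}\pbra{f(X_t)-f(X)}$ is nonnegative, and it converges to the directional derivative $\langle \nabla f(X), Y-X\rangle$), while convexity of $f$ enters only through the gradient inequality in the sufficiency direction; closedness of $\mathcal{X}$ is never used, as it matters only for existence of a minimizer, not for this characterization. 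Your care about working in a \emph{real} inner-product space is also well placed: the paper's Hilbert--Schmidt inner product $\langle A,B\rangle = \Tr(A^\dagger B)$ is complex-valued in general, but on the Hermitian matrices relevant to $\choispace$ it restricts to the real inner product $\Tr(AB)$, which is what makes the inequality $\langle \nabla f(X), Y-X\rangle \geq 0$ meaningful. Finally, your reconstruction of the application silently corrects a small typo in the paper: \Cref{appendix:convex_proj} defines $f(\Phi)=\|\Phi-\Phi_\calE\|_F^2$ yet uses $\nabla f(\Phi) = 2(\Phi - \widetilde{\Phi})$; the intended objective is $f(\Phi)=\|\Phi-\widetilde{\Phi}\|_F^2$, exactly as in your proposal.
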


\noindent From \Cref{eqn:min_choi}, $\Phi_{\widetilde{\calE}}$ is the minimizer of the function $f(\Phi)=\|\Phi -\Phi_\calE \|_F^2$, for which $\nabla f(\Phi)=2 (\Phi-\widetilde{\Phi})$. The First-Order Optimality Condition, therefore, implies that
\begin{align}
    \forall \: \Phi_\calE \in \choispace \: : \: \langle\nabla f(\Phi_{\widetilde{\calE}}), \Phi_\calE -\Phi_{\widetilde{\calE}}\rangle = 2 \langle \Phi_{\widetilde{\calE}}-\widetilde{\Phi}_\calE, \Phi_\calE -\Phi_{\widetilde{\calE}}\rangle \geq 0,
\end{align}
thereby proving \Cref{eqn:inner_prod}.

\end{document}